\documentclass[11pt]{article}
\makeatletter

\def\blfootnote{\gdef\@thefnmark{}\@footnotetext}

\makeatother
\usepackage[letterpaper,margin=1in]{geometry}

\usepackage[utf8]{inputenc} %
\usepackage[T1]{fontenc}    %
\usepackage{url}            %
\usepackage{booktabs}       %
\usepackage{amsfonts}       %
\usepackage{nicefrac}       %
\usepackage{microtype}      %
\usepackage{xcolor}         %

\usepackage[right, outer, inline, final]{showlabels} %
\usepackage{nicematrix}

\usepackage{tikz}
\usepackage{float}
\usetikzlibrary{shapes, 
                backgrounds, 
                patterns,
                calc,
                arrows,
                arrows.meta}
\usepackage{chemfig}

\catcode`\_=11
\definearrow5{-x>}{%
    \CF_arrowshiftnodes{#3}%
    \CF_expafter{\draw[}\CF_arrowcurrentstyle](\CF_arrowstartnode)--(\CF_arrowendnode)%
        coordinate[midway,shift=(\CF_ifempty{#4}{225}{#4}+\CF_arrowcurrentangle:\CF_ifempty{#5}{5pt}{#5})](line@start)%
        coordinate[midway,shift=(\CF_ifempty{#4}{45}{#4}+\CF_arrowcurrentangle:\CF_ifempty{#5}{5pt}{#5})](line@end)%
        coordinate[midway,shift=(\CF_ifempty{#4}{135}{#4}+\CF_arrowcurrentangle:\CF_ifempty{#5}{5pt}{#5})](line@start@i)%
        coordinate[midway,shift=(\CF_ifempty{#4}{315}{#4}+\CF_arrowcurrentangle:\CF_ifempty{#5}{5pt}{#5})](line@end@i);
    \draw(line@start)--(line@end);%
    \draw(line@start@i)--(line@end@i);%
    \CF_arrowdisplaylabel{#1}{0.5}+\CF_arrowstartnode{#2}{0.5}-\CF_arrowendnode
}
\catcode`\_=8

\usepackage{pict2e,picture,graphicx}
\usepackage[overload]{empheq}
\usepackage{microtype}
\usepackage{graphicx}
\usepackage{subcaption}
\usepackage{booktabs}
\usepackage{amsmath}
\usepackage{cases}
\usepackage{mathtools}
\usepackage{amsthm}
\usepackage{thm-restate}
\usepackage{enumitem}
\usepackage{xspace}
\usepackage{algorithm}
\usepackage[noend]{algpseudocode}


\usepackage{xcolor}
\usepackage{nicefrac, xfrac}
\usepackage{xparse}
\usepackage{hyperref}
\usepackage[capitalize, noabbrev]{cleveref}
\usepackage{wrapfig}
\usepackage{bbm}
\usepackage{yfonts}
\usepackage{nicefrac} 
\usepackage{multirow}
\usepackage{bm}
\usepackage{bbm}
\usepackage{subcaption}
\usepackage{MnSymbol}

\usepackage[T1]{fontenc}

\hypersetup{colorlinks,
    colorlinks = true,
    linkcolor=DarkBlue,
    citecolor=darkGreen,
    urlcolor=darkGreen,
    filecolor=DarkBlue
    linktocpage = true,
      }

\usepackage{comment}

\usepackage{colortbl}
\usepackage{cellspace}

\colorlet{darkgreen}{green!70!black}

\newcommand{\CFont}[1]{{\textup{\textsf{#1}}}\xspace}
\newcommand{\PPAD}{\CFont{PPAD}}

\newcommand{\NP}{\CFont{NP}}

\newcommand{\PTAS}{\CFont{PTAS}}
\newcommand{\QPTAS}{\CFont{QPTAS}}
\newcommand{\rng}{\mathsf{rg}}

\newcommand{\cT}{\mathcal{T}}
\newcommand{\cP}{\mathcal{P}}
\newcommand{\cF}{\mathcal{F}}

\newcommand{\cV}{\mathcal{V}}
\newcommand{\cH}{\mathcal{H}}
\newcommand{\cX}{\mathcal{X}}

\newcommand{\cB}{\mathcal{B}}

\newcommand{\poly}{\textnormal{poly}}

\def\ie{\textit{i.e.}\@\xspace}

\newcommand{\Reals}{\mathbb{R}}

\newcommand{\Naturals}{\mathbb{N}}

\usepackage{color}
\definecolor{mygreen}{rgb}{0.0, 0.5, 0.0}
\definecolor{myorange}{rgb}{0.55, 0.62, 1}

\theoremstyle{plain}

\newcommand{\child}{\mathsf{c}}

\theoremstyle{plain}
\newtheorem{theorem}{Theorem}[section]
\newtheorem{proposition}[theorem]{Proposition}

\newtheorem{lemma}[theorem]{Lemma}

\newtheorem{corollary}[theorem]{Corollary}
\theoremstyle{definition}
\newtheorem{definition}[theorem]{Definition}

\theoremstyle{remark}
\newtheorem{remark}[theorem]{Remark}

\definecolor{niceRed}{RGB}{190,38,38}
\definecolor{Red2}{RGB}{219, 50, 54}
\definecolor{mgreen}{RGB}{160, 200, 140}
\definecolor{blueGrotto}{RGB}{5,157,192}
\definecolor{limeGreen}{HTML}{81B622}
\definecolor{myellow}{rgb}{0.88,0.61,0.14}
\definecolor{darkGreen}{HTML}{2E8B57}
\definecolor{navyBlueP}{HTML}{03468F}
\definecolor{Sepia}{HTML}{7F462C}
\definecolor{red2}{HTML}{1F462C}
\definecolor{orange2}{HTML}{FF8000}
\definecolor{mgray}{HTML}{ABB3B8}
\definecolor{lgray}{HTML}{E5E8E9}
\definecolor{myPurple}{RGB}{175,0,124}
\definecolor{mypurple2}{rgb}{0.8,0.62,1}
\definecolor{royalBlue}{HTML}{057DCD}
\definecolor{mpink}{HTML}{FC6C85}
\definecolor{lblue}{RGB}{74,144,226}
\definecolor{peagreen}{RGB}{152,193,39}
\definecolor{typ_navy}{HTML}{001f3f}
\definecolor{typ_blue}{HTML}{0074d9}
\definecolor{typ_aqua}{HTML}{7fdbff}
\definecolor{typ_teal}{HTML}{39cccc}
\definecolor{typ_eastern}{HTML}{239dad}
\definecolor{typ_purple}{HTML}{b10dc9}
\definecolor{typ_fuchsia}{HTML}{f012be}
\definecolor{typ_maroon}{HTML}{85144b}
\definecolor{typ_red}{HTML}{ff4136}
\definecolor{typ_orange}{HTML}{ff851b}
\definecolor{typ_yellow}{HTML}{ffdc00}
\definecolor{typ_olive}{HTML}{3d9970}
\definecolor{typ_green}{HTML}{2ecc40}
\definecolor{typ_lime}{HTML}{01ff70}
\definecolor{newgreen}{HTML}{83c702}
\definecolor{newpurp}{RGB}{97,96,121}
\definecolor[named]{Purple}{cmyk}{0.55,1,0,0.15}
\definecolor[named]{DarkBlue}{cmyk}{1,0.58,0,0.21}

\usepackage[style=alphabetic,natbib=true,maxcitenames=2, maxbibnames=10,backend=bibtex]{biblatex}
\allowdisplaybreaks

\title{Quasi-Polynomial Simultaneous Approximation of Multiple Polynomials on Subsets of the Simplex}
\title{Quasi-Polynomial Optimization of Fixed-degree Polynomials Using Nets}
\title{A Unifying Framework for Quasi-Polynomial Optimization of Fixed-degree Polynomials}

\author{
    Martino Bernasconi$^\dagger$ \quad
    Matteo Castiglioni$^\ddagger$ \quad
    Andrea Celli$^\dagger$ \quad
    Gabriele Farina$^\ast$  \vspace{6mm}\\
    $^\dagger$\ Bocconi university\\
    $^\ddagger$\ Politecnico di Milano\\
    $^\ast$\ Massachusetts Institute of Technology\vspace{2mm}\\
    {\textcolor{black}{\small\texttt{\{martino.bernasconi,andrea.celli2\}@unibocconi.it}, \quad \texttt{matteo.castiglioni@polimi.it,}}}\\
    {\textcolor{black}{\small\texttt{gfarina@mit.edu}}}
}

\author{
\begin{tabular}{cc}
& \\
{Martino Bernasconi}\thanks{Martino Bernasconi and Andrea Celli were supported by an ERC grant (Project 101165466 — PLA-STEER).} & {Matteo Castiglioni}\thanks{Matteo Castiglioni was supported by the EU Horizon project ELIAS (European Lighthouse of AI for Sustainability, No. 101120237).}\\
\small{Bocconi University} & \small{Politecnico di Milano}\\
{\textcolor{black}{\small\texttt{martino.bernasconi@unibocconi.it}}} & %
{\textcolor{black}{\small\texttt{matteo.castiglioni@polimi.it}}}\\
& \\
{Andrea Celli}\footnotemark[1] & {Gabriele Farina}\thanks{Gabriele Farina was supported in part by the National Science Foundation award CCF-2443068, the Office of Naval Research grant N000142512296, and an AI2050 Early Career Fellowship.}\\
\small{Bocconi University} & \small{Massachusetts Institute of Technology}\\
{\textcolor{black}{\small\texttt{andrea.celli2@unibocconi.it}}} & %
{\textcolor{black}{\small\texttt{gfarina@mit.edu}}}
\end{tabular}
}
\date{}

\begin{document}

\maketitle
\begingroup
\renewcommand{\thefootnote}{}
\footnotetext{The authors thank Argyrios Deligkas, Themistoklis Melissourgos, and Pablo Parrilo for helpful comments and discussions.}
\addtocounter{footnote}{-1}
\endgroup
\begin{abstract}

    We study the simultaneous approximation of constant-degree polynomials over convex sets. For any family of $m$ degree-$d$ polynomials and any convex set $\mathcal{H} \subseteq \mathbb{R}_{\ge0}^n$, we construct an $\epsilon$-Cover of the joint value set $\{(f_1(x), \dots, f_m(x)) : x \in \mathcal{H}\}$ in the $\ell_\infty$-norm. This cover is of size $n^{O(\log(mn)/\epsilon^2)}$, provided the polynomials have constant range over the smallest $\ell_1$-ball inscribing $\mathcal{H}$.
    Our approach extends classical net-based sparsifications for linear functions (e.g., Lipton, Markakis, and Mehta [2003]) to arbitrary families of constant-degree polynomials over general convex sets.
    We use a two-step scheme: first, we construct a quasi-polynomial pre-cover of the family on the smallest $\ell_1$-ball containing $\cH$ by using a concentration argument and leveraging a connection between Bernstein approximation and multinomial distributions; we then compress the pre-cover to $\mathcal{H}$ by using a recursive degree reduction and feasibility programs anchored at points of the pre-cover.
    The existence of these covers immediately yields a unified framework for Quasi-Polynomial Time Approximation Schemes (QPTAS) across a wide range of a problems, including fixed-degree polynomial minimization over polyhedral sets, Constraint Satisfaction Problems (CSPs), Free Games, variational inequalities with polynomial operators (which implies guarantees for local Nash equilibria in polynomial games), and additive approximation for normalized densest $k$-subhypergraph on $O(1)$-uniform hypergraphs.
\end{abstract}

\tableofcontents

\section{Introduction}\label{sec:intro}

We provide a framework for constructing a quasi-polynomial-size net that simultaneously covers a family $\cF$ of fixed-degree polynomials on a polyhedral set (possibly with exponentially many vertices).%
\footnote{Our framework can also handle general sets that admit an efficient linear optimization oracle. For simplicity, we focus on polyhedral sets, and discuss in \Cref{rem:convex} how to extend our framework to such sets.}
Our framework handles all polynomials with constant range over the smallest $\ell_1$-ball inscribing $\cH$.

More precisely, let $\cH\subseteq \Reals_{\ge0}^n$  be a nonempty convex compact polytope, and let $\cF=\{f_1,\dots,f_m\}$ be a family of polynomials of constant degree $d$. Assume that every polynomial in $\cF$ has constant range over the smallest $\ell_1$-ball containing $\cH$. Then, for every $\varepsilon>0$, we construct a finite set $\cX\subseteq\cH$ of size $n^{O(\log(mn)/\varepsilon^2)}$, suppressing constants that depend only on the degree and the range bound, such that for every $x\in \cH$ there exists $x'\in \cX$ with $|f(x)-f(x')|\le \varepsilon$ for all $f\in \cF$.
We call such a set an $\varepsilon$-Cover of $\cF$ over $\cH$ (\Cref{def:cover}). Moreover, our proof is constructive and computes $\cX$ in quasi-polynomial time.

This result has a number of algorithmic implications for several optimization problems involving polynomials, including CSPs, Free Games \citep{aaronson2014multiple}, variational inequalities on polyhedral sets with a polynomial number of vertices (with notable implications for the computation of local Nash equilibria in nonconvex games \citep{daskalakis2021complexity, cai2024tractable}), the densest $k$-subhypergraph problem and more (see also \Cref{sec:applications}).

More broadly, our result unifies and extends several lines of research that had previously been developed in parallel. Although these works arose in different areas and used rather different techniques, their positive results (in particular, the existence of polynomial- or quasi-polynomial-time approximation schemes) can all be explained as direct consequences of our main theorem, for progressively richer instantiations. We next discuss these three lines of research in turn.

\begin{enumerate}[label={(\roman*)}, wide=0pt, align=left, labelwidth=.4cm]
    \item \label{item:intro1}
The first line of work concerns approximating the minimum of a single fixed-degree polynomial over the simplex.
Minimizing a polynomial over the probability simplex is a fundamental problem in computer science. Unfortunately, exact algorithms are already ruled out for degree-$2$ polynomials, as shown by the classical \NP-hardness result of \citet{motzkin1965maxima}. This led to a line of work asking whether one can nevertheless obtain polynomial-time approximation schemes \citep{bomze2002solving,nesterov2003random,de2006ptas}. This line culminated in a \PTAS for minimizing fixed-degree polynomials of bounded range over the simplex, with running time $n^{O(1/\varepsilon)}$. We remark that the small range condition is crucial for obtaining a \PTAS for additive approximation.
At the heart of this \PTAS, drawing a connection with positive forms on the simplex \citep{powers2001new}, is the fact that interpolation on a regular grid uniformly well-approximates fixed-degree polynomials over the simplex. Specifically, the algorithm enumerates all rational points in the simplex with denominator $N$, of which there are $n^{O(N)}$; it is shown that one such point approximates the optimum within an additive error of $\poly(1/N)$.

\item \label{item:intro2} A second line of research concerns sparsification of probability distributions by sampling, and in particular focuses on approximating the image of \emph{linear} functions over the simplex. Here, the main applications are driven by game theory. Perhaps the most famous result in this direction is the \QPTAS of \citet*{lipton2003playing}, which provides an algorithm that runs in quasi-polynomial time and finds a constant-approximation to Nash equilibria in normal-form games, building on preliminary results by \citet{althofer1994sparse} and \citet{lipton1994simple} for zero-sum games. 

These approximation results hinge on approximating finite distributions with sparse-support distributions. In particular, they rest on the fact that for all $x\in\Delta_n$, there is an $x'$ with support $O(\log(n)/\varepsilon^2)$ which additively approximates within $\varepsilon$ all payoffs in $\ell_\infty$-norm. For these results, the linearity of the payoffs is essential.
Related papers apply similar techniques to other settings such as community detection \citep{arora2012finding}, correlated equilibria \citep{babichenko2014simple, babichenko2017empirical}, signaling \citep{cheng2015mixture}, and Stackelberg equilibria \citep{gan2023robust}.
Having established the existence of near-optimal sparse-support distributions, algorithms then proceed by enumerating all possible such distributions, of which there are a quasi-polynomial number. While at first sight this procedure might appear identical to that of \Cref{item:intro1}, it is crucial to realize that the grid coarseness in the two applications must be significantly different (with Nash equilibria using a finer grid). This necessity shines through the complexity of the problems as well: while polynomial minimization admits a \PTAS, a \PTAS for approximating Nash equilibria cannot exist assuming the exponential time hypothesis for PPAD \citep{rubinstein2017settling}. Conceptually, in the Nash equilibrium application, the necessity of the finer grid stems from the fact that the chosen distribution must \emph{simultaneously} approximate \emph{multiple} (linear) polynomials---one for each action of the players. This is in contrast to \Cref{item:intro1}, for which one cares about approximating only \emph{one} function.

\item A third line of work moves beyond the full simplex and considers optimization of bounded-range polynomials over more general polyhedral domains. This extension is crucial for capturing sampling-based techniques in CSPs and Free Games, which admit low-degree polynomial algebrization over hypercubes rather than simplices.

At first glance, replacing the full simplex by a polyhedral subset might seem like a minor change. In reality, this change makes a drastic difference from a computational perspective.
Indeed, already minimizing a quadratic polynomial on a polyhedral subset of the $n$-simplex defined by $O(\text{poly}(n))$ linear constraints is expressive enough to capture problems for which \PTAS{}es are ruled out under standard complexity hypotheses. 
As a first example,
\citet{mangasarian1964two} showed that a Nash equilibrium in a two-player general-sum game can be expressed as a quadratic minimization problem over a polyhedral set with many vertices. The approach of \citep{de2006ptas} cannot lead to a \PTAS for this formulation, since the \QPTAS is known to be tight \citep{rubinstein2017settling}.
Another problem that can be cast in this form is CSPs, and in particular dense instances (\emph{i.e.}, instances with many constraints \citep{arora1995polynomial, frieze1996regularity}), in which sampling techniques are used to obtain \QPTAS{}es. Similarly, Free Games are a class of optimization problems arising from multi-prover interaction protocols with multiple independent provers \cite{aaronson2014multiple}. Both of these problems can be algebrized into low-degree polynomials over hypercubes. The surprising fact is that the intrinsic ``normalization'' of these problems (\emph{i.e.}, CSPs have many constraints, and Free Games are evaluated only on expectations) results in a small range of objective values over the relevant set.

Another example is the problem of additive approximation on a normalized version of the densest $k$-subgraph problem (NDkS) \citep{barman2018approximating,abboud2022improved}. The classic version of this problem (\emph{i.e.}, with multiplicative approximations) is a fundamental problem in theoretical computer science with connections to hardness of approximations \citep{feige2001dense, manurangsi2017almost} and other important problems \citep{pisinger2007quadratic, andersen2009finding, hajiaghayi2006prize}.
\citet{alon2013approximate} showed that NDkS can be written as a quadratic minimization problem on a polyhedral set with many vertices, and \citet{barman2015approximating} relied on this result to design a \QPTAS. Later, \citet{braverman2017eth} proved that, under the exponential time hypothesis, this result is tight.

\end{enumerate}

From a technical perspective, the key challenge is in showing how these covers can be constructed for arbitrary polyhedral domains. Unlike the full simplex, general domains may have complex geometries, precluding the kinds of trivial sampling and uniform-grid constructions that apply in the simplex. We address these hurdles by showing that covers can be adapted to the domain's geometry through a combination of two insights. The first is a recursive decomposition of the polynomials. The second is a compression step, implemented via linear programming, that maps a uniform grid on an ambient simplex to the domain of interest.

\subsection{Technical Overview}

Our approach is based on a sequence of refinements. First, we construct an $\varepsilon$-Cover for the entire simplex. Second, we extend the result to polyhedral subsets of the simplex. Finally, we extend it to arbitrary polyhedral sets by inscribing them into an $\ell_1$-ball. We now comment in more detail on these steps.

In \Cref{sec:oneprob}, we revisit the result of \citet*{de2006ptas}, which shows that uniform grids on the simplex are sufficient for approximating the minimum of a single polynomial, and provide a simple alternative probabilistic proof of this result.
Our probabilistic approach offers two key advantages. First, it yields a uniform approximation of the polynomial over the whole simplex. Second, it naturally combines with a union-bound argument to extend the result to a family of polynomials.

\begin{theorem}[Informal; full version in \Cref{cor:manyImageSimplex}]
    For each $\varepsilon>0$, there exists an algorithm which finds in $n^{O({\log(m)}/{\varepsilon^2})}$ time an $\varepsilon$-Cover for a family of $m$ polynomials of fixed degree and range over the simplex.
\end{theorem}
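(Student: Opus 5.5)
The plan is a sampling argument that exploits the link between Bernstein approximation and multinomial distributions, followed by a union bound over the $m$ polynomials. Fix $x\in\Delta_n$ and an integer $N$. Draw $N$ i.i.d. indices $i_1,\dots,i_N$ from $x$, and let $\hat x=\frac1N\sum_{t} e_{i_t}$ be the empirical distribution. This $\hat x$ is a grid point of $\Delta_n$ with denominator $N$. The candidate cover is the set $\cX_N$ of all such grid points, which has size $\binom{n+N-1}{N}\le n^{O(N)}$ and can be enumerated in that time. It remains to show that, for $N=O(\log(m)/\varepsilon^2)$ (with constants depending on $d$ and the range bound), there is with positive probability a realization $\hat x$ satisfying $|f_j(\hat x)-f_j(x)|\le\varepsilon$ for all $j\in[m]$ simultaneously.

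To control a single degree-$d$ polynomial $f$, I first homogenize it on the simplex, using $\sum_i x_i=1$, so that it becomes a degree-$d$ form $f(x)=\sum_{|\alpha|=d} c_\alpha x^\alpha$. I then write it as a multilinear tensor evaluation $f(x)=T(x,\dots,x)$ with $T$ symmetric. Expanding gives $T(\hat x,\dots,\hat x)=N^{-d}\sum_{t_1,\dots,t_d} T[i_{t_1},\dots,i_{t_d}]$.

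\begin{itemize}
\item Tuples with distinct $t$'s contribute a U-statistic whose expectation is exactly $f(x)$.
\item Tuples with repeated indices make up an $O(d^2/N)$ fraction, so they shift the value by at most $O(d^2 B/N)$, where $B$ bounds $\max|T|$.
\end{itemize}

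The U-statistic is a function of the $N$ independent samples with bounded differences $O(dB/N)$. McDiarmid's inequality therefore gives
\[
\Pr\bigl[|f(\hat x)-\mathbb E f(\hat x)|>\varepsilon/2\bigr]\le 2\exp\bigl(-\Omega(N\varepsilon^2/(d^2B^2))\bigr).
\]
A union bound over the $m$ polynomials then requires $N=O(d^2B^2\log m/\varepsilon^2)$, and the bias term is at most $\varepsilon/2$ for such $N$.

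The main obstacle is replacing the coefficient bound $B$ with the \emph{range} of $f$ on the simplex, since the statement assumes bounded range and not bounded coefficients. The entries of $T$ are evaluations of $f$-related quantities at vertices and at low-dimensional faces, so I would try to bound them by a constant $C(d)$ times the range. Concretely, $T[i_1,\dots,i_d]$ is a fixed finite linear combination, depending only on $d$, of values of $f$ at points of the form $\frac1d\sum e_{i_k}$ and nearby rational points of the face spanned by $e_{i_1},\dots,e_{i_d}$. This is a polarization identity restricted to a face of dimension at most $d-1$. Because $f$ is a polynomial of degree $d$ on a simplex of dimension at most $d-1$, equivalence of norms on this finite-dimensional space gives $|T|\le C(d)\cdot\mathrm{range}(f)$ after subtracting a constant, and subtracting a constant does not affect the differences being bounded. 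If this polarization bound is delicate, my fallback is the Bernstein-coefficient bound used by \citet{de2006ptas}, namely that Bernstein coefficients of a degree-$d$ form are bounded by $\binom{2d-1}{d}d^d$ times its range.

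With $N$ chosen as above, some sample realization works for each $x$, so $\cX_N$ is an $\varepsilon$-Cover of size and enumeration time $n^{O(\log m/\varepsilon^2)}$.
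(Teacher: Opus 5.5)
Your proof is correct. Its skeleton matches the paper's: the empirical distribution of $N$ categorical samples, McDiarmid plus a bias term, a union bound over the $m$ polynomials, and enumeration of the $n^{O(N)}$ grid points of denominator $N$. Where you differ is in how the two key estimates are obtained.

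The paper imports the bias bound from the Bernstein-approximation literature (\Cref{th:bias}). It gets the bounded-difference constant from $\ell_1$-Lipschitzness on the simplex (\Cref{lem:L1Lip}), which it derives via the mean value theorem from a bound on tangential directional derivatives (\Cref{lem:tangent}). You get both estimates from a single symmetric-tensor expansion, purely by counting. Repeated-index tuples make up a fraction $1-N(N-1)\cdots(N-d+1)/N^d\le\binom{d}{2}/N$, and tuples containing a fixed sample make up a fraction at most $d/N$.

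Your ``fallback'' is in fact exactly the right tool. The entries of the symmetric tensor are precisely the Bernstein coefficients of \Cref{lem:boundBerncoeff}, not your monomial coefficients: writing $f=\sum_\beta c_\beta\cB^d_\beta$, you have $T[i_1,\dots,i_d]=c_\beta$, where $\beta$ is the type of $(i_1,\dots,i_d)$. Every quantity you need to bound is a difference of entries, or of convex combinations of entries, on the simplex. So only the spread $\sigma=\max_\beta c_\beta-\min_\beta c_\beta\le\rho(d)\cdot\rng(f)$ enters. This gives a bias of at most $\binom{d}{2}\sigma/N$ and bounded differences of at most $d\sigma/N$, with no constant shift or polarization needed. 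Your norm-equivalence argument on faces is also valid, but yields only an unspecified $C(d)$.

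One small slip: you computed bounded differences for the U-statistic but then stated McDiarmid for $f(\hat x)$ around its own mean. Either apply it to the U-statistic, whose mean is exactly $f(x)$, or note that the same counting gives $f(\hat x)$ itself bounded differences of $d\sigma/N$.

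As for the trade-off, your route is more elementary and self-contained: it uses pure counting with explicit constants, and needs neither calculus nor a cited approximation theorem. The paper's route is more modular: the tangential-derivative bound of \Cref{lem:tangent} is reused later, for example to bound the range of the polynomials in the local Nash equilibrium application.
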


Our probabilistic approach builds on an observation by \citet*{de2015alternative}, who remark that the Bernstein approximation of a polynomial at a point $x\in\Delta_n$ can be interpreted as the expectation of the same polynomial evaluated at samples drawn from a (rescaled) multinomial distribution with parameter $x$. We embrace this observation as the starting point of our framework.

To show that the multinomial distribution induces an $\varepsilon$-Cover, we employ a concentration argument on top of the aforementioned Bernstein approximation bound.
This result exploits a fragile relationship between the directional derivatives of the polynomials along the tangent space of the simplex and their range. However, our argument cannot rely solely on the existence of good covers of the simplex (in the traditional sense) and translate those into a cover of the images via Lipschitzness. 
Indeed, even in the linear case, the Lipschitz constant of polynomial maps is not bounded under norms other than $\ell_1$, and Lipschitzness under the $\ell_1$-norm alone is insufficient, since no efficient covers of the simplex exist under this norm. We include a more in-depth discussion on this technical point in \Cref{sec:related}.

In \Cref{sec:many_subset}, we present our main technical result, \Cref{thm:manySubset}, showing the existence of quasi-polynomial covers of a family of polynomials over a \emph{subset} of the simplex.
We remark that considering a subset of the simplex, while seemingly artificial and somewhat of limited interest, is a technical intermediate result that already captures most of the technicalities of the general setting (which we will consider in \Cref{sec:extension}). Intuitively, this is motivated by the simple observation that ``every set in dimension $n$ can be inscribed in the simplex of dimension $n+1$''. The main result of \Cref{sec:many_subset} is the following.

\begin{theorem}[Informal; full version in \Cref{thm:manySubset}]
    For each $\varepsilon>0$, there exists an algorithm that finds in $n^{O({\log(mn)}/{\varepsilon^2})}$ time an $\varepsilon$-Cover for a given family of $m$ polynomials of fixed degree and range over a polyhedral subset of the simplex. 
\end{theorem}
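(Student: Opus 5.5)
The plan is a two-stage construction: first build a pre-cover of the whole simplex $\Delta_n$, then compress it onto the polyhedral subset $\cH\subseteq\Delta_n$ using feasibility programs anchored at pre-cover points.

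\textbf{Stage 1: pre-cover of the simplex.} We invoke the simplex result (\Cref{cor:manyImageSimplex}). For each $x\in\Delta_n$, draw $N=O(\log(mn)/\varepsilon^2)$ samples from the multinomial distribution with parameter $x$ and form the rescaled empirical point $\hat x$. The Bernstein/multinomial connection gives $|\mathbb{E}[f(\hat x)]-f(x)|=O(1/N)$. Concentration plus a union bound over the family then gives $|f(\hat x)-f(x)|\le\varepsilon$ for all $f$ simultaneously, with positive probability. We will apply the union bound to a larger family than $\cF$; this is why the final bound has $\log(mn)$ rather than $\log m$. The candidate grid points are the $n^{O(N)}$ points of $\Delta_n$ with denominator $N$.

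\textbf{Stage 2: compression onto $\cH$.} A grid point $\hat x$ generally lies outside $\cH$, so we cannot output it directly. We use it only as an anchor and look for a point of $\cH$ that reproduces its values.

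For degree $d=1$, the plan is as follows. For each grid point $\hat x$, solve the LP
\[
\text{find } y\in\cH \quad\text{such that}\quad |f_i(y)-f_i(\hat x)|\le\varepsilon \text{ for all } i.
\]
If the LP is feasible, add one feasible $y$ to $\cX$. For any $x\in\cH$, the sampled $\hat x$ that approximates $x$ makes this LP feasible, since $x$ itself is a witness. The returned $y$ is then $2\varepsilon$-close to $x$ on every $f_i$.

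For degree $d>1$, the constraints $f_i(y)\approx f_i(\hat x)$ are not linear in $y$. We handle this by recursive degree reduction. Write each $f_i$ in terms of coefficient polynomials of degree $d-1$: $f_i(y)=\sum_j y_j\,g_{ij}(y)$, plus lower-order terms, which we homogenize using $\sum_j y_j=1$. This produces a family of $mn$ polynomials of degree $d-1$; this blow-up in the family size is exactly the source of the $\log(mn)$.

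We then anchor the $g_{ij}$ at the sample point: require $|g_{ij}(y)-g_{ij}(\hat x)|\le\varepsilon'$, with $g_{ij}(\hat x)$ fixed numbers. Under that anchoring, the constraint $|f_i(y)-f_i(\hat x)|\le\varepsilon$ can be replaced by the linear constraint
\[
\Bigl|\sum_j y_j\,g_{ij}(\hat x)-f_i(\hat x)\Bigr|\le\varepsilon .
\]
The anchoring constraints on the $g_{ij}$ are themselves degree-$(d-1)$ constraints, so we linearize them recursively in the same way. After $d$ levels, everything is an LP over $\cH$.

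\textbf{Where the difficulty lies.} The main obstacle is the error analysis of this linearization. The quantity $\sum_j y_j\,(g_{ij}(y)-g_{ij}(\hat x))$ is bounded by $\max_j |g_{ij}(y)-g_{ij}(\hat x)|$ because $y\in\Delta_n$, which is why uniform $\ell_\infty$ control over all $mn$ coefficient polynomials is required. We must also show two things:
\begin{itemize}[nosep]
  \item the $g_{ij}$ keep bounded range, which follows from Markov-type derivative bounds on the simplex for polynomials of bounded range;
  \item the sampled $\hat x$ simultaneously approximates all polynomials at every recursion level, so that the true $x$ witnesses feasibility.
\end{itemize}
Errors compound by a constant factor depending on $d$ at each level. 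Rescaling $\varepsilon$ absorbs this, giving total size and time $n^{O(\log(mn)/\varepsilon^2)}$ for constant $d$ and range.
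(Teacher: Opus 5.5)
Your proposal is correct and follows essentially the paper's route. You recursively decompose each $f$ into a depth-$d$ tree with $f(y)=\sum_j y_j g_j(y)$, build a pre-cover of $\Delta_n$ for all $O(mn^d)$ tree nodes (whence the $\log(mn)$), and for each pre-cover point $\tilde x$ solve one LP over $\cH$ imposing $|\sum_j y_j g_j(\tilde x)-\hat f(\tilde x)|\le 2\delta$ for every node $\hat f$; the true $x$ witnesses feasibility, and an induction on degree gives error $O(d\delta)$.

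The one point to pin down is the split itself. It must be the symmetric one, $g_j=\frac{1}{d}\partial_j f$ for homogeneous $f$, which is exactly the Bernstein-form decomposition of \Cref{lem:decomposition}. An arbitrary assignment of monomials can destroy the range bound: $f=M(\sum_j y_j)^2$ is constant on $\Delta_n$, but giving each cross term $2My_jy_k$ to the smaller index yields $g_1=M(2-y_1)$ on $\Delta_n$, of range $M$. Moreover, Markov-type bounds only control tangential derivatives. To get $\rng(g_j)\le\rho(d)\,\rng(f)$ you therefore need Euler's identity $g_j(x)=f(x)+\frac{1}{d}\nabla f(x)^\top(e_j-x)$ together with \Cref{lem:tangent}, or directly the Bernstein-coefficient bound of \Cref{lem:boundBerncoeff}, as the paper does.
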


Our construction proceeds in two steps. First, we design a ``pre-cover'' with quasi-polynomial size of the entire simplex. Next, we compress this pre-cover onto the target subset $\cH$ of the simplex. 
To achieve this, we iteratively decompose each polynomial in the family into lower-degree components through their derivatives. For each polynomial, we generate a tree of depth $d$ and branching factor $n$.
Intuitively, every node $f$ of degree $d$ has $n$ children $g_1,\ldots,g_n$ of degree $d-1$ satisfying $f(x)=\sum_{i}x_ig_i(x)$, where the $g_i$'s are related to the gradient of $f$.
Then, for each point $\tilde x$ in the ``pre-cover'', we define a feasibility LP that ``anchors'' the nodes of each tree at $\tilde x$. When feasible, this LP yields a point in $\cH$ that closely approximates the values of all polynomials at $\tilde{x}$. We point out that it is essential that the range is constant on the entire simplex, and not only on the polyhedral subset considered. We will discuss this point in more detail later in the paper.

Finally, in \Cref{sec:extension}, we extend our results from subsets of the simplex to arbitrary polyhedral sets. We show that the complexity of the problem is governed by a key parameter: the range of the polynomials over the smallest $\ell_1$-ball that inscribes the polyhedral set $\cH$. In particular, we define the diameter of the optimization domain $\cH$ as $\|\cH\|_1 = \max_{x \in \cH} \lVert x \rVert_1$. Then, we consider the range of the polynomial over the set $\{x\in \Reals_{\ge0}^n:\lVert x\rVert_1\le \|\cH\|_1\}$ (see \Cref{sec:extension} for a formal definition). Our framework results in a \textsf{QPTAS} whenever such a range is constant.

\begin{theorem}[Informal; full version in \Cref{thm:manySubsetArbitrary}]\label{th:manySubsetArbitraryInformal}
Consider a polyhedral set $\cH \subseteq \mathbb{R}_{\ge0}^n$ and a family of $m$ polynomials of fixed degree. Assume that the range of each polynomial $f \in \cF$ is constant over the smallest $\ell_1$-ball inscribing $\cH$. For any $\varepsilon > 0$, there exists an algorithm that finds in $n^{O(\log(nm)/\varepsilon^2)}$ time an $\varepsilon$-Cover for the family of polynomials over $\cH$.\end{theorem}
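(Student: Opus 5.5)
The plan is to reduce to the subset-of-the-simplex result (\Cref{thm:manySubset}, informal version above) by a lifting/homogenization step. Let $R=\|\cH\|_1=\max_{x\in\cH}\|x\|_1$, computable by one LP. Introduce a slack coordinate and define the map $\phi:\{x\in\Reals^n_{\ge0}:\|x\|_1\le R\}\to\Delta_{n+1}$ by $\phi(x)=(x/R,\,1-\|x\|_1/R)$. This is an affine bijection onto $\Delta_{n+1}$, with inverse $y\mapsto R\,(y_1,\dots,y_n)$. For each $f_j\in\cF$, set $g_j(y)=f_j(R\,y_{1:n})$. This is a polynomial of degree $d$ on $\Delta_{n+1}$. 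Its range over the whole simplex $\Delta_{n+1}$ equals the range of $f_j$ over the $\ell_1$-ball $\{x\ge0:\|x\|_1\le R\}$, which is constant by assumption. This is exactly the hypothesis \Cref{thm:manySubset} needs; recall that it requires constant range on the \emph{entire} simplex and not only on the subset, which is why the ball rather than $\cH$ is the right normalization. The target domain becomes $\cH'=\phi(\cH)\subseteq\Delta_{n+1}$, again a polytope described by the original constraints composed with the affine map $\phi^{-1}$ plus the simplex constraints, so its size is polynomial in the input.

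Next, I would apply \Cref{thm:manySubset} to $\{g_1,\dots,g_m\}$ over $\cH'$ with accuracy $\varepsilon$, in dimension $n+1$. This yields $\cX'\subseteq\cH'$ of size $(n+1)^{O(\log(m(n+1))/\varepsilon^2)}=n^{O(\log(mn)/\varepsilon^2)}$, computed in comparable time. Then set $\cX=\phi^{-1}(\cX')\subseteq\cH$. Given $x\in\cH$, take $y=\phi(x)\in\cH'$ and pick $y'\in\cX'$ with $|g_j(y)-g_j(y')|\le\varepsilon$ for all $j$. With $x'=\phi^{-1}(y')$ this gives $|f_j(x)-f_j(x')|\le\varepsilon$, so $\cX$ is an $\varepsilon$-Cover of $\cF$ over $\cH$.

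The main point to check is the bookkeeping of hidden constants. The constants in \Cref{thm:manySubset} depend on $d$ and on the range bound. The range of $g_j$ is preserved exactly by $\phi$. The other quantities the earlier proof might depend on, such as derivative/coefficient bounds inferred from the range via Markov-type inequalities on the simplex, are functions of the range on $\Delta_{n+1}$ and $d$ only, so they also transfer. If instead the earlier theorem is stated for homogeneous polynomials, I would first homogenize $g_j$ using $\sum_i y_i=1$ on $\Delta_{n+1}$; this does not change values on the simplex. A degenerate case is $R=0$, where $\cH=\{0\}$ and the cover is trivial.
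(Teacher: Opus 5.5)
Your proposal is correct and matches the paper's proof of \Cref{thm:manySubsetArbitrary}. Both embed $\cH$ into $\Delta_{n+1}$ via the affine map $x\mapsto (x/\|\cH\|_1,\,1-\|x\|_1/\|\cH\|_1)$, observe that $f(\|\cH\|_1\, y_{1:n})$ has range on all of $\Delta_{n+1}$ equal to $\widetilde{\rng}(f)$, apply \Cref{thm:manySubset}, and pull the resulting cover back to $\cH$; your extra details (computing $\|\cH\|_1$ by one LP, the case $\cH=\{0\}$, and $(n+1)^{O(\cdot)}=n^{O(\cdot)}$) are ones the paper leaves implicit.
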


\subsection{Applications}
\label{sec:applications}

Our constructive algorithm implies that one can exhaustively search over the image of a vector of polynomials in quasi-polynomial time. This general fact has immediate consequences for a variety of search problems.

To start, it implies a \QPTAS for minimization/maximization problems $\min_{x \in \cH} f(x)$ for certain polyhedral sets $\cH$ and objective functions $f$.
Among them, we find Constraint Satisfaction Problems (CSPs) and Free Games, as described in \cref{sec:csp} and \cref{sec:free}.

Moving beyond applications involving a single polynomial, another general application domain for our construction is variational inequalities  (VIs) with polynomial operators over a polyhedral subset with polynomially many vertices, with implications for computing local Nash equilibria and local saddle points (\cref{sec:VI}).
To fix ideas, consider a variational inequality problem with constant-degree polynomial operators $\bm{f} : \Delta_n \to \Reals^n$, that is, the problem of finding $x \in \Delta_n$ such that
\[
    \bm{f}(x)^\top (x' - x) \le 0 \qquad \forall x' \in \Delta_n.
\]
To approximate the solution, we consider the family $\cF\coloneqq \{\bm{f}(x)^\top (e_1 - x), \dots, \bm{f}(x)^\top (e_n - x)\}$. We can then construct and enumerate the $\varepsilon$-Cover (which has size $n^{O(\log n/\varepsilon^2)}$), looking for any $x$ at which all polynomials in $\cF$ have value $\le \varepsilon$. As a noteworthy corollary, this result shows that one can find local Nash equilibria of games with constant-degree polynomial utilities in quasi-polynomial time, providing a partial positive counterpart to the results of \citep{anagnostides2026computational,bernasconi2026complexity}. %
We develop this result, together with some extensions, in \cref{sec:VI}.

Finally, our result clearly applies to combinatorial problems that can be algebraized into low-degree polynomials. As an application, we show in \cref{sec:hyper} how it yields additive approximations for the normalized densest $k$-subhypergraph problem on uniform hypergraphs. The densest $k$-subgraph problem is one of the most fundamental problems in theoretical computer science; however, its extension to hypergraphs \citep{chlamtac2018densest} is much less studied. This is probably because even for graphs, the problem cannot be approximated within any constant factor under multiplicative approximations \citep{alon2011inapproximability}. However, the additive version of the densest subgraph problem (under suitable normalization) is quasi-polynomial \citep{barman2018approximating, abboud2022improved}. We show that the tractability of the problem under additive approximations is extended to $O(1)$-uniform hypergraphs \citep{chlamtac2018densest}.
We reduce this problem to optimizing a degree-$d$ polynomial over a subset of the simplex, where $d$ is the size of a hyperedge.
We use a $\ell_2$-regularized generating polynomial of the hypergraph to show that an optimum of this polynomial on $\Delta_n\cap[0,1/k]^n$ can be efficiently converted into a solution to the densest $k$-subhypergraph problem. A crucial aspect of this reduction is bounding the range of the polynomials involved, which we achieve through Maclaurin’s inequality for elementary symmetric polynomials.
This yields a \QPTAS for constant-factor approximations on constant-degree uniform hypergraphs.

\subsection{Related Work}\label{sec:related}

An alternative way of producing $\varepsilon$-Covers of a family of polynomials over a set $\cH$ is to start from an $\varepsilon$-grid for the set $\cH$ under some $\ell_p$-norm and then use Lipschitz continuity of the polynomials in the family with respect to the same norm to obtain guarantees. 

For instance, \citet*{deligkas2020lipschitz} extended the classic result of \citet{lipton2003playing} beyond the linear utility setting to games with Lipschitz-continuous utilities (see also \citet{azrieli2013lipschitz}). 
In more detail, they design algorithms to find equilibria in games with payoffs that are $\lambda$-Lipschitz with respect to some $\ell_p$-norm (with $p\ge 2$), and obtain an algorithm running in time $n^{O(p\lambda^2/\varepsilon^2)}$. However, bounded constant-degree polynomials are generally not $O(1)$-Lipschitz in any norm other than the $\ell_1$ one (an example is given in \Cref{app:Lip}, \Cref{prop:Lip}, which holds even for degree $1$ polynomials). Hence, these results cannot apply to general bounded constant-degree polynomials on the simplex.

\citet*{deligkas2018approximating,deligkas2022approximating} motivated by the study of Existential Theory of the Reals \citep{schaefer2017fixed}, give \QPTAS{}es for special classes of polynomials, such as those with only a polylogarithmic number of terms and with a good enough Lipschitzness constant with respect to the $\ell_\infty$-norm.%
While it is known \citep{barman2018approximating} that there exist uniform grids of $\Delta_n$ in $\ell_p$-norm of size $n^{O(\min(p,\log(n))/\varepsilon^2)}$ for any $p\in[2,\infty]$, crucially the result does \emph{not} hold for $p=1$. Hence, the result does not apply to polynomials that are Lipschitz continuous only with respect to the $\ell_1$ norm, and therefore---in light again of the fact that general bounded low-degree polynomials can only be Lipschitz continuous with respect to the $\ell_1$-norm---this line of attack cannot recover a general result. %

\citet*[Theorem~5]{deligkas2022approximating} also generalize the sampling techniques of \citet{lipton1994simple, barman2018approximating} beyond the linear case and beyond the Lipschitzness assumption. Their result yields a quasi-polynomial-size cover for a family of polynomials over the simplex, using different techniques than those based on Bernstein interpolants \citep{de2006ptas} that we describe in \Cref{sec:oneprob}. More specifically, given $m$ polynomials each described as the sum of at most $t$ homogeneous terms of degree $\le d$ with coefficients in the range  $[-\alpha, \alpha]$, they give a $n^{ O(\poly(d,\alpha,t)\cdot \log(m)/\varepsilon^5)}$-sized $\varepsilon$-Cover. 
On the other hand, our result in \Cref{sec:oneprob} yields a cover of size $n^{ O(\poly(d^d)\cdot \log(m)/\varepsilon^2)}$, and thus a better on $\varepsilon$, but with a far worse dependence on the degree $d$. However, the assumption of bounded coefficients is not without loss of generality, as constant-range polynomials do not generally have constant-bounded coefficients.\footnote{For example, the polynomial
$
M(\sum_{i\in[n]} x_i)^2
$ 
has constant range and coefficients of magnitude $\Theta(M)$.} In general, constant-range polynomials can have coefficient magnitude bounded by $\alpha = d^{O(d)}$ (cf.~\Cref{sec:oneprob}); in that regime, the bound of \citet[Theorem~5]{deligkas2022approximating} recovers the $d^{O(d)}$ dependence in ours, while retaining a worst dependence on $\varepsilon$. In those cases where $\alpha \ll d^{O(d)}$, however, the result of \citet{deligkas2022approximating} leads to an interesting guarantee. We leave investigating those cases, including in combination with the general construction of \Cref{sec:many_subset}, as future work. %

Finally, \citet*[Theorem~5]{deligkas2022approximating} also extend their construction to domains that are products of simplices. These domains arise naturally in several of our applications of \Cref{sec:applications}, including the standard algebrization of CSPs, which instantiates a simplex over the alphabet for each variable.
In the case of product of $B$ simplices, $\Delta^B_q$, the algorithm of \citet{deligkas2022approximating} runs in time $q^{O(\poly(B ,1/\varepsilon,\log m))}$. This provides meaningful bounds only when $B$ is bounded. Unfortunately, this is often not an interesting regime: returning to the CSP example, a constant $B$ would yield a CSP with a constant number of variables, which can be solved by enumeration. In contrast, our results from \Cref{sec:extension} let us obtain (together with the compression technique developed in \Cref{sec:many_subset}, which is the main technical contribution of this work), a quasi-polynomial time algorithm for some important cases of polynomially large $B$ such as dense CSP (see other applications in \Cref{sec:applications_real}). %

\section{Notation and Preliminaries}

Given variables $x = (x_1, \dots, x_n)\in \Reals^n$, we denote with $\Reals_d[x_1,\ldots, x_n]$ the set of polynomials with real coefficients and degree at most $d$ over the variables $x_1,\ldots,x_n$. Since the paper mostly concerns polynomials over the simplex $\Delta_n$, we assume, without loss of generality, that all polynomials are homogeneous, that is, all monomials have the same degree $d$. The assumption is without loss of generality, as one can always multiply any monomial of degree $k<d$ by $(\sum_{i\in [n]}x_i)^{d-k}$, which is identically equal to $1$ on the simplex. %

The range of a polynomial $f \in \Reals_d[x_1,\dots,x_n]$ over a generic compact set $\cH \subseteq \Reals^n$ is denoted 
\[
    \rng_\cH(f) \coloneqq \max_{x\in \cH} f(x)- \min_{x \in \cH} f(x).
\]
Given the special role of the simplex in this paper, we let $\rng(f)$ be a shorthand for the range $\rng_{\Delta_n}(f)$ of $f$ over the simplex.
For a family of polynomials $\cF \subseteq \Reals_d[x_1,\ldots, x_n]$, $\rng_{\cH}(\cF)$ is defined as the maximum of the ranges of each function in the family over $\cH$, and similarly $\rng(\cF)$ is the range of the family over the simplex.

We use $\Naturals_d^n$ as the set of $n$-tuples of non-negative natural numbers that sum up to $d$. We will frequently use the following classic estimate $|\Naturals_d^n|=O(n^d)$.
Finally, for any $x\in\Reals^n$ and $\beta\in\Naturals_{d}^n$, we use the multinomial notation $x^\beta$ to denote $\prod_{i\in[n]}x_i^{\beta_i}$ and for $\beta\in\Naturals_{d}^n$ we define $\beta!=\prod_{i\in[n]}\beta_i!$. Finally, we use the symbol $\Reals_{\ge0}$ to denote the set of nonnegative reals.

In this paper, we denote by $\rho(d)$ a generic function that depends only on the degree $d$ of the polynomials. Since we study only the case where $d=O(1)$, we can think of $\rho$ as simply a (degree-dependent) constant.

\paragraph{The Bernstein basis.}
Sometimes, it will be helpful to consider the Bernstein basis of a polynomial, rather than the standard monomial basis. The Bernstein basis $\{\cB_\beta^d(x)\}_{\beta\in\Naturals_d^n}$ of $\Reals_d[x_1,\ldots,x_n]$ is given by monomials $\cB_\beta^d(x)=\frac{d!}{\beta!}x^\beta$.
The coefficients with respect to the Bernstein basis (particularly, their range) are tightly connected to the range of the polynomial on the simplex, as we recall next. 

\begin{lemma}[{\cite[Theorem 2.2]{de2006ptas}}]\label{lem:boundBerncoeff}
    Let $f\in\Reals_d[x]$ be an $n$-variate polynomial and let $\{c_\beta\}_{\beta\in\Naturals_d^n}$ be the coefficients of $f$ with respect to the Bernstein basis. Then,
    \[
    \max_{\beta\in\Naturals^n_d}c_\beta-\min_{\beta\in\Naturals^n_d}c_\beta\le\rho(d)\cdot\rng(f),
    \]
    where $\rho(d)$ is a function that depends only on the degree $d$.
\end{lemma}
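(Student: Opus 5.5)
Since $f$ is homogeneous of degree $d$, the Bernstein coefficients are a linear function of $f$ that does not change when $f$ is shifted by a constant: adding $c$ to $f$ on the simplex amounts to adding $c(\sum_i x_i)^d=\sum_\beta c\,\cB^d_\beta(x)$, which adds $c$ to every Bernstein coefficient. So $\max_\beta c_\beta-\min_\beta c_\beta$ is a seminorm on the space of degree-$d$ homogeneous polynomials, and $\rng(f)$ is one too. Both have kernel exactly the multiples of $(\sum_i x_i)^d$; for $\rng$ this uses that a homogeneous polynomial is determined by its values on the simplex. The statement therefore says that one seminorm is dominated by the other with a constant independent of $n$. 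Dimension-free equivalence is the whole difficulty, so I would reduce to a bounded number of variables.

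\textbf{Reduction to $d$ variables.} Fix $\beta\in\Naturals^n_d$. Its support $S=\{i:\beta_i>0\}$ has size at most $d$. Take any two multi-indices $\beta,\beta'$; their supports together lie in a set $T$ of at most $2d$ coordinates. Restricting $f$ to the face $\Delta_T$ of the simplex spanned by $T$ has two useful properties. First, the restriction $f|_T$ has Bernstein coefficients $c_\gamma$ for $\gamma$ supported in $T$, since monomials involving other coordinates vanish there. Second, $\rng(f|_T)\le\rng(f)$. It therefore suffices to prove, for each fixed $k\le 2d$, that
\[
\max_\gamma c_\gamma-\min_\gamma c_\gamma\le\rho_k(d)\,\rng(g)
\]
for every homogeneous degree-$d$ polynomial $g$ in $k$ variables. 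Then set $\rho(d)=\max_{k\le 2d}\rho_k(d)$.

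\textbf{Fixed dimension.} In $k$ variables the space of homogeneous degree-$d$ polynomials has finite dimension $\binom{d+k-1}{k-1}$, which depends only on $d$. Both quantities are seminorms with the same kernel, the multiples of $(\sum_i x_i)^d$, so they descend to norms on a finite-dimensional quotient and are therefore equivalent. This gives a constant $\rho_k(d)$ depending only on $k$ and $d$.

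\textbf{Explicit constant.} To make $\rho_k(d)$ concrete, I would use interpolation on the grid $\{\gamma/d:\gamma\in\Naturals^k_d\}$. The values $g(\gamma/d)$ determine $g$, and each Bernstein coefficient is a fixed linear combination of these values whose weights sum to $1$. The weights sum to $1$ because the constant polynomial has all coefficients equal to $1$. Subtracting $\min g$, each coefficient lies within (sum of absolute weights)$\times\rng(g)$ of $\min g$. That bounds the difference of coefficients by $2\|W\|_\infty\,\rng(g)$, where $W$ is the inverse of the Bernstein collocation matrix.

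\textbf{Main obstacle.} The only subtle step is the reduction to $d$ variables. One must check that the face restriction preserves exactly those coefficients and does not enlarge the range, and both checks are immediate. The fixed-dimension step is a compactness argument with no real difficulty, so no fine estimate is needed.
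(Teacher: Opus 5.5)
Your proof is correct. It takes a genuinely different route from the paper, which does not prove this lemma itself but imports it from de Klerk, Laurent and Parrilo (Theorem~2.2 there). That cited result is quantitative: a direct computation with the coefficients gives an explicit constant of order $d^{O(d)}$.

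Your argument is structural instead. Any two coefficients $c_\beta, c_{\beta'}$ survive unchanged when $f$ is restricted to the face of $\Delta_n$ spanned by $\mathrm{supp}(\beta)\cup\mathrm{supp}(\beta')$, which has at most $2d$ vertices. The range can only shrink under this restriction. On that face, the space of degree-$d$ forms has dimension depending only on $d$. The coefficient spread and the range are seminorms there with the common kernel spanned by $(\sum_i x_i)^d$, so norm equivalence gives an $n$-independent constant.

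This route makes transparent exactly why the constant is dimension-free, and it needs no combinatorial identities. What it gives up is the explicit dependence on $d$. Compactness only yields existence of $\rho(d)$. Your interpolation paragraph produces $2\|W\|_\infty$, using the standard unisolvence of the grid $\{\gamma/d\}$, but never estimates the inverse collocation matrix.

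That is enough for every theorem in the paper, since all of them are stated with an unspecified $\rho(d)$. It is not enough for the quantitative remarks in the related-work discussion: coefficients of constant-range polynomials bounded by $d^{O(d)}$, and covers of size $n^{O(\mathrm{poly}(d^d)\log m/\varepsilon^2)}$. Those rely on the explicit constant from the cited source.
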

\begin{remark}
    The specific representation of the input polynomials is inconsequential. Even under oracle access, one can recover the representation of a polynomial in the Bernstein basis by evaluating the polynomial on the $d$-uniform grid and solving a linear program of size $O(n^d)$ to infer the corresponding coefficients.
\end{remark}

\paragraph{Cover of a family.}
One of our main results is an algorithm for simultaneously approximating the image of a family of polynomials $\cF$ over a compact set $\cH$. To formalize this result, a staple of our technical construction will be the notion of \emph{$\varepsilon$-Cover of $\cF$ over $\cH$}, which is a covering of $\cH$ under the pseudometric $d_\cF(x,y)=\sup_{f\in\cF}|f(x)-f(y)|$ induced by the class $\cF$, as formalized in \Cref{def:cover}.

\begin{definition}[$\varepsilon$-Cover of $\cF$ over $\cH$]\label{def:cover}
Given an $\varepsilon>0$ and a family of functions $\cF$, we say that a set $\cX_\cH^\cF$ is an $\varepsilon$-Cover of $\cF$ over $\cH\subseteq \Reals^n$ if for all $x \in \cH$ there exists a $x' \in \cX\subseteq \cH$ such that $|f(x)-f(x')|\le \varepsilon$ for all $f\in\cF$.
\end{definition}

The main technical result we will prove is that for a family of fixed degree polynomials $\cF$ and any constant $\varepsilon$, there always exists an $\varepsilon$-Cover of quasi-polynomial size of $\cF$ over polyhedral subsets of the simplex $\cH\subseteq\Delta_n$. Moreover, the proof will be constructive. Similar questions for the case $d=1$ were considered by \citet*{alon2013approximate}.

\section{Probabilistic Proof of the \PTAS for Optimizing Fixed-Degree Polynomials over the Simplex} \label{sec:oneprob}

\citet*{de2006ptas} proved 
the existence of a polynomial-time algorithm for finding constant additive approximations of the minimum value of a polynomial of constant degree over the simplex.
Their result heavily relies on the \emph{Bernstein approximation} of polynomials. 

\begin{definition}
The Bernstein approximation of order $N$ of an $n$-variate polynomial $f\in \Reals_d[x]$ is the polynomial
\begin{align}\label{eq:Bernstein}
B_N[f](x)\coloneqq\sum_{\beta\in \Naturals^n_N} f\left(\frac{\beta}{N}\right)\frac{N!}{\beta!}x^\beta.
\end{align}
\end{definition}

As a stepping stone, in this section we leverage the probabilistic properties of the Bernstein approximation and a simple concentration argument to prove results that generalize those of \citet{de2006ptas}. 
The probabilistic nature of the result will help us generalize the construction in two directions: 1) provide a \emph{uniform} approximation of the polynomial; and 2) extend the result to a \emph{family} of polynomials. The proof of the next statement is deferred to \Cref{app:prob}.

\begin{restatable}{theorem}{theoremEasy}\label{th:probabilisticPTAS}
    Let $f\in\Reals_d[x]$. For any $x\in\Delta_n$ and $\varepsilon>0$, let $Z=\sum_{k=1}^N X_k/N$, where $X_k\stackrel{i.i.d.}{\sim} \mathrm{Categorical}(x)$ and $N=\Omega(\frac{1}{\varepsilon}\rho(d)\cdot\rng(f))$. Then the following holds:
    \[
    \mathbb{P}(|f(Z)-f(x)|\ge \varepsilon)\le 2\exp\left(-\frac{N\varepsilon^2}{\rho(d)\cdot\rng(f)^2}\right).
    \]
\end{restatable}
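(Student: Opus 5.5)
We use the bias--concentration decomposition
\[
|f(Z)-f(x)|\le |f(Z)-\mathbb{E}[f(Z)]| + |\mathbb{E}[f(Z)]-f(x)|,
\]
and control the bias with the Bernstein-approximation bound and the fluctuation with McDiarmid's bounded-differences inequality. By additivity of $\rng(f)$ and of the statement in $f$ up to constants, we may first shift $f$ so that the Bernstein coefficients lie in an interval of length at most $\rho(d)\rng(f)$ (\Cref{lem:boundBerncoeff}). Shifting by a constant times $(\sum_i x_i)^d$ changes neither $\rng(f)$ nor $f(Z)-f(x)$ on the simplex.

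\textbf{Step 1 (bias).} Since $NZ\sim\mathrm{Multinomial}(N,x)$, we have $\mathbb{P}(NZ=\beta)=\frac{N!}{\beta!}x^\beta$, so $\mathbb{E}[f(Z)]=B_N[f](x)$, following \citet{de2015alternative}. We then show $|B_N[f](x)-f(x)|\le \rho(d)\rng(f)/N$. Write $f=\sum_\beta c_\beta \cB^d_\beta$ and compute $B_N$ of each Bernstein monomial via factorial moments of the multinomial:
\[
\mathbb{E}\Bigl[\prod_i (NZ_i)^{\underline{\beta_i}}\Bigr]=N^{\underline d}x^\beta .
\]
Expanding ordinary powers into falling factorials then gives $\mathbb{E}[Z^\beta]=\frac{N^{\underline d}}{N^d}x^\beta+(\text{lower-order terms in }1/N)$. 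The lower-order terms are combinations of monomials $x^{\gamma}$, $|\gamma|<d$, with coefficients bounded by constants depending on $d$. Summed against the Bernstein coefficients, which have bounded spread, this yields a bias of order $\rho(d)\rng(f)/N$. This step is essentially Theorem~1.3 of \citet{de2006ptas}, and we may quote it. Choosing $N\ge 2\rho(d)\rng(f)/\varepsilon$ makes the bias at most $\varepsilon/2$.

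\textbf{Step 2 (concentration).} View $f(Z)=g(X_1,\dots,X_N)$ with the $X_k$ independent. Replacing one $X_k=e_i$ by $e_j$ changes $Z$ to $Z+(e_j-e_i)/N$. By the mean value theorem, $|g-g'|\le \frac1N\max_{y\in\Delta_n}|\langle\nabla f(y),e_j-e_i\rangle|$. The key claim is that directional derivatives along edges of the simplex are bounded by $\rho(d)\rng(f)$. To see this, write $f(y)=\sum_\beta c_\beta\cB^d_\beta(y)$; then
\[
\partial_j f-\partial_i f=d\sum_{\gamma\in\Naturals^n_{d-1}}(c_{\gamma+e_j}-c_{\gamma+e_i})\cB^{d-1}_\gamma(y).
\]
This is a convex combination of coefficient differences, each bounded by $\rho(d)\rng(f)$ by \Cref{lem:boundBerncoeff}. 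Hence the bounded-difference constants are $c_k=\rho(d)\rng(f)/N$, and McDiarmid's inequality gives
\[
\mathbb{P}\bigl(|f(Z)-\mathbb{E}f(Z)|\ge\varepsilon/2\bigr)\le 2\exp\!\left(-\frac{N\varepsilon^2}{\rho(d)\rng(f)^2}\right),
\]
after absorbing constants into $\rho(d)$.

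\textbf{Combining.} Putting Steps 1 and 2 together yields the claim. The main obstacle is the edge-derivative bound in Step 2; expressing the difference of partials in the degree-$(d-1)$ Bernstein basis, as above, is the cleanest route. Note that an $\ell_\infty$-Lipschitz argument would fail here, whereas bounds along the tangent directions $e_j-e_i$ are exactly what McDiarmid requires.
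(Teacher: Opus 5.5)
Your proposal is correct and follows essentially the same route as the paper. The bias is controlled by quoting the Bernstein-approximation bound of de~Klerk et al.\ (with $N\ge 2\rho(d)\rng(f)/\varepsilon$ making it at most $\varepsilon/2$), the fluctuation is handled by McDiarmid's inequality with bounded differences $\rho(d)\rng(f)/N$, and your edge-derivative bound via the degree-$(d-1)$ Bernstein expansion of $\partial_j f-\partial_i f$ is the paper's tangent-space estimate (\Cref{lem:tangent}, applied through \Cref{lem:L1Lip}) specialized to $w=e_j-e_i$.
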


A direct consequence of \Cref{th:probabilisticPTAS} is that for every $x\in\Delta_n$ (and not only for the minimizer of $f$ over $\Delta_n$), whenever $N=\Omega({\rng(f)^2\rho(d)}/{\varepsilon^2})$, there exists a realization of $Z$ such that $|f(Z)-f(x)|\le \varepsilon$. Thus, the set of all $n^{O({\rng(f)^2 \rho(d)}/{\varepsilon^2})}$ possible realizations of $Z$ defines an $\varepsilon$-Cover of $f$.

\begin{remark}
   The algorithm resulting from our probabilistic proof of the $\PTAS$ (\cref{th:probabilisticPTAS}) runs slower than the one of \cite{de2006ptas} in terms of $\varepsilon$ (inversely quadratic in our case, rather than inversely linear in \cite{de2006ptas}). However, the probabilistic nature of our proof makes the result stronger, as it enables us to extend it to the joint optimization of multiple polynomials in the next subsection. To our knowledge, all results that approximate many functions exhibit a $1/\varepsilon^2$ dependence \citep{lipton2003playing, barman2018approximating, aaronson2014multiple}. It is an interesting open question whether our technique can be made to work in $n^{O(1/\varepsilon)}$ rather than $n^{O(1/\varepsilon^2)}$.
   This gap was also observed by \citet{aaronson2014multiple}.
\end{remark}

\subsection{Quasi-Polynomial Covers: Family of Polynomials over the Simplex}\label{sec:many_entire}

We now point out that the probabilistic proof of \Cref{th:probabilisticPTAS} (which works for a single polynomial) can be combined with a union-bound argument to obtain an $\varepsilon$-Cover for a family of polynomials on the simplex. Although the argument is simple, it already leads to interesting applications, including to Nash equilibrium approximation, as we discuss in \cref{sec:VI}. Extending the result to polyhedral subsets is significantly more challenging and is deferred to \cref{sec:many_subset}.

\begin{theorem}\label{cor:many}
    Let $\cF$ be a family of $m$ polynomials of degree $d$. Moreover, for any $x\in\Delta_n$ and any $\varepsilon>0$, let $Z=\sum_{k=1}^N X_k/N$, where $X_k\stackrel{i.i.d.}{\sim} \mathrm{Categorical}(x)$ and $N=\Omega(\frac{1}{\varepsilon}\rho(d)\cdot\rng(\cF))$. Then,
    \[
    \mathbb{P}(|f(Z)-f(x)|\le \varepsilon\ \forall f\in\cF)\ge 1-2m\exp\left(-\frac{N\varepsilon^2}{\rho(d) \cdot\rng(\cF)^2}\right).
    \]
\end{theorem}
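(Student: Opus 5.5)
The statement follows from \Cref{th:probabilisticPTAS} combined with a union bound over the $m$ polynomials in $\cF$. Fix $x\in\Delta_n$ and $\varepsilon>0$, and let $Z$ be the rescaled multinomial sample defined in the statement. All polynomials share the same random variable $Z$, so we do not need independent samples per polynomial; we only need each individual failure event to have small probability.

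For each $f\in\cF$ we have $\rng(f)\le\rng(\cF)$ by definition of the range of a family. Hence the hypothesis $N=\Omega(\frac{1}{\varepsilon}\rho(d)\cdot\rng(\cF))$ implies the hypothesis $N=\Omega(\frac1\varepsilon\rho(d)\cdot\rng(f))$ needed in \Cref{th:probabilisticPTAS}, with the same hidden constant. Applying that theorem to each $f$ gives
\[
\mathbb{P}(|f(Z)-f(x)|\ge\varepsilon)\le 2\exp\left(-\frac{N\varepsilon^2}{\rho(d)\cdot\rng(f)^2}\right)\le 2\exp\left(-\frac{N\varepsilon^2}{\rho(d)\cdot\rng(\cF)^2}\right).
\]
The second inequality uses monotonicity of the exponent in $\rng(f)$. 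If $\rng(f)=0$, then $f$ is constant on $\Delta_n$; since $Z\in\Delta_n$, the failure event has probability zero and the bound holds trivially.

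Union-bounding over the $m$ events $\{|f(Z)-f(x)|>\varepsilon\}$ gives failure probability at most $2m\exp(-N\varepsilon^2/(\rho(d)\rng(\cF)^2))$. Taking the complement yields the claim. The strict versus non-strict inequality is harmless, since $\{|f(Z)-f(x)|>\varepsilon\}\subseteq\{|f(Z)-f(x)|\ge\varepsilon\}$.

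The only point needing care is consistency of the generic function $\rho(d)$: it must be the same degree-dependent quantity in both places, which holds because $\rho$ depends only on $d$. No step here is a real obstacle; all the substance sits in \Cref{th:probabilisticPTAS}.

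As a consequence, choosing $N=\Theta(\rho(d)\rng(\cF)^2\log(m)/\varepsilon^2)$ makes the success probability positive. Therefore the $n^{O(N)}$ grid points with denominator $N$ form an $\varepsilon$-Cover of $\cF$ over $\Delta_n$.
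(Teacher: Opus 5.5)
Your proposal is correct and follows the paper's argument: apply \Cref{th:probabilisticPTAS} to each $f\in\cF$, using $\rng(f)\le\rng(\cF)$ to relax both the sample-size hypothesis and the tail bound, and then take a union bound over the $m$ polynomials. Your separate handling of the degenerate case $\rng(f)=0$ and of strict versus non-strict inequalities fills in details the paper leaves implicit.
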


As a corollary of \Cref{cor:many}, we get that for all $x\in\Delta_n$ whenever $N=\Omega(\frac{\rng(\cF)^2\rho(d)\log(m)}{\varepsilon^2})$, there exists at least one realization of the variable $Z$ such that $|{f}(Z)-{f}(x)|\le\varepsilon$ for all $f\in\cF$. 

Thus, we can iterate over all possible realizations of $Z$ and build a set of size $|\Naturals_{N}^n|=n^{O(N)}$ %
with $N=\Theta(\frac{\rng(\cF)^2\rho(d)\log(m)}{\varepsilon^2})$ that is an $\varepsilon$-Cover of $\cF$ over $\Delta_n$.
Formally:
\begin{theorem}\label{cor:manyImageSimplex}
    Given a family of $m$ degree-$d$ polynomials $\cF$ and any $\varepsilon>0$, there exists an algorithm which finds in $n^{O\big({\rng(\cF)^2\rho(d)\log(m)}/{\varepsilon^2}\big)}$ time an $\varepsilon$-Cover $\cX_{\Delta_n}^\cF$ of $\cF$ over the simplex. Moreover, $|\cX_{\Delta_n}^\cF|=n^{O\big({\rng(\cF)^2\rho(d)\log(m)}/{\varepsilon^2}\big)}$.
\end{theorem}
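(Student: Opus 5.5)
We derive \Cref{cor:manyImageSimplex} directly from \Cref{cor:many} by the probabilistic method, followed by explicit enumeration. Fix $\varepsilon>0$ and set
\[
N \coloneqq \left\lceil C\,\frac{\rng(\cF)^2\rho(d)\log(2m+1)}{\varepsilon^2}\right\rceil
\]
for a suitably large absolute constant $C$. Two conditions need checking. First, the hypothesis $N=\Omega(\rho(d)\rng(\cF)/\varepsilon)$ of \Cref{cor:many} holds. When $\rng(\cF)/\varepsilon\ge 1$ this is immediate, since $\rng(\cF)^2/\varepsilon^2\ge \rng(\cF)/\varepsilon$. When $\rng(\cF)<\varepsilon$ the statement is trivial, because any single point of $\Delta_n$ is already an $\varepsilon$-Cover. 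Second, the failure probability satisfies
\[
2m\exp\left(-\frac{N\varepsilon^2}{\rho(d)\rng(\cF)^2}\right)<1 .
\]
This is where we absorb constants: the $\rho(d)$ appearing in $N$ is taken to be at least the one appearing in the exponent of \Cref{cor:many}, which is allowed because $\rho$ denotes a generic degree-dependent function.

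With this choice, for every $x\in\Delta_n$ the event $\{|f(Z)-f(x)|\le\varepsilon\ \forall f\in\cF\}$ has positive probability. Hence some realization $z$ of $Z$ lies in its support and approximates $x$ simultaneously for all $f\in\cF$. Every realization of $Z=\frac1N\sum_k X_k$ has the form $\beta/N$ with $\beta\in\Naturals^n_N$, because each $X_k$ is a standard basis vector. We therefore define
\[
\cX_{\Delta_n}^\cF\coloneqq\{\beta/N:\beta\in\Naturals^n_N\}\subseteq\Delta_n .
\]
This set is independent of $x$, lies inside the simplex, and by the above contains a valid $x'$ for every $x\in\Delta_n$. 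It is thus an $\varepsilon$-Cover in the sense of \Cref{def:cover}.

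For the size and running-time bounds, note that $|\Naturals^n_N|=\binom{n+N-1}{N}\le (n+N)^N$. In the regime of interest, $N$ is at most polylogarithmic in $n$ (otherwise the claimed bound is trivially satisfied with a larger constant), so this gives $n^{O(N)}$, which has the stated form. The algorithm enumerates all compositions $\beta$ of $N$ into $n$ parts and outputs $\beta/N$, spending $\poly(n,N)$ time per point. This yields total time $n^{O(\rng(\cF)^2\rho(d)\log(m)/\varepsilon^2)}$.

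The only step needing care is the constant bookkeeping that ensures both hypotheses of \Cref{cor:many} hold simultaneously, in particular the case where $\rng(\cF)$ is small relative to $\varepsilon$. The remainder of the argument is mechanical.
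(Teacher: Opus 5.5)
Your argument is correct and is essentially the paper's: invoke \Cref{cor:many} with $N=\Theta(\rng(\cF)^2\rho(d)\log m/\varepsilon^2)$ so that the failure probability drops below one, then output the whole grid $\{\beta/N:\beta\in\Naturals^n_N\}$ of realizations of $Z$ as the cover. Your handling of the case $\rng(\cF)<\varepsilon$ and of $m=1$ is slightly more careful than the paper's, and your parenthetical about the regime of $N$ is unnecessary, since $\binom{n+N-1}{N}\le n^{2N}$ holds for all $N\ge 1$ and $n\ge 2$ (bound it by $(2n)^N$ when $N\le n$ and by $2^{n+N-1}$ when $N>n$).
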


Notice that an exponential dependence on $\log(m)$ is needed. Indeed, this result can be used to prove the $n^{O(\log(n))}$ algorithm for constant approximations for $n$ actions bimatrix games of \citet{lipton2003playing} (see \Cref{sec:VI} for details), which is tight assuming the exponential time hypothesis for \PPAD \citep{rubinstein2017settling}.

\begin{remark}\label{rem:polytope}
It is easy to see that all our results will hold for generic convex hulls of $n$ points $\cV:=\textnormal{co}(v_1,\ldots, v_n)\subset\Reals^k$, without adding a dependence on the dimension $k$. Formally, for any polynomial $f$ and any $x\in\Delta_n$, we can define the polynomial $f'(x)=f(Vx):\Delta_n\to\Reals$ (where $V\in\Reals^{k\times n}$ is the matrix whose columns are $v_1,\ldots, v_n$). It is evident that $\rng(f')=\rng_{\cV}(f)$, (\emph{i.e.}, the range of $f'$ on $\Delta_n$ is the range of $f$ over $\cV$), and that given an $\varepsilon$-Cover of $f'$ over $\Delta_n$, one can find a $\varepsilon$-Cover of $f$ over $\cV$, simply by mapping each point through $V$.
\end{remark}

\section{Quasi-Polynomial Covers: Family of Polynomials over a Subset of the Simplex}\label{sec:many_subset}

En route to the result for arbitrary sets of \cref{sec:extension}, in this section we show how one can construct an $\varepsilon$-Cover of a family of polynomials over a \emph{polyhedral subset} of the simplex $\cH=\{x\in\Delta_n:Ax\le b\}$ (possibly with exponentially many vertices). As discussed in \Cref{sec:intro}, this extension requires new ideas, as 
even from a complexity standpoint there are major differences between the constrained and the unconstrained simplex case. For perspective, minimizing a polynomial on the simplex is \NP-hard, but admits a \PTAS for constant-degree polynomials. In contrast, approximating the minimum of a quadratic form on a \emph{subset} of the simplex requires at least quasi-polynomial time \citep{braverman2017eth}, assuming established complexity theoretic assumptions such as the Exponential Time Hypothesis.

    This might seem counterintuitive; indeed, one might be under the impression that it would be sufficient to take a cover of the family over $\Delta_n$, and intersect it with $\cH$. However, this simple guess is bound to fail. Intuitively, the cover produced on $\Delta_n$ is {extremely} sparse. In high dimensions, it might well be that none of the grid points of $\Delta_n$ fall in $\cH$. This simple construction must also fail due to the complexity considerations mentioned above. If one could simply minimize a bounded-range fixed-degree polynomial on a subset $\cH$ of the simplex by constructing a uniform grid on the simplex and taking the smallest value among the grid points that fall in $\cH$, one would obtain a \PTAS rather than a \QPTAS, and thus break the lower bound of \citet{braverman2017eth}.

    Crucially, the results in this section are much stronger than what can be obtained by reparameterizing $\cH$ as a convex hull of its vertices (\cref{rem:polytope}). In particular, on polyhedral subsets $\cH \subseteq \Delta_n$ with exponentially many vertices, the construction of this section yields an $\varepsilon$-Cover whose size scales as $n^{\log n}$ with respect to the number of variables $n$ (assuming constant range of the polynomial).

Formally, in this section, we prove the following result.
\begin{theorem}\label{thm:manySubset}
    Let $\cF$ be a family of degree-$d$ polynomials, and let $\cH\subseteq \Delta_n$ be a polytope defined by $k$ inequalities. There exists an algorithm that runs in time $\poly(k)\cdot n^{O(\rng(\cF)^{2}\rho(d)\log(n|\cF|)/\varepsilon^2)}$ that 
    returns a discrete set $\cX\subseteq \cH$ of size $n^{O(\rng(\cF)^{2}\rho(d)\log(n|\cF|)/\varepsilon^2)}$ which is an $\varepsilon$-Cover of $\cF$ over $\cH$.
\end{theorem}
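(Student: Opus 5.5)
We construct a pre-cover of $\Delta_n$ for an enlarged family of polynomials, and then project each pre-cover point into $\cH$ with a linear program. The pre-cover must control more than the original family: it must control every polynomial appearing in a recursive ``derivative'' decomposition of each $f\in\cF$.

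\textbf{Step 1: decomposition trees.} For each homogeneous $f\in\cF$ of degree $d$, write $f(x)=\sum_{i\in[n]}x_i g_i(x)$ with $g_i$ homogeneous of degree $d-1$. We take the Bernstein-coefficient form: $g_i$ collects the terms $c_\beta\frac{d!}{\beta!}x^{\beta-e_i}$ weighted by $\beta_i/d$, so that $g_i = \frac1d\partial_i f$ by Euler's identity. The Bernstein coefficients of $g_i$ are then a subset of those of $f$. By \Cref{lem:boundBerncoeff}, after subtracting a constant (a multiple of $(\sum_j x_j)^{d}$, split accordingly), every $g_i$ has range on $\Delta_n$ at most $\rho(d)\rng(f)$. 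Recursing $d$ times gives a tree of depth $d$ and branching factor $n$ whose nodes are at most $n^d$ polynomials per $f$. Each node has range $\rho(d)\rng(\cF)$, and the leaves are constants. Let $\cF'$ be the union of all nodes over all $f\in\cF$, so $|\cF'|\le |\cF|\, n^{d}$ and $\log|\cF'| = O(d\log(n|\cF|))$.

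\textbf{Step 2: pre-cover.} Apply \Cref{cor:manyImageSimplex} to $\cF'$ with accuracy $\varepsilon'=\varepsilon/\rho(d)$. This gives a set $\cX'\subseteq\Delta_n$ of size $n^{O(\rng(\cF)^2\rho(d)\log(n|\cF|)/\varepsilon^2)}$, with the property that every $x\in\cH$ has some $\tilde x\in\cX'$ satisfying $|g(x)-g(\tilde x)|\le\varepsilon'$ for all $g\in\cF'$. The point $\tilde x$ need not lie in $\cH$.

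\textbf{Step 3: anchored LP.} For each $\tilde x\in\cX'$, we look for $y\in\cH$ that approximately reproduces all tree values, certified level by level through linear constraints. For each node $g$ of degree $1$, i.e.\ a linear function $g(y)=\sum_i a_i y_i$, impose $|g(y)-g(\tilde x)|\le\varepsilon'$; this is linear in $y$. For a node $h=\sum_i x_i g_i$ of degree $2$, substitute the anchored values and impose $|\sum_i y_i\, g_i(\tilde x) - h(\tilde x)|\le \varepsilon'$. In general, for each internal node $h$ with children $g_i$, impose
\[
\Big|\sum_i y_i\, g_i(\tilde x) - h(\tilde x)\Big|\le \varepsilon',
\]
together with $Ay\le b$ and $y\in\Delta_n$. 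All constraints are linear in $y$, and there are $\poly(k)+|\cF|n^{O(d)}$ of them. Feasibility follows because the $x\in\cH$ that $\tilde x$ covers is itself feasible: for that $x$, $\sum_i x_i g_i(\tilde x)$ is within $\varepsilon'$ of $\sum_i x_i g_i(x)=h(x)$, since $x\in\Delta_n$ and each $|g_i(x)-g_i(\tilde x)|\le\varepsilon'$, and $h(x)$ is in turn within $\varepsilon'$ of $h(\tilde x)$. So we relax the constraint slack to $2\varepsilon'$. Output one feasible $y$ per feasible $\tilde x$ as $\cX$.

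\textbf{Step 4: error propagation (main obstacle).} We must show that any feasible $y$ satisfies $|h(y)-h(\tilde x)|\le O(d)\varepsilon'$ for every node $h$, so that in particular each $f\in\cF$ is approximated to within $\varepsilon$ at some $y\in\cX$. We argue by induction on degree, from the leaves upward. Assume $|g_i(y)-g_i(\tilde x)|\le \delta_{\ell-1}$ for every child $g_i$ at level $\ell-1$. Then
\[
|h(y)-h(\tilde x)|\le \Big|\sum_i y_i\big(g_i(y)-g_i(\tilde x)\big)\Big|+2\varepsilon'\le \delta_{\ell-1}+2\varepsilon',
\]
using $y\in\Delta_n$. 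Hence $\delta_d=O(d\varepsilon')$. The delicate point is that these ranges must be bounded on the whole simplex, not just on $\cH$: the pre-cover guarantees about $g_i$ are used at $\tilde x\notin\cH$. This is why Step 1 must control the children's ranges on $\Delta_n$ through the Bernstein coefficients, and why the constant shift must be handled consistently along the tree.

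Combining the steps, $|\cX|\le|\cX'|$, and the running time is $|\cX'|$ LPs of size $\poly(k)\,n^{O(d)}$, which gives the stated bounds.
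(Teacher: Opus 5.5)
Your proposal is correct and follows essentially the same route as the paper: the derivative/Bernstein tree decomposition of \Cref{lem:decomposition} (your $g_i=\frac1d\partial_i f$ is exactly the paper's $g_i$), a $\Theta(\varepsilon/d)$-pre-cover of the enlarged family on $\Delta_n$ via \Cref{cor:manyImageSimplex}, and the anchored feasibility LP with slack $2\varepsilon'$, feasible because the covered point $x$ itself satisfies it. The only difference is the induction invariant: you bound $|h(y)-h(\tilde x)|$ directly, whereas the paper inducts on $|h(x)-h(\hat x)|$ and passes through the true point $x$ at every level. Your version is slightly cleaner and gives final error $(2d+1)\varepsilon'$ instead of the paper's $4d\delta$.
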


We will extend the result to arbitrary convex subsets of the simplex (as long as they admit an efficient linear optimization oracle) in \Cref{rem:convex}.
As a direct corollary, by enumerating over the $\varepsilon$-Cover produced by \Cref{thm:manySubset}, we obtain the following.
\begin{corollary}\label{cor:min_subset}
    Let $f$ be a degree-$d$ polynomial, and $\cH \subseteq \Delta_n$ be a polytope defined by $k$ inequalities. There exists an algorithm that runs in time $\mathrm{poly}(k)\cdot n^{O(\rng(f)^{2}\rho(d)\log(n)/\varepsilon^2)}$ and return an $x\in \cH$ such that $f(x)\le \min_{x\in\cH} f(x)+\varepsilon$.
\end{corollary}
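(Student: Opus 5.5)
The statement to prove is \Cref{cor:min_subset}. We derive it from \Cref{thm:manySubset} with the singleton family $\cF=\{f\}$, so that $|\cF|=1$ and $\rng(\cF)=\rng(f)$. The theorem then gives, in time $\poly(k)\cdot n^{O(\rng(f)^2\rho(d)\log(n)/\varepsilon^2)}$, a finite set $\cX\subseteq\cH$ that is an $\varepsilon$-Cover of $\{f\}$ over $\cH$. Its size is bounded by the same quantity.

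The algorithm is as follows.
\begin{enumerate}
\item Build $\cX$ using \Cref{thm:manySubset}.
\item Evaluate $f$ at every point of $\cX$. Each evaluation costs $O(n^d)$ arithmetic operations, since $|\Naturals^n_d|=O(n^d)$ and $d$ is constant; this is absorbed into $n^{O(\cdot)}$.
\item Output $\hat x\in\arg\min_{x'\in\cX} f(x')$.
\end{enumerate}

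For correctness, note that $\cH$ is compact and $f$ is continuous, so a minimizer $x^\star\in\cH$ of $f$ over $\cH$ exists. By \Cref{def:cover}, there is some $x'\in\cX$ with $|f(x')-f(x^\star)|\le\varepsilon$. Hence
\[
f(\hat x)\le f(x')\le f(x^\star)+\varepsilon=\min_{x\in\cH}f(x)+\varepsilon .
\]
Moreover $\hat x\in\cX\subseteq\cH$, so the output is feasible.

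The corollary adds nothing beyond \Cref{thm:manySubset}, and there is no real obstacle. The only points to check are that enumeration adds just a polynomial factor in $|\cX|$, and that the guarantee $\cX\subseteq\cH$ makes the returned point feasible.
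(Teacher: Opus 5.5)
Your proposal is correct and is the paper's argument: the paper states that the corollary follows "by enumerating over the $\varepsilon$-Cover produced by \Cref{thm:manySubset}". That is exactly what you do — instantiate the theorem with the singleton family $\{f\}$ (so $\log(n|\cF|)=\log n$), evaluate $f$ on the cover, and return the minimizer, which is within $\varepsilon$ of $\min_{x\in\cH}f(x)$ by applying the cover property at a minimizer $x^\star\in\cH$.
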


\subsection{Proof of \Cref{thm:manySubset}}

In this section, we will provide a proof of our main result, namely, \Cref{thm:manySubset}. The proof's idea is to compress a ``pre-cover'' for (an enlarged family of) $\cF$ over $\Delta_n$ to a cover over $\cH$ by using two technical tools: 1) feasibility linear programs to generate the cover over $\cH$, and 2) recursively decomposing the polynomials of degree $k$ into components of lower degree $k-1$.

To form the basis of the induction, we start from the case of degree-one polynomials. %

\subsubsection{Warm Up: Linear Functions ($d=1$)}

Consider a set $\cH=\{x\in\Delta_n: Ax\le b\}$ and a family $\cF$ of $m$ degree-$1$ polynomials that we can represent with a $m\times n$ matrix $B$. In \Cref{cor:manyImageSimplex}, we showed how to build a $\varepsilon$-Cover $\cX^{\cF}_{\Delta_n}$ of $\cF$ over the entire simplex $\Delta_n$.

We are going to compress the ``pre-cover'' $\cX^{\cF}_{\Delta_n}$ over $\Delta_n$ to a cover $\cX^{\cF}_{\cH}$ over $\cH$ by solving a linear feasibility program for each element in the cover. %
Specifically, for each $\tilde x\in \cX^{\cF}_{\Delta_n}$, we can consider the feasibility LP $\cP(\tilde x)$ given by 
\Cref{eq:feas1} of \Cref{alg:d=1}.
If $\cP(\tilde x)$ is feasible, we add its solution to the cover $\cX^{\cF}_{\cH}$.

We now argue that the set $\cX_\cH^{\cF}$ returned by \Cref{alg:d=1} is indeed an $\varepsilon$-Cover of $\cF$ over $\cH$ of quasi-polynomial size, and the algorithm runs in quasi-polynomial time.
First, we prove that the set built by the algorithm is indeed an $\varepsilon$-Cover of $\cF$ over $\cH$.
Consider any $x\in\cH$. Since $\cX_{\Delta_n}^{\cF}$ is an $\varepsilon/2$-Cover on the whole simplex, at least one point $\tilde x\in\cX_{\Delta_n}^{\cF}$ is such that $|f(x)-{f}(\tilde x)|\le\varepsilon/2$ for all $f\in \cF$.
Moreover, consider the feasibility LP $\cP(\tilde x)$ for that specific point. This LP is feasible, since $x$ itself is a solution. Thus, $\cX_{\cH}^{\cF}$ contains a solution $\hat x$ (possibly different from $x$)  of $\cP(\tilde x)$.
Then, for all $f\in\cF$, we can write the chain of inequalities
    \begin{align*}
        |f(x)-f(\hat x)|&\le |f(x)-f(\tilde x)|+|f(\tilde x)-f(\hat x)|     \le\frac\varepsilon2+\frac\varepsilon2,
    \end{align*}
    where the last inequality holds by definition of $\tilde x$ and by feasibility of $\hat x$ with respect to~$\cP(\tilde x)$.

    Now, for the second part of the statement, it is clear that solving $\cP(\tilde x)$ (or certifying its infeasibility) takes at most polynomial time for each $\tilde x$, and we have to consider at most $|\cX_{\Delta_n}^{\cF}|$ many such LPs. Thus, \cref{alg:d=1} runs in time $\poly(k)\cdot n^{O(\rng(\cF)^2\cdot\log(m)/\varepsilon^2)}$, and the size of the produced cover is at most $n^{O(\rng(\cF)^2\cdot\log(m)/\varepsilon^2)}$.

\begin{algorithm}[!ht]
    \caption{Construction of an $\varepsilon$-Cover for a family of \emph{linear} functions}\label{alg:d=1}
    \begin{algorithmic}[1] %
        \Require{A family of degree $1$ polynomials $\cF$ and a polyhedral $\cH\subseteq \Delta_n$}
        \State{Build an $\frac\varepsilon2$-Cover $\cX_{\Delta_n}^{\cF}$ of $\cF$ over $\Delta_n$}
        \State Define the $m\times n$ matrix $B$, where each row is one of the polynomials of $\cF$
        \State Initialize $\cX_{\cH}^{\cF}\gets\{\emptyset\}$
        \For{$\tilde x\in\cX_{\Delta_n}^{\cF}$}
        \State Let \begin{align}\label{eq:feas1}
        \cP(\tilde x):\quad\text{find $x\in\Delta_n$ \quad s.t.\quad} 
        \begin{cases}
        Ax\le b\\
        \|B\tilde x-Bx\|_\infty\le \frac{\varepsilon}{2}
        \end{cases}
        \end{align}
            \If{$\cP(\tilde x)$ is feasible}
                \State Add a solution of $\cP(\tilde x)$ to $\cX_{\cH}^{\cF}$
            \EndIf
        \EndFor
    \end{algorithmic}
\end{algorithm}

\subsubsection{The Case $d>1$}

To extend this approach to degrees greater than one, we need the following technical lemma, which shows how any polynomial $f$ of degree $d$ on the simplex can be written as a convex combination of $n$ polynomials (related to the derivatives of $f$) of degree $d-1$. Importantly, such polynomials have a range that is only a $d$-dependent multiplicative factor larger than the range of the original polynomial.

\begin{lemma}\label{lem:decomposition}
Let $f(x)=\sum_{\beta\in\Naturals_{d}^n} c_\beta \cB^d_{\beta}(x)$ be a degree $d$ polynomial expressed in the Bernstein basis. Then, for all $x\in\Delta_n$, we can write $f(x)=\sum_{i\in[n]} x_i g_i(x)$ where
\[
    g_i(x)=f(0)+\sum_{\alpha\in\Naturals_{d-1}^n}c_{\alpha+e_i} \cB_{\alpha}^{d-1}(x).
\]
Moreover, $\rng(g_i)\le \rho(d) \cdot \rng(f)$ for all $i\in[n]$.
\end{lemma}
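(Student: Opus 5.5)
The plan is to verify the identity $f(x)=\sum_i x_i g_i(x)$ by a direct coefficient computation in the Bernstein basis, and then to bound $\rng(g_i)$ by the spread of the Bernstein coefficients of $f$, using \Cref{lem:boundBerncoeff}.

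\textbf{The constant term.} By the convention of the preliminaries, $f$ is homogeneous of degree $d\ge 1$, so $f(0)=0$ and that term in $g_i$ vanishes. In any case, a constant $f(0)$ added to every $g_i$ contributes $f(0)\sum_i x_i=f(0)$ on $\Delta_n$. It therefore only shifts values and never affects ranges.

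\textbf{The identity.} I will expand
\[
\sum_{i\in[n]} x_i\sum_{\alpha\in\Naturals^n_{d-1}} c_{\alpha+e_i}\,\frac{(d-1)!}{\alpha!}\,x^{\alpha}
=\sum_{i\in[n]}\sum_{\alpha\in\Naturals^n_{d-1}} c_{\alpha+e_i}\,\frac{(d-1)!}{\alpha!}\,x^{\alpha+e_i}
\]
and regroup by $\beta=\alpha+e_i\in\Naturals^n_d$. For fixed $\beta$, the contributing indices are exactly those $i$ with $\beta_i\ge1$, and for each such $i$ we have $(\beta-e_i)!=\beta!/\beta_i$. The coefficient of $c_\beta x^\beta$ is then
\[
\frac{(d-1)!}{\beta!}\sum_{i:\beta_i\ge 1}\beta_i=\frac{(d-1)!}{\beta!}\cdot d=\frac{d!}{\beta!},
\]
since $\sum_i\beta_i=d$. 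This reconstructs $\sum_\beta c_\beta\cB^d_\beta(x)=f(x)$. This is an identity of polynomials, so it holds in particular on $\Delta_n$.

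\textbf{The range bound.} The key observation is that the Bernstein basis $\{\cB^{d-1}_\alpha\}_{\alpha\in\Naturals^n_{d-1}}$ is nonnegative on $\Delta_n$ and sums to $(\sum_j x_j)^{d-1}=1$ there, by the multinomial theorem. Hence for every $x\in\Delta_n$, the value $g_i(x)-f(0)$ is a convex combination of the coefficients $\{c_{\alpha+e_i}\}_\alpha$. These form a subset of the Bernstein coefficients $\{c_\beta\}_\beta$ of $f$, so
\[
\rng(g_i)\le\max_\beta c_\beta-\min_\beta c_\beta\le\rho(d)\cdot\rng(f),
\]
where the last step is \Cref{lem:boundBerncoeff}.

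No step is a genuine obstacle. The only delicate point is the bookkeeping in the regrouping, namely correctly counting the multiplicity $\beta_i$ arising from $(\beta-e_i)!$.
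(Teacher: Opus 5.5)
Your proof is correct and has the same structure as the paper's. The range bound is exactly the paper's argument: on $\Delta_n$, $g_i-f(0)$ is a convex combination of a subset of the Bernstein coefficients of $f$, and then \Cref{lem:boundBerncoeff} applies.

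The only difference is how the identity is verified. You regroup by $\beta=\alpha+e_i$ and count the multiplicity $\beta_i$. The paper instead writes $f(x)=f(0)+\int_0^1\nabla f(tx)^\top x\,dt$ and integrates the derivative of the Bernstein basis along the ray $t\mapsto tx$. Both amount to Euler's identity for homogeneous polynomials, and your version has the mild advantage of making explicit that $f(0)=0$.
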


\begin{proof}
    We write $f(x)=f(0)+\int_{0}^1\frac{d}{dt} f(tx)dt$. By the chain rule we have $\frac{d}{dt}f(tx)=\nabla f(tx)^\top x$, from which we obtain:
    \begin{align}\label{eq:tmp1}
    f(x)-f(0)=\int_0^1 \sum_{i\in[n]}x_i\partial_{x_i}f(tx) dt=\sum_{i\in[n]}x_i\int_0^1\partial_{x_i}f(tx)dt
    \end{align}
    Recall the definition of the Bernstein basis: $\cB_\alpha^d(x)=\frac{d!}{\alpha!}x^\alpha$ for all $\alpha\in\Naturals^n_{d}$. Now, it is clear that
    \[
    \partial_{x_i} \cB_\alpha^d(x)=\mathbb{I}(\alpha_i>0)\frac{d!}{\alpha!}\alpha_i x^{\alpha-e_i}=\mathbb{I}(\alpha_i>0)d\frac{(d-1)!}{(\alpha-e_i)!}x^{\alpha-e_i}=d \cdot\mathbb{I}(\alpha_i>0)\cB^{d-1}_{\alpha-e_i}(x).
    \]
    Thus,
    \[
    \partial_{x_i} f(tx)=d\sum_{\alpha\in\Naturals_{d}^n} \mathbb{I}(\alpha_i>0)c_\alpha \cB_{\alpha-e_i}^{d-1}(tx)=d\sum_{\alpha\in\Naturals_{d}^n} \mathbb{I}(\alpha_i>0)c_\alpha \cB_{\alpha-e_i}^{d-1}(x) t^{d-1},
    \]
    from which we obtain
    \[
    \int_0^1\partial_{x_i} f(tx)dt=d\sum_{\alpha\in\Naturals_{d}^n} \mathbb{I}(\alpha_i>0)c_\alpha \cB_{\alpha-e_i}^{d-1}(x)  \int_0^1 t^{d-1} dt = \sum_{\alpha\in\Naturals_{d}^n} \mathbb{I}(\alpha_i>0)c_\alpha \cB_{\alpha-e_i}^{d-1}(x)=\sum_{\alpha\in\Naturals^n_{d-1}} c_{\alpha+e_i} \cB_\alpha^{d-1}(x),
    \]
    which concludes the first part of the proof. Indeed, by using the last identity in \Cref{eq:tmp1} we obtain:
    \begin{align*}
    f(x)&=f(0)+\sum_{i\in[n]}x_i \cdot \sum_{\alpha\in\Naturals^n_{d-1}} c_{\alpha+e_i} \cB_\alpha^{d-1}(x)=\sum_{i\in[n]} x_i\left(f(0)+\sum_{\alpha\in\Naturals^n_{d-1}} c_{\alpha+e_i} \cB_\alpha^{d-1}(x)\right).
    \end{align*}

    Now, for the second part of the lemma, we observe that 
    \[
    g_i(x)-f(0)=\sum_{\alpha\in\Naturals^n_{d-1}} c_{\alpha+e_i} \cB_\alpha^{d-1}(x)
    \]
    and, thanks to the multinomial theorem, we also have that for all $x\in\Delta_n$:
    \[
    1=\left(\sum_{i=1}^nx_i\right)^{d-1}=\sum_{\beta\in\Naturals_{d-1}^n}\frac{(d-1)!}{\beta!}x^{\beta}=\sum_{\beta\in\Naturals_{d-1}^n} \cB^{d-1}_\alpha(x).
    \]
    Thus, $g_i(x)-f(0)$ is a convex combination of (some of) the coefficients $\{c_{\alpha}\}_{\alpha\in\Naturals_{d}^n}$ and thus
    \[
    g_i(x)-f(0)\le \max_{\alpha\in\Naturals_{d}^n}c_\alpha
    \quad\text{and}\quad
    g_i(x)-f(0)\ge\min_{\alpha\in\Naturals_{d}^n}c_\alpha.
    \]
    Hence, we can conclude that
    \[
    \rng(g_i)=\sup_{x\in\Delta_n}g_i(x)-\inf_{x\in\Delta_n}g_i(x)\le \max_{\alpha\in\Naturals_{d}^n}c_\alpha-\min_{\alpha\in\Naturals_{d}^n}c_\alpha\le \rho(d)\cdot\rng(f).
    \]
    where the last inequality is from \citet[Theorem 3]{de2015alternative}.
\end{proof}

Now, by using \Cref{lem:decomposition}, we can extend the algorithm from degree $1$ to degree $d$ polynomials by recursively decomposing each polynomial.
\begin{figure}
    \centering
    \scalebox{0.8}{\tikzset{every picture/.style={line width=0.75pt}} %

\begin{tikzpicture}[
    x=0.75pt,y=0.75pt,yscale=-1,xscale=1,
    every node/.style={
        font=\large,
        fill=white,
        inner sep=4pt,
        draw=none
    }
]

\node (root) at (370,310) {$4x_{1}^{3} +2x_{1}^{2} x_{2} +x_{1} x_{2}^{2}$};

\node (L1) at (240,410) {$4x_{1}^{2} +\frac{4}{3} x_{1} x_{2} +\frac{1}{3} x_{2}^{2}$};
\node (R1) at (500,410) {$\frac{2}{3} x_{1}^{2} +\frac{2}{3} x_{1} x_{2}$};

\node (L2a) at (170,510) {$4x_{1} +\frac{4}{6} x_{2}$};
\node (L2b) at (310,510) {$\frac{4}{6} x_{1} +\frac{1}{3} x_{2}$};
\node (R2a) at (430,510) {$\frac{2}{3} x_{1} +\frac{1}{3} x_{2}$};
\node (R2b) at (570,510) [inner sep=0pt] {$\frac{2}{6} x_{1}$};

\draw (root) -- (L1) node [midway] {$x_1$};
\draw (root) -- (R1) node [midway] {$x_2$};
\draw (L1)   -- (L2a) node [midway] {$x_1$};
\draw (L1)   -- (L2b) node [midway] {$x_2$};
\draw (R1)   -- (R2a) node [midway] {$x_1$};
\draw (R1)   -- (R2b) node [midway] {$x_2$};

\end{tikzpicture}}
    \caption{Example of the tree decomposition of the polynomial $f(x)=4x_1^3+2x_1^2x_2+x_1x_2^2\in\Reals_3[x_1,x_2]$. For instance, looking at the right subtree, for which we have $\frac23x_1^2+\frac23x_1x_2=x_1\cdot\left(\frac23x_1+\frac13x_2\right)+x_2\cdot \frac26x_1$.}
    \label{fig:tree}
\end{figure}
Formally, for each polynomial $f\in\cF$, we build a tree $\cT(f)$ of depth $d$ with the following properties (see \Cref{fig:tree} for an example).
\begin{enumerate}[label=$\arabic*)$]
    \item The root node is the polynomial $f$;
    \item For each node $N$ at layer $k$ (thus of degree $d-k+1$), we apply the decomposition of \Cref{lem:decomposition} to generate a family of $n$ polynomials of degree $d-k$, which constitute the children of $N$;
    \item At layer $k$, we only have polynomials of degree $d-k+1$;
    \item The tree has a branching factor of $n$ and depth $d$;
    \item For each node/polynomial $g$ at layer $k\le d-1$, i.e., of degree at least two, we denote with $\child[g]$ the set of children of $g$, and in particular, for $i\in[n]$, $\child_i[g]$ is the $i$-th child of $g$; 
    \item The leaves of the tree are degree 1 polynomials;
    \item For each leaf $g$ of the tree, which corresponds to a degree-$1$ polynomial $g(x)=b^\top x$, we define, with a slight abuse of notation, $\child_i[g]=b_i$;
    \item By construction, it holds that $\sum_{i\in[n]} x_i \cdot \child_i[g](x)=g(x)$ for all $g\in\cT(f)$;
    \item There are at most $n^d$ nodes.
\end{enumerate}

\begin{algorithm}[t]
    \caption{Construction of an $\varepsilon$-Cover for a family of degree-$d$ polynomials}\label{alg:d}
    \begin{algorithmic}[1] %
        \Require{A family of degree $d$ polynomials $\cF$}
        \State Set $\delta=\frac{\varepsilon}{4d}$

        \For{$f\in\cF$}\Comment{Recursively decompose $f$}
            \State Create recursively $\cT(f)$ by repeatedly applying \Cref{lem:decomposition}
        \EndFor

        \State Define the family $\widehat{\cF}=\{g\in\cT(f):f\in\cF\}$

        \State{Build an $\delta$-Cover $\cX_{\Delta_n}^{\widehat{\cF}}$ of $\widehat{\cF}$ over $\Delta_n$}

        \State Initialize $\cX_{\cH}^{\cF}\gets\{\emptyset\}$
        \For{$\tilde x\in\cX_{\Delta_n}^{\widehat{\cF}}$}
        \State Let \begin{align}\label{eq:feas2}
        \cP(\tilde x):\quad\text{find $x\in\Delta_n$ \quad s.t.\quad} 
        \begin{cases}
        Ax\le b\\
        \left|\sum_{i\in[n]}x_i\cdot \child_i[\hat f](\tilde x)-\hat f(\tilde x)\right|\le 2\delta\quad\forall \hat f\in\widehat{\cF}
        \end{cases}
        \end{align}
            \If{$\cP(\tilde x)$ is feasible}
                \State Add a solution of $\cP(\tilde x)$ to $\cX_{\cH}^{\cF}$
            \EndIf
        \EndFor
    \end{algorithmic}
\end{algorithm}

For a family $\cF$ of polynomials of degree $d$, we define a family $\widehat\cF$ of $O(|\cF|\cdot n^d)$ polynomials (of degree at most $d$) as the union of all nodes of the $|\cF|$ trees:
\[
\widehat \cF = \left\{ g\in \cT(f): f\in\cF\right\}.
\]

Next, we consider the $\delta$-Cover $\cX_{\Delta_n}^{\widehat\cF}$ of this family over the entire simplex (for some $\delta>0$ not yet defined). This serves the role as the simplex ``pre-cover'' of the previous section, and thus we would like to compress it to a cover over $\cH$.
In order to do this, we proceed according to the steps outlined in \Cref{alg:d}. For each $\tilde x\in\cX_{\Delta_n}^{\widehat\cF}$ we are going to instantiate the following feasibility LP:
\[
\cP(\tilde x): \quad \text{find $x\in\Delta_n$\quad s.t.\quad}
\begin{cases}
Ax\le b\\
\left|\sum_{i\in[n]} x_i\cdot \child_i[\hat f](\tilde x) - \hat f(\tilde x)  \right|\le2\delta\quad \forall \hat f\in \widehat\cF,
\end{cases}
\]
and add its solution to $\cX_\cH^{\cF}$ when $\cP(\tilde x)$ is feasible. 

For all $x\in\cH$ we let $\tilde x\in\cX_{\Delta_n}^{\widehat\cF}$ be such that $|\hat f(x)-\hat f(\tilde x)|\le \delta$ for all $\hat f$, which is guaranteed to exists by definition of $\delta$-Cover. As before, we remark that the corresponding $\cP(\tilde x)$ must be feasible, as $x$ itself is a solution to $\cP(\tilde x)$. Indeed $x\in \cH$, and moreover
\begin{align*}
    \left|\sum\nolimits_{i\in[n]}x_i\cdot \child_i[\hat f](\tilde x)-\hat f(\tilde x)\right|&\le \left|\sum\nolimits_{i\in[n]}x_i\cdot \child_i[\hat f](\tilde x)-\hat f( x)\right|+\left|\hat f( x)-\hat f(\tilde x)\right|\\
    &\le\left|\sum\nolimits_{i\in[n]}x_i \cdot (\child_i[\hat f](\tilde x)-\child_i[\hat f]( x))\right|+\delta\\
    &\le \max_{i\in[n]}\left|\child_i[\hat f](\tilde x)-\child_i[\hat f]( x)\right|+\delta\\
    &\le 2\delta,
\end{align*}
where the second-to-last inequality holds thanks to our choice of $\tilde x$. 
This proves that $\cP(\tilde x)$ is feasible.

Now, let $\hat x$ denote the solution to $\cP(\tilde x)$ that is added to $\cX_\cH^{\cF}$.
We prove the correctness of the approximation by induction. 
Consider a generic degree $k\le d$ and any node/polynomial $\hat f$ of degree $k$ in the union of trees $\widehat \cF$.
If $k=1$, then $\hat f=b^\top x$ for some $b$ and we can simply consider:
\[
|b^\top x-b^\top \hat x|\le |b^\top x-b^\top \tilde x|+|b^\top \tilde x-b^\top \hat x|\\
\le \delta+2\delta = 3\delta,
\]
where the last inequality comes from our choice of $\tilde x$ (which is in the cover $\cX_{\Delta_n}^{\widehat \cF}$) and the fact that $\hat x$ is feasible for $\cP(\tilde x)$.

Let us now consider a generic degree $1\neq k\le d$ and any node $\hat f$ in $\widehat \cF$ of that degree.
As induction hypothesis, we assume that $|g(x)-g(\hat x)|\le 4(k-1)\delta$ for all polynomials $g$ of degree $k-1$ in $\widehat \cF$. Notice that we already proved that for $k-1=1$ the induction hypothesis is verified.
Then, 
\begin{align*}
    |\hat f(x)-\hat f(\hat x)| &\le |\hat f(x)-\hat f (\tilde x)|+|\hat f(\tilde x)-f(\hat x)|.
\end{align*}

The first term is easy to control. Indeed, by our choice of $\tilde x$, we have that $|\hat f(x)-\hat f(\tilde x)|\le \delta$. On the other hand, the second term requires further care.
Consider the following:
\begin{align*}
    |\hat f(\tilde x)-\hat f(\hat x)|&=\left|\sum_{i\in[n]} \tilde x_i\cdot \child_i[\hat f](\tilde x)-\hat f (\hat x)\right|\tag{\Cref{lem:decomposition}}\\
    &\le  \left|\sum_{i\in[n]} (\tilde x_i-\hat x_i)\cdot \child_i[\hat f](\tilde x)\right|+\left|\sum_{i\in[n]} \hat x_i\cdot \child_i[\hat f](\tilde x)-\hat f (\hat x)\right|\\
    &\le 2\delta+\left|\sum_{i\in[n]} \hat x_i\cdot \child_i[\hat f](\tilde x)-\hat f (\hat x)\right|\tag{Since $\hat x$ is feasible for $\cP(\tilde x)$}\\
    &\le 2\delta+\max_{i\in[n]} \left|\child_i[\hat f](\tilde x)-\child_i[\hat f](\hat x)\right|,
\end{align*}
where in the last inequality we used that $\hat f(\hat x)=\sum_{i\in[n]}\hat x_i\cdot \child_i[\hat f](\hat x)$ (by construction of the tree) together with H\"older's inequality. To control the last term, we use once again the triangle inequality:
\begin{align*}
    \max_{i\in[n]} \left|\child_i[\hat f](\tilde x)-\child_i[\hat f](\hat x)\right|&\le \max_{i\in[n]}\left|\child_i[\hat f](\tilde x)-\child_i[\hat f](x)\right|+\max_{i\in[n]}\left|\child_i[\hat f](x)-\child_i[\hat f](\hat x)\right|\\
    &\le \delta+4(k-1)\delta,
\end{align*}
where, in the last inequality, we used the properties of our choice of $\tilde x\in\cX_{\Delta_n}^{\widehat \cF}$ and the induction hypothesis (since $\child_i[\hat f]\in \widehat \cF$ and has degree $k-1$).
Thus, piecing everything together, we obtain that for every $x\in\cH$, there exists $\hat x \in \cX_{\cH}^{\cF}$ such that
\begin{align*}
    |\hat f(x)-\hat f(\hat x)|\le 4\delta+4(k-1)\delta=4k\delta.
\end{align*}

By choosing $\delta=\frac{\varepsilon}{4d}$, we can apply this to every root of the $\cF$ trees, concluding that, for every $x\in\cH$, there exists $\hat x\in \cX_{\cH}^{\cF}$ such that 
\[
|f(x)-f(\hat x)|\le \varepsilon,
\]
showing that $\cX_{\cH}^{\cF}$ is indeed a cover of the family $\cF$ over $\cH$.

To conclude the proof, we only have to analyze the running time and the size of the produced cover. It is clear that both these quantities are upper bounded by the size of the ``uncompressed'' cover $\cX_{\Delta_n}^{\widehat \cF}$, which by \Cref{cor:manyImageSimplex} is $n^{O(\rng(\widehat\cF)^2\rho(d)\log(|\widehat\cF|)/\varepsilon^2)}$. By \Cref{lem:decomposition}, we know that $\rng(\widehat\cF)$ is at most $\rho(d)\cdot \rng(\cF)$, since in $\widehat \cF$ we only placed polynomials coming from the (iterated) decomposition given by \Cref{lem:decomposition}. Moreover, we already observed that $|\widehat\cF|\le O(|\cF|\cdot n^d)$. This gives $|\cX_{\Delta_n}^{\widehat \cF}|=n^{O(\rng(\cF)^{2}\rho(d)\log(n|\cF|)/\varepsilon^2)}$, as desired.

\begin{remark}\label{rem:convex}
    It is not difficult to see that instead of polyhedral subsets of the simplex $\cH=\{x\in\Reals^n:Ax\le b,x\in\Delta_n\}$, we could also consider convex subsets $\cH=\{x\in\Reals^n:h(x)\le 0, x\in\Delta_n\}$, where $h:\Reals^n\to\Reals$ is a convex function, by replacing the linear feasibility problem \eqref{eq:feas2} in \Cref{alg:d}, with the convex feasibility problem
    \[
        \cP(\tilde x):\quad\text{find $x\in\Delta_n$ \quad s.t.\quad} 
        \begin{cases}
        h(x)\le 0\\
        \left|\sum_{i\in[n]}x_i\cdot \child_i[\hat f](\tilde x)-\hat f(\tilde x)\right|\le 2\delta\quad\forall \hat f\in\widehat{\cF}.
        \end{cases}
    \]
    However, in the case of convex sets, we necessarily have to (slightly) relax the definition of a cover, since the set $\cH$ may now only contain irrational points. 
    
    Instead, by standard results of the ellipsoid algorithm \citep{grotschel1981ellipsoid}, we can guarantee that we build a set $\cX_{\cH}^{\cF}$ such that for each $x\in\cH$, there exists $\hat x\in \cX_{\cH}^{\cF}$, so that all functions $f\in\cF$ are well approximated, but $h(\hat x)\le \beta$, and we need to spend an extra multiplicative $\log(1/\beta)$ factor in time for building the cover.
\end{remark}

\section{The General Convex Case} \label{sec:extension}

Building on top of the results in \cref{sec:many_subset}, in this section we consider the construction of $\varepsilon$-Covers for arbitrary convex compact sets. This generalization will be crucial for several of the applications considered in \cref{sec:applications}, including Constraint Satisfaction Problems and Free Games, both of which can be algebrized on sets that are more akin to hypercubes than to simplices. En route to the result, we will single out the maximum range of the polynomials on the smallest $\ell_1$ ball inscribing $\cH$ as a key driver of the complexity of the resulting covers.

To simplify the treatment without loss of generality, we consider a generic nonempty convex and compact set $\cH \subset \mathbb{R}^n_{\ge0}$ lying in the nonnegative orthant.%
We embed $\cH$ into the $(n+1)$-dimensional simplex $\Delta_{n+1}$ by introducing a dummy dimension and scaling the set by its $\ell_1$-diameter $\|\cH\|_1 \coloneqq \max_{x \in \cH} \|x\|_1$. Note that unless $\cH$ is the trivial set $\cH = \{0\}$, $\|\cH\|_1 > 0$.
Specifically, we consider the embedding $\psi: \cH \to \Delta_{n+1}$ defined as
\[
\psi(x) = \left( \frac{x_1}{\|\cH\|_1}, \dots, \frac{x_n}{\|\cH\|_1}, 1 - \frac{\|x\|_1}{\|\cH\|_1} \right).
\]

By construction, the components of $\psi(x)$ are non-negative and sum to $1$, i.e., $\psi(x)\in \Delta_{n+1}$ for each $x \in \cH$. It can be easily shown that the image $\psi(\cH)$ is a nonempty convex subset of the simplex.

Now, we cannot simply use $\rng(f)$ in our bound, since we want to recover approximations for the original optimization problem. Instead, we will use the range of the polynomial over the smallest $\ell_1$-ball that inscribes $\cH$. Formally, we let $\mathcal{B}_1(\cH)=\{x\in \Reals_{\ge0}^n: \lVert x\rVert_1\le \|\cH\|_1\}$. Then, we define the range of a function $f$ as
\[
   \widetilde{\rng}(f) = \max_{x \in \mathcal{B}_1(\cH)} f(x) - \min_{x \in \mathcal{B}_1(\cH)} f(x),
\]
where for the ease of exposition we ignore the dependency on the set $\cH$.
Moreover, we define $\widetilde{\rng}(\mathcal{F}) = \max_{f \in \mathcal{F}} \widetilde{\rng}(f)$. This leads to the following corollary:

\begin{theorem}\label{thm:manySubsetArbitrary}
    Let $\cF$ be a family of degree $d$ polynomials, and let $\cH\subset \Reals^n_{\ge0}$ be a polytope defined by $k$ inequalities. There exists an algorithm that runs in time $\poly(k)\cdot n^{O(\widetilde{\rng}(\cF)^2\rho(d)\log(n|\cF|)/\varepsilon^2)}$ that 
    returns a discrete set $\cX\subseteq \cH$ of size $n^{O(\widetilde{\rng}(\cF)^{2}\rho(d)\log(n|\cF|)/\varepsilon^2)}$ which is an $\varepsilon$-Cover of $\cF$ over $\cH$.
\end{theorem}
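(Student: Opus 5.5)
The plan is to reduce \Cref{thm:manySubsetArbitrary} to \Cref{thm:manySubset} through the embedding $\psi$. Let $R\coloneqq\|\cH\|_1$. First compute $R$ with a single LP, $\max\{\mathbf 1^\top x : x\in\cH\}$, which takes $\poly(k)$ time; if $R=0$ then $\cH=\{0\}$ and the statement is trivial. The image $\cH'\coloneqq\psi(\cH)\subseteq\Delta_{n+1}$ is again a polytope. Explicitly,
\[
\cH'=\{y\in\Delta_{n+1} : A(Ry_{1:n})\le b\}.
\]
Nonnegativity of $y_{1:n}$ is inherited from $\cH\subseteq\Reals^n_{\ge0}$, and $y_{n+1}\ge0$ encodes $\|x\|_1\le R$. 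So $\cH'$ is described by $k$ inequalities plus the simplex constraints. The map $\psi$ is an affine bijection onto $\cH'$, with inverse $\phi(y)=R\,y_{1:n}$.

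Next, pull back the family. For each $f\in\cF$, define the polynomial $f'(y)\coloneqq f(Ry_1,\dots,Ry_n)$ on $\Reals^{n+1}$. It has degree at most $d$, and as usual we homogenize it using $(\sum_{i\le n+1}y_i)^{d-j}$, which leaves its values on $\Delta_{n+1}$ unchanged. The key observation is that
\[
\phi(\Delta_{n+1})=\{x\in\Reals^n_{\ge0}:\|x\|_1\le R\}=\mathcal B_1(\cH),
\]
because $y_{n+1}$ exactly absorbs the slack. Consequently $\rng(f')=\rng_{\Delta_{n+1}}(f')=\widetilde{\rng}(f)$, and therefore $\rng(\cF')=\widetilde{\rng}(\cF)$. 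This is the step where the hypothesis on the $\ell_1$-ball enters: \Cref{thm:manySubset} needs the range over the whole ambient simplex, not just over $\cH'$, and the ball $\mathcal B_1(\cH)$ is precisely the image of that simplex.

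Now apply \Cref{thm:manySubset} to $\cF'$ and $\cH'$ in dimension $n+1$. This gives an $\varepsilon$-Cover $\cX'\subseteq\cH'$ of size $(n+1)^{O(\widetilde\rng(\cF)^2\rho(d)\log((n+1)|\cF|)/\varepsilon^2)}=n^{O(\widetilde\rng(\cF)^2\rho(d)\log(n|\cF|)/\varepsilon^2)}$, computed in time $\poly(k)$ times that quantity. Output $\cX\coloneqq\phi(\cX')\subseteq\cH$.

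For correctness, take $x\in\cH$ and set $y=\psi(x)\in\cH'$. There is some $y'\in\cX'$ with $|f'(y)-f'(y')|\le\varepsilon$ for every $f'$. Since $f'(y)=f(\phi(y))=f(x)$ and $f'(y')=f(\phi(y'))$, the point $\phi(y')\in\cX$ is the required witness.

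I expect the main obstacle to be bookkeeping rather than anything conceptual. One point is checking that the homogenization and the rescaling by $R$ interact correctly with the Bernstein-coefficient bound of \Cref{lem:boundBerncoeff}; this is automatic, because that lemma is stated in terms of $\rng$ over the simplex, which we have identified with $\widetilde\rng(f)$. The other point is confirming that $R$ and the description of $\cH'$ can be computed exactly in polynomial time with rational data.
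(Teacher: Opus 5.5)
Your proposal is correct and follows the paper's proof essentially step for step: embed $\cH$ into $\Delta_{n+1}$ via $\psi$, pull each $f$ back through $\psi^{-1}(y)=\|\cH\|_1\, y_{1:n}$, use that $\Delta_{n+1}$ maps onto $\mathcal{B}_1(\cH)$ so that the range over the simplex equals $\widetilde{\rng}(f)$, apply \Cref{thm:manySubset}, and map the cover back to $\cH$. Your extra bookkeeping (computing $\|\cH\|_1$ with one LP, writing out the inequality description of $\psi(\cH)$, and homogenizing) supplies details the paper leaves implicit without changing the argument.
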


\begin{proof}
    We consider the set $\tilde \cH \subset \mathbb{R}^{n+1}_{\ge0}$ as the image of $\cH$ through $\psi$.
    It is easy to observe that $\tilde \cH \subseteq \Delta_{n+1}$. Moreover, since $\cH$ is a polytope and $\psi$ is affine, then $\tilde \cH$ is also a polytope. Finally, we define a family of functions $\tilde \cF$ as follows. For any $\tilde x \in \Delta_{n+1}$, let $\psi^{-1}(\tilde x) \coloneqq \|\cH\|_1 \tilde x_{1:n}$, where $\tilde x_{1:n}$ denotes the subvector of the first $n$ components of $\tilde x$. For each $f\in \cF$, we add to $\tilde \cF$ a function $\tilde f:\Delta_{n+1}\rightarrow \mathbb{R}$. In particular, we let $\tilde f(\tilde x)= f(\psi^{-1}(\tilde x))$. Under the mappings $\psi$ and $\psi^{-1}$, $\tilde \cH$ correspond one-to-one with $\cH$, and $\Delta_{n+1}$ corresponds one-to-one with $\mathcal{B}_1(\cH)$. Hence, the range $\widetilde{\rng}(f) = \rng(\tilde f)$ for all $f$, and the image of an $\varepsilon$-Cover of $\tilde \cF$ through $\psi^{-1}$ recovers an $\varepsilon$-Cover of $\cF$. 

    We can then apply \Cref{thm:manySubset} with $\tilde \cF$ as the family of functions and $\tilde \cH$ as the subset of the simplex. We obtain that there exists an algorithm that runs in time $\poly(k)\cdot n^{O(\rng(\tilde \cF)^{2}\rho(d)\log(n|\tilde\cF|)/\varepsilon^2)}$ and constructs a discrete set $\tilde\cX\subseteq \tilde \cH$ of size $n^{O(\rng(\tilde \cF)^{2}\rho(d)\log(n|\tilde \cF|)/\varepsilon^2)}$ which is an $\varepsilon$-Cover of $\tilde \cF$ over $\tilde \cH$. Taking the image of the cover under $\psi^{-1}$ yields an $\varepsilon$-Cover for $\cF$. Using the fact that $\widetilde{\rng}(\cF) = \rng(\tilde\cF)$, we obtain the guarantees of the statement.
\end{proof}

Notice that \Cref{thm:manySubsetArbitrary} requires a bounded range over the ambient $\ell_1$-ball, not just over the feasible set.
Interestingly, we cannot improve the statement of the above theorem by requiring constant range only on the feasible set, as this would contradict known hardness-of-approximation results, for instance, for sparse $3$-SAT.

As a direct corollary, we get an algorithm to approximately minimize a fixed-degree polynomial over a convex polytope. The proof is equivalent to that of \Cref{cor:min_subset}.

\begin{corollary}\label{cor:min_subsetArbitrary}
     Let $f$ be a degree $d$ polynomial, and $\cH\subset \Reals^n_{\ge0}$ be a polytope defined by $k$ inequalities. There exists an algorithm that runs in time $\poly(k)\cdot n^{O(\widetilde\rng(f)^{2}\rho(d)\log(n)/\varepsilon^2)}$ and returns an $x\in \cH$ such that $f(x)\le \min_{x\in\cH} f(x)+\varepsilon$.
\end{corollary}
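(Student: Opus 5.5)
The plan is to reduce \Cref{cor:min_subsetArbitrary} to \Cref{thm:manySubsetArbitrary} applied to the singleton family $\cF=\{f\}$, followed by exhaustive search over the resulting cover. This mirrors how \Cref{cor:min_subset} follows from \Cref{thm:manySubset}.

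First, I invoke \Cref{thm:manySubsetArbitrary} with $\cF=\{f\}$, so that $|\cF|=1$ and $\log(n|\cF|)=\log n$, and with accuracy parameter $\varepsilon$. This yields, in time $\poly(k)\cdot n^{O(\widetilde\rng(f)^2\rho(d)\log(n)/\varepsilon^2)}$, a set $\cX\subseteq\cH$ of the same size bound. The set $\cX$ is an $\varepsilon$-Cover of $\{f\}$ over $\cH$.

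Second, I evaluate $f$ at every point of $\cX$ and output $\hat x\in\arg\min_{x'\in\cX}f(x')$. Each evaluation costs $O(n^d)$, which is polynomial since $d$ is constant, so this step is absorbed into the stated running time. The output satisfies $\hat x\in\cH$ because $\cX\subseteq\cH$.

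Third, for correctness, let $x^\star\in\arg\min_{x\in\cH}f(x)$; it exists since $\cH$ is a compact polytope and $f$ is continuous. By the cover property (\Cref{def:cover}) there is $x'\in\cX$ with $|f(x^\star)-f(x')|\le\varepsilon$. Hence
\[
f(\hat x)\le f(x')\le f(x^\star)+\varepsilon=\min_{x\in\cH}f(x)+\varepsilon .
\]

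No step is a genuine obstacle, since all the difficulty lives in \Cref{thm:manySubsetArbitrary}. The only point needing care is bookkeeping: the cover is built relative to $\widetilde\rng(f)$, the range over $\mathcal{B}_1(\cH)$, rather than the range over $\cH$, and the exponent's dependence on $|\cF|$ collapses to $\log n$. If one wants the output to have polynomial bit size, one also notes that the points of $\cX$ are LP solutions, so they are rational with polynomially bounded encoding.
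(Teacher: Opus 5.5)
Your proposal is correct and matches the paper's argument: the paper likewise applies \Cref{thm:manySubsetArbitrary} to the singleton family $\{f\}$ and enumerates the resulting $\varepsilon$-Cover to pick the best point. This is exactly how \Cref{cor:min_subset} follows from \Cref{thm:manySubset}.
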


\begin{remark}
    One could define an extension-complexity generalization of the above theorem whereby, for a set $\cH\subset \Reals^n$, we consider the ``best'' polyhedral set $\tilde\cH$ with $\poly(n)$ vertices that inscribes $\cH$, seeking to minimize the range of the polynomial family on $\tilde\cH$ (aiming for a constant or polylogarithmic range). In our applications, the above embedding is sufficient; we defer the study of this generalization to future work.
\end{remark}

\section{Applications}\label{sec:applications_real}

In this section, we demonstrate how \Cref{thm:manySubsetArbitrary} and \Cref{cor:min_subsetArbitrary} (and their specialized versions for subsets of the simplex) allow us to recover and extend several known $\QPTAS$es, providing a more unified perspective on these optimization problems.

\subsection{Constraint Satisfaction Problems}\label{sec:csp}

Our first application concerns $k$-Constraint Satisfaction Problems (CSPs). We consider a $k$-CSP defined over $m$ variables with an alphabet of size $q$, which we formulate as a polynomial maximization problem over a product of simplexes as follows.

We represent the $m$ variables using the set $\{x_{i,j}\}_{i \in [m], j \in [q]}$, where $x_{i,j}$ corresponds to the probability of variable $i$ taking value $j$. The domain $\cH \subseteq \mathbb{R}^{mq}_{\ge0}$ is the set of vectors satisfying $\sum_{j \in [q]} x_{i,j} = 1$ for all $i \in [m]$. Each constraint $i \in C$ involves a subset of variables $\{i_1, \dots, i_k\}$ and is defined by the polynomial
\[P_i(x) = \sum_{j_1, \dots, j_k \in [q]^k} V_i(j_1, \dots, j_k) \prod_{\ell\in[k]} x_{i_\ell, j_\ell},\]
where $V_i: [q]^k \to [0, 1]$ represents the value of the constraint as a function of the assignment.
The objective function, representing the total value of the assignment, is $f(x) = \sum_{i \in C} P_i(x)$. By applying  \Cref{cor:min_subsetArbitrary}, we obtain the following \QPTAS.

\begin{corollary}
     Given a k-CSP with a set of $m$ variables over alphabet $[q]$, there exists an algorithm that runs in time $ (mq)^{O(\rho(k)\log(mq)/\varepsilon^2)}$ and returns an $\varepsilon m^k$-approximation of the optimal assignment.
\end{corollary}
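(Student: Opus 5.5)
The plan is to apply \Cref{cor:min_subsetArbitrary} directly to $-f$, after normalising the objective so that its range over the relevant $\ell_1$-ball is bounded by a constant depending only on $k$. The domain $\cH=\{x\in\Reals^{mq}_{\ge0}:\sum_{j\in[q]}x_{i,j}=1\ \forall i\in[m]\}$ is a polytope described by $O(mq)$ inequalities, and it lies in the nonnegative orthant. Each point of $\cH$ has $\|x\|_1=m$, so $\|\cH\|_1=m$ and $\mathcal{B}_1(\cH)=\{x\ge0:\|x\|_1\le m\}$. The objective $f=\sum_{i\in C}P_i$ is homogeneous of degree $k$. I will work with the rescaled polynomial $g(x)=f(x)/m^k$. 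An additive $\varepsilon$-approximation of $\max g$ is exactly an $\varepsilon m^k$-approximation of $\max f$, which is the claimed guarantee.

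The key step, and the one I expect to be the main obstacle, is bounding $\widetilde\rng(g)$ over $\mathcal{B}_1(\cH)$ rather than only over $\cH$. I first make $f$ multilinear in a clean way. I may assume that each constraint involves $k$ distinct variables, and that $|C|\le m^k$ when constraints are indexed by ordered tuples of variables; otherwise the $\varepsilon m^k$ normalisation is the natural dense-CSP scaling, with $|C|=O(m^k)$.

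On $\mathcal{B}_1(\cH)$ every coordinate is nonnegative and $V_i\in[0,1]$. This gives
\[
0\le f(x)\le \sum_{(i_1,\dots,i_k)}\prod_{\ell}\Big(\sum_{j}x_{i_\ell,j}\Big)\le \Big(\sum_{i,j}x_{i,j}\Big)^k\le m^k .
\]
The middle step expands $\prod_\ell(\sum_j x_{i_\ell,j})$ and uses that each term of the sum over ordered tuples appears in the multinomial expansion of $\|x\|_1^k$, which has nonnegative coefficients. Hence $0\le g\le 1$ on $\mathcal{B}_1(\cH)$, and so $\widetilde\rng(g)\le 1$. The point needing care is that this bound must hold on the whole ball, where the variable blocks are no longer normalised; nonnegativity of the coefficients $V_i$ is exactly what makes the crude bound by $\|x\|_1^k$ work. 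If constraints instead come with multiplicities, I would still bound $\widetilde\rng(g)$ by a constant depending only on $k$, absorbed into $\rho(k)$, using $|C|=O(m^k)$.

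With $\widetilde\rng(-g)\le1$, \Cref{cor:min_subsetArbitrary} applied to $-g$ with degree $k$, dimension $n=mq$ and $\poly(mq)$ constraints returns, in time $\poly(mq)\cdot(mq)^{O(\rho(k)\log(mq)/\varepsilon^2)}=(mq)^{O(\rho(k)\log(mq)/\varepsilon^2)}$, a point $x\in\cH$ with $g(x)\ge\max_{\cH}g-\varepsilon$. Since $\max_\cH f$ equals the value of the best integral assignment, it remains to round the fractional $x$. Because $f$ is multilinear, with each monomial using at most one coordinate from each block, $f$ is affine in each block $x_{i,\cdot}$ separately. I will therefore round block by block: for each variable $i$ in turn, set $x_{i,\cdot}$ to the vertex $e_j$ that maximises $f$ with the other blocks fixed. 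This never decreases $f$, and it produces an assignment of value at least $\mathrm{OPT}-\varepsilon m^k$ in polynomial additional time.
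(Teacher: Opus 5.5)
Your proposal is correct and is essentially the paper's proof: you bound $\widetilde\rng(f)\le m^k$ on $\mathcal{B}_1(\cH)$ using nonnegativity and comparison with the expansion of $\|x\|_1^k$ (the paper phrases this via $x=mz$ with $z\in\Delta_{mq}$), invoke \Cref{cor:min_subsetArbitrary} at accuracy $\varepsilon m^k$ (equivalently, on $f/m^k$ at accuracy $\varepsilon$), and finish with greedy block-by-block derandomization. One caveat concerns your fallback for constraints with multiplicities: it does not follow from $|C|=O(m^k)$ alone, since $\Theta(m^k)$ constraints on a single $k$-tuple let $f$ reach order $m^{2k}/k^k$ on the ball, so you need each ordered tuple to carry $O(1)$ constraints, as the paper implicitly assumes when it replaces the sum over $C$ by the sum over $[m]^k$.
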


\begin{proof}
    It is sufficient to compute the quantity $\widetilde \rng(f)$. Indeed,
    it is easy to see that $\|\cH\|_1=m$, which implies: 
    \begin{align*}
    \widetilde \rng(f)&\le  \max_{x \in \mathcal{B}_1(\cH)} \sum_{i \in C} \sum_{j_1,\ldots,j_k\in[q]^k} V_i(j_1,\ldots,j_k) \prod_{\ell \in [k]} x_{i_\ell,j_\ell}\\
    &\le m^k \max_{z \in \Delta_{mq}} \sum_{i \in C} \sum_{j_1,\ldots,j_k\in[q]^k} V_i(j_1,\ldots,j_k) \prod_{\ell \in [k]} z_{i_\ell,j_\ell}\\
    & \le m^k \max_{z \in \Delta_{mq}} \sum_{i \in C} \sum_{j_1,\ldots,j_k\in[q]^k}  \prod_{\ell \in [k]} z_{i_\ell,j_\ell}\\
    & \le m^k \max_{z \in \Delta_{mq}} \sum_{i \in [m]^k} \sum_{j_1,\ldots,j_k\in[q]^k}  \prod_{\ell \in [k]} z_{i_\ell,j_\ell}\\
    & \le m^k \max_{z \in \Delta_{mq}} \prod_{\ell \in [k]} \sum_{i \in [m]} \sum_{j\in[q]}   z_{i,j}\\
    & \le m^k,
    \end{align*}

    Hence, applying \Cref{cor:min_subsetArbitrary} with $\varepsilon'=m^k \varepsilon$ we get the desired result.
\end{proof}

The previous result can then be used in conjunction with the standard greedy de-randomization to produce an optimal deterministic assignment from the randomized one found by our procedure.

A few remarks are in order. First, a \QPTAS result is the best achievable result for this problem, for non-constant alphabet sizes $q$ \citep{manurangsi2017birthday}. 
Second, as observed in \citet{arora1995polynomial}, this result is particularly useful for dense CSPs, where the number of constraints $|C|$ is proportional to $m^k$. In such cases, the additive error $\varepsilon m^k$ translates directly into a $(1+\varepsilon)$ multiplicative error.
Finally, we remark that our framework yields a suboptimal dependency on $m$ (the $\log(m)$ factor in the exponent) because it is designed for arbitrary convex sets. While specialized CSP algorithms might exploit the specific product-of-simplexes structure of $\cH$ to achieve a better running time, our approach could handle much more complex constraints on the multilinear extension of the assignment, without modification.

\subsection{Free Games}\label{sec:free}

Our second application is to Free Games \citep{aaronson2014multiple}. Although a \QPTAS for this setting was previously obtained by \citet{aaronson2014multiple}, we show that our framework easily recovers this result.

Formally, a $k$-player free game $G$ consists of:
\begin{itemize}
    \item a finite question set $Y$ and a finite answer sets $B$; and
    \item a verification function $V:  Y^k \times  B^k \rightarrow [0,1]$.
\end{itemize}

In this setting, the goal is to optimize a (randomized) strategy profile represented by a vector $x = \{x_{i,y,b}\}$ where $i \in [k]$ denotes the player, $y \in Y$ the question, and $b \in B$ the answer. The set $\cH$ represents the valid strategy space, where for each player and each question, the probabilities sum to one:
\[
\cH = \left\{ x \in \mathbb{R}_{\ge0}^{k|Y||B|} : \forall i \in [k], \forall y \in Y, \sum_{b \in B} x_{i,y,b} = 1 \right\}
\]

Intuitively, $x_{i,y,b}$ encodes a response function mapping a question to a randomized answer for each prover. Assuming, as usual, that questions are chosen independently and uniformly at random for each prover, the goal is to maximize the expected value of the game:

\[f(x) = \frac{1}{|Y|^k} \sum_{y \in Y^k} \sum_{b \in B^k} V(y, b) \prod_{i=1}^k x_{i,y_i,b_i}.\]

The factor $1/|Y|^k$ normalizes the value of the game $f(x)$ within $[0, 1]$.
Applying our framework, we recover the following guarantee.

\begin{corollary}
    Given a free game $G= (k, Y, B, V)$, there exists an algorithm that runs in time $ {n}^{O(\rho(k)\log(n)/\varepsilon^2)}$ and returns an $\varepsilon$-approximation of the optimal response function, where $n= k|Y||B|$. 
\end{corollary}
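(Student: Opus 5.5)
The plan is to apply \Cref{cor:min_subsetArbitrary} to the polynomial $-f$ (maximization is minimization of $-f$) over the polytope $\cH$. This polytope lives in $\Reals^n_{\ge0}$ with $n=k|Y||B|$ and is described by $k|Y|$ equality constraints plus nonnegativity, so $\poly(n)$ inequalities. The polynomial $f$ is homogeneous of degree $k$: each monomial $\prod_{i} x_{i,y_i,b_i}$ uses exactly one variable per player. Hence the only quantity to control is $\widetilde\rng(f)$, and the aim is to show that it is bounded by a function of $k$ alone. Then $\rho(k)\,\widetilde\rng(f)^2$ is absorbed into $\rho(k)$, and the corollary yields running time $n^{O(\rho(k)\log n/\varepsilon^2)}$ and an additive $\varepsilon$-approximation.

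\textbf{Computing the diameter.} Each player has $|Y|$ questions and each $(i,y)$ block sums to one, so $\|x\|_1=k|Y|$ for every $x\in\cH$. Thus $\|\cH\|_1=k|Y|$ and $\mathcal{B}_1(\cH)=\{x\ge0:\|x\|_1\le k|Y|\}$.

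\textbf{Bounding the range.} Since $V\in[0,1]$ and $x\ge0$, we have $f\ge0$ on $\mathcal{B}_1(\cH)$. For the upper bound, drop $V\le1$ and extend the sum over all $(y,b)$:
\[
f(x)\le \frac{1}{|Y|^k}\prod_{i=1}^k\Big(\sum_{y\in Y}\sum_{b\in B}x_{i,y,b}\Big)=\frac{1}{|Y|^k}\prod_{i=1}^k s_i,
\]
where $s_i$ is player $i$'s block mass. We have $\sum_i s_i\le k|Y|$, so AM--GM gives $\prod_i s_i\le |Y|^k$. Therefore $f\le 1$ and $\widetilde\rng(f)\le 1$.

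\textbf{Main obstacle.} The key step is this normalization. Without the $1/|Y|^k$ factor, or if one forgot that the $\ell_1$-ball lets mass concentrate on a single player, the range could blow up. In the latter case the crude bound $(k|Y|)^k/|Y|^k=k^k$ still holds, since each $s_i\le k|Y|$, and $k^k$ is constant in $k$ anyway. So the range is at most $\max(1,k^k)$, which depends only on $k$.

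Plugging this into \Cref{cor:min_subsetArbitrary} gives the claimed time $n^{O(\rho(k)\log n/\varepsilon^2)}$ and returns $x\in\cH$ with $f(x)\ge\max_{\cH}f-\varepsilon$. Finally, since $f$ is multilinear across players, the randomized response functions can be rounded to deterministic answers without loss by fixing each block to its best pure answer greedily.
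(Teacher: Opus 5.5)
Your proof is correct and follows essentially the same route as the paper. Both compute $\|\cH\|_1=k|Y|$, drop $V\le 1$, and factor the sum into a product of per-player block masses. Both then bound that product by AM--GM to get $\widetilde\rng(f)\le 1$ and invoke \Cref{cor:min_subsetArbitrary}; the paper phrases the AM--GM step as rescaling to $z=x/(k|Y|)$ and maximizing $k^k\prod_i\sum_{y,b}z_{i,y,b}$ with each factor equal to $1/k$. Your fallback $k^k$ bound in the ``main obstacle'' paragraph is unnecessary, since AM--GM already accounts for mass concentrating on a single player, but it does no harm.
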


\begin{proof}
It is sufficient to compute the value of $\|\cH\|_1$ and $\widetilde\rng(f)$, and apply \Cref{cor:min_subsetArbitrary}. 
First, for any $x \in \cH$, the $\ell_1$-norm is:
\[\|x\|_1 = \sum_{i\in[k]} \sum_{y \in Y} \sum_{b \in B} x_{i,y,b} = \sum_{i\in[k]} \sum_{y \in Y} 1 = k|Y|.\]
Thus, $\|\cH\|_1 = k|Y|$.
Next, we bound $\widetilde{\rng}(f)$.  In particular, it holds:
\begin{align*}
    \widetilde \rng(f)& \le \max_{x \in \mathcal{B}_1(\cH)} \frac{1}{|Y|^k} \sum_{y\in Y^k} \sum_{b\in B^k} V(y, b) \prod_{i \in [k]} x_{i,y_i,b_i} \\
    & = k^k\max_{x \in \mathcal{B}_1(\cH)}\sum_{y\in Y^k} \sum_{b\in B^k} V(y, b) \prod_{i \in [k]} \frac{x_{i,y_i,b_i}}{|Y| k} \\
    &\le k^k  \max_{z \in \Delta_{k|Y||B|}}  \sum_{y\in Y^k} \sum_{b\in B^k} V(y, b) \prod_{i \in [k]} z_{i,y_i,b_i}\\
    & \le  k^k  \max_{z \in \Delta_{k|Y||B|}}  \sum_{y\in Y^k} \sum_{b\in B^k}  \prod_{i \in [k]} z_{i,y_i,b_i}\\
    & \le  k^k \max_{z \in \Delta_{k|Y||B|}} \prod_{i \in [k]}  \sum_{y_i\in Y^k} \sum_{b_i\in B^k}  z_{i,y_i,b_i}\\
    & \le  k^k   \prod_{i \in [k]} \frac{1}{k} \\
    & =  1,
\end{align*}
where in the second-to-last inequality we noticed that the product is maximized when each element is $\frac{1}{k}$.
Applying \Cref{cor:min_subsetArbitrary} with the computed values of $\widetilde\rng(f)$ we get the desired result.
\end{proof}
This recovers the \QPTAS of \citet{aaronson2014multiple} whenever $k$ is a fixed constant. %

\subsection{Polynomial Variational Inequalities and Local Nash Equilibria}\label{sec:VI}

In this section, we give a non-obvious application related to Variational Inequalities (VI) with polynomial operators and equilibria in nonconvex games. In particular, we show that VIs with operators $\bm{f}=[f^{(1)},\ldots,f^{(k)}]$ with $f^{(j)}\in\Reals_d[x_1,\ldots, x_k]$ over the convex hull of $n$ points can be solved in quasi-polynomial time. This result holds even when we look for optimal solutions to the VI, according to a polynomial objective function $g\in\Reals_d[x_1,\ldots,x_k]$. Formally, we consider the following problem:

\begin{definition}[Polynomial-VI (with objective function)]
    Given $n$ vectors $\{v_i\}_{i\in[n]}$ in $\Reals^k$, let $\cV= \textnormal{co}(v_1,\ldots, v_n)$. Let $\bm{f}:\Reals^k\to\Reals^k$ be a polynomial operator of degree $d$ and $g\in\Reals_d[x_1,\ldots,x_k]$ a polynomial objective function of degree $d$. Define 
    \begin{align*}
        \textsc{Opt}:=&\max_{v \in \cV} g(v)\\
        & \textnormal{s.t.}\quad \bm{f}(v)^\top (v'-v)\le 0 \quad\forall v'\in \cV.
    \end{align*}
    
    An $\varepsilon$-approximate solution is a point $v\in\cV$ such that
    \begin{enumerate}
        \item $g(v)\ge \textsc{Opt}-\varepsilon$
        \item $\bm{f}(v)^\top(v'-v)\le\varepsilon$ for all $v'\in \cV$.
    \end{enumerate}
\end{definition}

\begin{remark}
    The previous definition is well defined since the set of solutions to the VI $\{v\in \cV:\bm{f}(v)^\top(v'-v)\le 0\}$ is non-empty and compact \citep{facchinei2003finite}, and thus the problem admits a maximum.
\end{remark}

Here, it is quite natural to see the problem as an optimization problem over the simplex. Then, it is not difficult to see that \Cref{cor:manyImageSimplex} (together with \Cref{rem:polytope}) leads to a quasi-polynomial algorithm for the problem. Indeed, we can define the family of $n+1$ polynomials
\[
\cF:=\{g(Vx), \bm{f}(Vx)^\top V(e_1-x),\ldots, \bm{f}(Vx)^\top V(e_n-x)\},
\]
where $V\in\Reals^{k\times n}$ is the matrix which columns are $v_1,\ldots, v_n$, and build a $\varepsilon$-Cover of $\cF$. 
This construction readily yields the following corollary of \Cref{cor:manyImageSimplex}.

\begin{corollary}\label{th:VIs}
    There is an algorithm running in time $n^{O(\rng(\cF)^2 \log(n)\rho(d)/\varepsilon^2)}$ that finds a constant approximation of a Polynomial-VI (with objective function).
\end{corollary}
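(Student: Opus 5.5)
The plan is to reduce the Polynomial-VI to enumerating a cover on the simplex, as sketched just before the statement. Let $V\in\Reals^{k\times n}$ have columns $v_1,\dots,v_n$, so that every $v\in\cV$ is $v=Vx$ for some $x\in\Delta_n$. We work with the family
\[
\cF=\{g(Vx),\ \bm{f}(Vx)^\top V(e_1-x),\dots,\bm{f}(Vx)^\top V(e_n-x)\}.
\]
Each member is a polynomial in $x$ of degree at most $d+1$. The degree increases by one because of the linear factor $(e_i-x)$; after homogenizing with $\sum_i x_i$, all members can be treated as degree-$(d+1)$ homogeneous polynomials on $\Delta_n$. Since $d$ is constant, $\rho(d+1)$ is still a degree-dependent constant and is absorbed into $\rho(d)$.

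\textbf{Step 1 (build the cover).} We invoke \Cref{cor:manyImageSimplex} with $m=n+1$ and accuracy $\varepsilon/2$. This gives, in time $n^{O(\rng(\cF)^2\rho(d)\log(n)/\varepsilon^2)}$, an $\varepsilon/2$-Cover $\cX$ of $\cF$ over $\Delta_n$.

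\textbf{Step 2 (existence of a good cover point).} Let $x^\star\in\Delta_n$ be such that $Vx^\star$ attains $\textsc{Opt}$; it exists by the remark on compactness. Being a VI solution means $\bm{f}(Vx^\star)^\top(v'-Vx^\star)\le 0$ for all $v'\in\cV$, and in particular $\bm{f}(Vx^\star)^\top V(e_i-x^\star)\le 0$ for every $i$. Take $x'\in\cX$ that $\varepsilon/2$-approximates $x^\star$ on all of $\cF$. Then $g(Vx')\ge\textsc{Opt}-\varepsilon/2$, and $\bm{f}(Vx')^\top V(e_i-x')\le\varepsilon/2$ for all $i$. The key observation is that the map $v'\mapsto \bm{f}(Vx')^\top(v'-Vx')$ is linear in $v'$, so its maximum over $\cV$ is attained at some vertex $v_i=Ve_i$. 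Hence $\bm{f}(Vx')^\top(v'-Vx')\le\varepsilon/2$ for all $v'\in\cV$, and $Vx'$ is an $\varepsilon$-approximate solution.

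\textbf{Step 3 (the algorithm).} Enumerate $x\in\cX$ and evaluate the $n+1$ polynomials at each. Keep the points with $\max_i \bm{f}(Vx)^\top V(e_i-x)\le\varepsilon/2$; this set is nonempty by Step 2. Among these, output the one maximizing $g(Vx)$. Its $g$-value is at least $g(Vx')\ge\textsc{Opt}-\varepsilon/2$, and it satisfies the VI condition up to $\varepsilon/2$ by the vertex argument. The running time is the size of $\cX$ times $\poly(n,k)$ for the evaluations, which matches the claimed bound.

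\textbf{Main obstacle.} The difficulty is not the argument itself but bookkeeping. We must check that the family genuinely fits \Cref{cor:manyImageSimplex}, namely that it has degree $d+1$, and that the stated bound is expressed in terms of $\rng(\cF)$ over the simplex. We must also make sure that checking only the $n$ vertex directions suffices, which is exactly what the linearity in $v'$ guarantees.
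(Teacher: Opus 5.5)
Your proposal is correct and follows the same route as the paper's proof. Both build a cover of $\cF$ over $\Delta_n$ via \Cref{cor:manyImageSimplex}, pick the cover point that approximates an optimal VI solution, and extend the vertex inequalities to all of $\cV$ by linearity. You also make explicit two points the paper leaves implicit: the members $\bm{f}(Vx)^\top V(e_i-x)$ have degree $d+1$, which is harmless since $\rho(d+1)$ is still a constant, and you give a concrete selection rule (filter cover points by the vertex conditions, then maximize $g$), which is what turns the existence argument into an algorithm.
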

\begin{proof}
    Let $\cX^{\cF}_{\Delta_n}$ be an $\varepsilon$-Cover of $\cF$ over $\Delta_n$. Then, for every $x\in\Delta_n$, there is a $\hat x\in \cX^{\cF}_{\Delta_n}$ such that
    \[
    |g(Vx)-g(V\hat x)|\le \varepsilon,
    \quad
    \text{and}
    \quad
    |\bm{f}(Vx)^\top V(e_i-x)-\bm{f} (V\hat x)^\top V(e_i-\hat x)|\le \varepsilon.
    \]
    Now take an optimal solution $v$ to the VI, and consider the $x\in\Delta_n$ such that $Vx=v$. Moreover, we define $\hat v=V\hat x$. Thus, the first condition gives $g(\hat v)\ge \textsc{Opt}-\varepsilon$. Also, the second condition gives that $\bm{f}(\hat v)^\top(v_i-\hat v)\le\bm{f}(v)^\top(v_i-v)+\varepsilon\le \varepsilon$ for all $i\in[n]$ and, by linearity, also for all $v'\in\cV$. Moreover the algorithm runs in time $|\cX_{\Delta_n}^\cF|=n^{O(\log(|\cF|)\rho(d)\rng(\cF)^2/\varepsilon^2)}$, proving the statement.
\end{proof}

To showcase the generality of this method, we can readily give an application to the problem of computing equilibria of polynomial games over the simplex. Since determining the existence of Nash equilibria in polynomial games is \NP-hard \citep{daskalakis2021complexity}, we focus on local equilibria, defined as follows.

\begin{definition}[$\varepsilon$-Local-Nash Equilibrium]
Given $m$ players, each with strategy set $\Delta_n$, let the utility function of the $i$-th player be a polynomial $u_i: (\Delta_n)^m\rightarrow [0,1]$  of degree $d$. Find a tuple of strategies $x\in (\Delta_n)^m$ such that 
\[ 
\nabla_{x_i} u_i(x) (x'_i-x_i)\le \varepsilon \quad \forall i \in [m], x_i' \in \Delta_n.
\]
\end{definition}

Local Nash equilibrium can be mapped to a Polynomial-VI by a simple change of variables. Indeed, it is easy to see that $(\Delta_n)^m$  is a polytope with $n^m$ vertices and in which each vertex has $\ell_1$ norm of exactly $m$. Thus, \Cref{th:VIs} implies the following.

\begin{corollary}
    There exists an algorithm to find an $\varepsilon$-local-Nash equilibrium in time 
    \[ 
    n^{O(m^2\log(n)\rho(d)/\varepsilon^2)},
    \]
    where $m$ is the number of players, $n$ is the number of strategies of each player, and $d$ is the degree of the utility functions.
\end{corollary}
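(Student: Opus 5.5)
The plan is to treat the local-Nash problem as a VI over the product polytope $\cH=(\Delta_n)^m\subseteq\Reals^{mn}_{\ge0}$. I would build the cover with \Cref{thm:manySubsetArbitrary} rather than with \Cref{th:VIs}. Invoking \Cref{th:VIs} via \Cref{rem:polytope} would reparametrize over the $n^m$ vertices, and the running time would then be exponential in $m$ inside the base. \Cref{thm:manySubsetArbitrary} works directly with the inequality description of $\cH$: it has $mn$ nonnegativity constraints and $m$ equalities, so $k=O(mn)$. Its $\ell_1$-diameter is $\|\cH\|_1=m$.

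Concretely, I would define the family
\[
\cF=\left\{h_{i,j}(x)\coloneqq \nabla_{x_i}u_i(x)^\top(e_j-x_i)\;:\; i\in[m],\ j\in[n]\right\}.
\]
It consists of $mn$ polynomials of degree at most $d$, homogenized as in the preliminaries. By linearity in $x_i'$, a point $x\in\cH$ is an $\varepsilon$-local-Nash equilibrium if and only if $h_{i,j}(x)\le\varepsilon$ for all $i,j$. An exact local Nash equilibrium $x^\star\in\cH$ exists, since the VI on a compact convex set has a solution. Let $\cX$ be an $\varepsilon$-Cover of $\cF$ over $\cH$. Then some $\hat x\in\cX$ satisfies $h_{i,j}(\hat x)\le h_{i,j}(x^\star)+\varepsilon\le\varepsilon$. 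The algorithm therefore enumerates $\cX$ and checks the $mn$ inequalities at each point, exactly as in the proof of \Cref{th:VIs}. The running time is $\poly(mn)\cdot (mn)^{O(\widetilde\rng(\cF)^2\rho(d)\log(mn\cdot mn)/\varepsilon^2)}$. This matches the claim once we show $\widetilde\rng(\cF)=O(m)\cdot\rho(d)$ and note that $m\le\poly(n)$ in the regime of interest (otherwise the $\log m$ terms are absorbed by adjusting constants).

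The main obstacle is bounding $\widetilde\rng(h_{i,j})$ over the ball $\mathcal B_1(\cH)=\{y\ge0:\|y\|_1\le m\}$, when we only know $u_i\in[0,1]$ on $(\Delta_n)^m$. My first step would be a Markov-type bound for the gradient on $(\Delta_n)^m$. Restricting $u_i$ to block $i$ (other blocks fixed) gives a degree-$d$ polynomial on $\Delta_n$ with range $\le1$. By \Cref{lem:boundBerncoeff} and the derivative formula in the proof of \Cref{lem:decomposition}, its partial derivatives along differences $e_j-x_i$ are bounded by $d\cdot\rho(d)$. Hence $|h_{i,j}|\le\rho(d)$ on $\cH$.

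The second step is to transfer this to the ball. I would write $y\in\mathcal B_1(\cH)$ as $y=m z$ with $z\in\Delta_{mn+1}$ (dummy coordinate), so that $h_{i,j}(y)$ becomes a polynomial in $z$ on the big simplex. I would then try to bound its Bernstein coefficients by those of the multi-block representation of $u_i$. The aim is to express $u_i$ in the product Bernstein basis of $(\Delta_n)^m$, whose coefficients lie in a range $\rho(d)$ times the range of $u_i$, via the product-of-simplices analogue of \Cref{lem:boundBerncoeff}. The scaling by $m$ would then enter linearly through the one factor $(e_j-x_i)$ and the degree-$(d-1)$ gradient evaluated at block masses that sum to at most $m$. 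The target is $\widetilde\rng(h_{i,j})\le m\,\rho(d)$, and squaring it gives the $m^2$ in the exponent.

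If this direct transfer fails, because the polynomial is not controlled off the product of simplices, my fallback would be to replace each $u_i$ by a multi-homogeneous extension. I would multiply each block's monomials by the appropriate powers of $\sum_\ell x_{k,\ell}$, which leaves the values on $\cH$ unchanged. With this extension the values on the ball are controlled by block masses, and each block mass is at most $m$.
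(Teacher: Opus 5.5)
There is a genuine gap. Your accounting rests on the bound $\widetilde\rng(h_{i,j})\le m\,\rho(d)$, which is false for $d\ge 2$, and neither the direct transfer nor the multi-homogeneous fallback can produce it. The ball $\mathcal B_1(\cH)=\{y\ge 0:\|y\|_1\le m\}$ contains points $(tv,0,\dots,0)$ with $v\in\Delta_n$ and $t\in[0,m]$, where all the mass sits in one block. At such points a degree-$d$ term in that block grows like $t^d$. In particular the degree-$(d-1)$ gradient factor is of order $t^{d-1}$, not $O(1)$, and the factor $e_j-x_i$ contributes another $t$.

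Concretely, take $u_1(x)=x_{1,1}^2\in[0,1]$ on $(\Delta_n)^m$, so $h_{1,1}=2x_{1,1}(1-x_{1,1})$ on $\cH$. Let $p$ be any polynomial of degree at most $2$ that agrees with $h_{1,1}$ on $\cH$. Fix the other blocks in $\Delta_n$ and homogenize in $x_1$. This shows that the quadratic part $Q$ of $p(\cdot,0,\dots,0)$ satisfies $Q(v)=2v_1(1-v_1)-a^\top v$ on $\Delta_n$ for some vector $a$. Hence $|Q(v_0)|\ge 1/4$ for some $v_0\in\{e_1,e_2,\tfrac12(e_1+e_2)\}$, and the quadratic $t\mapsto p(tv_0,0,\dots,0)$ oscillates by at least $m^2/16$ on $[0,m]$.

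So $\widetilde\rng(\cF)=\Omega(m^2)$ already for $d=2$, and with $x_{1,1}^d$ the growth is of order $m^d$. \Cref{thm:manySubsetArbitrary} then gives an exponent of order $m^{2d}$, not $m^2$. The fallback does not help either. The polynomial $x_{1,1}^2$ is already multi-homogeneous, and in general multi-homogenizing raises the total degree to as much as $md$, so $\rho(\cdot)$ would itself depend on $m$.

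The missing idea is that the route you dismissed is the right one. Your first step, applying \Cref{lem:tangent} blockwise to get $|h_{i,j}|\le 2\rho(d)$ on $(\Delta_n)^m$, is exactly what the paper uses. The paper pairs it with \Cref{th:VIs} and \Cref{rem:polytope}, reparametrizing over the $N=n^m$ vertices of $(\Delta_n)^m$. This keeps the degree equal to $d$, and the range that matters is the range over $(\Delta_n)^m$ itself, which is $O(\rho(d))$; the range over an enclosing $\ell_1$-ball never enters.

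The large base is harmless: $N^{O(\log(N)\rho(d)/\varepsilon^2)}=n^{O(m^2\log(n)\rho(d)/\varepsilon^2)}$. The $m^2$ in the statement comes from $\log N=m\log n$ together with the base $N=n^m$, not from squaring a range of order $m$.
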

\begin{proof}
    We already established that the number of vertices of the polytope on which the VI is defined is $n^m$, thus a bound of $n^{O(m^2\log(n)\rho(d)R^2/\varepsilon^2)}$ is already given by \Cref{th:VIs}, where $R=\max_{i\in[m]}\rng(x_i\mapsto\nabla_{x_i} u_i(x) (e_i-x_i))$.
    We are thus left to prove that $\rng(x_i\mapsto\nabla_{x_i} u_i(x) (e_i-x_i))$ is $O(\rho(d))$. \Cref{lem:tangent} lets us show that $R\le \rho(d)\cdot\rng(u_i)\le\rho(d)$ concluding the proof.
\end{proof}

This algorithm yields a \QPTAS when the number of players and the degree are constant. This result is a generalization of the result by \citet{lipton2003playing} to non-linear utilities.

\begin{remark}
    The \QPTAS of \citet{lipton2003playing} was extended to Lipschitz games in \citet{deligkas2016lipschitz, babichenko2013small, deligkas2018approximating}. Indeed, it is not difficult to see that an $\ell_p$ grid of $\Delta_n$ (for $p\ge 2$, which can be constructed as per \citep{barman2018approximating}) and an $O(1)$-Lipschitz utility function in $\ell_p$, imply a \QPTAS by a simple Lipschitzness argument. However, as we already discussed in \Cref{sec:related}, not even linear functions are Lipschitz with respect to norms other than $\ell_1$, so these results cannot be applied in our setting.
\end{remark}

\subsection{Densest $k$-Subhypergraph}\label{sec:hyper}

A less straightforward application concerns finding dense sub-hypergraphs. In particular, we will study a generalization of the normalized densest $k$-subgraph problem (NDkS) of \citet{barman2018approximating} to hypergraphs.

Formally, let $H=(V,E)$ be a $d$-uniform hypergraph, meaning that each hyperedge $e\in E\subseteq 2^V$ has cardinality $d$.
Then, given a subset of the vertices $S\subseteq V$ of cardinality $k$, we define the (normalized) density of $S$ as $\rho(S)=\frac{|E(S)|}{k^d}$, where $E(S)=\{e\in E:e\subseteq S\}$.

\begin{definition}
In NDkSH, we are given a ($d$-uniform) hypergraph $H = (V, E)$ and a parameter $k$, and the goal is to find a set $S^*\subseteq V$ of size $k$ that has maximum density, i.e., a solution to 
\[
\max_{S\subseteq V:|S|=k} \rho(S).
\]
\end{definition}

We show how to reduce this problem to the maximization of a polynomial over a polyhedral subset of the simplex.
We start defining the generating polynomial $A_H(x)=\sum_{e\in E}\prod_{u\in e} x_u$ of the hypergraph $H$. For a $d$-uniform hypergraph, $A_H(x)$ is a degree $d$ polynomial. Then, we define the regularized function
\[
F_{\eta}(x)=A_H(x)+\eta\|x\|_2^2,
\]

and consider the following program:
\begin{subequations} \label{lp:Hyper}
    \begin{align}
    \max_{x\in \Reals^n} \ & F_\eta(x)\quad\text{s.t.}\\
    &x_i\le 1/k \quad \forall i \in V\\
    &x\in\Delta_n
    \end{align}
\end{subequations}

Before starting to prove the correctness of program \eqref{lp:Hyper} in solving NDkSH, we are going to use the following classical inequality on elementary symmetric polynomials known as Maclaurin's inequality (see, e.g., \citep[][]{pecaric2005generalization}).

\begin{lemma}[Maclaurin's inequality]\label{lem:maclaurin}
    Let $\mathcal{E}_s(x)=\sum_{S\subset [n]: |S|=s}\prod_{k\in S}x_k$ be the degree $s$ elementary symmetric polynomial in $n$ variables. Then,
    \[
    \left(\frac{\mathcal{E}_n(x)}{\binom{n}{n}}\right)^{1/n}\le\left(\frac{\mathcal{E}_{n-1}(x)}{\binom{n}{n-1}}\right)^{1/(n-1)} \le \cdots\le\frac{\mathcal{E}_1(x)}{\binom{n}{1}} = \frac{\sum_{i\in[n]}x_i}{n}.
    \]
\end{lemma}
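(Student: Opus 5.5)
Maclaurin's inequality is classical. I would prove it for nonnegative reals $x\in\Reals_{\ge0}^n$, the setting where it is used later in the paper, and note that it can fail for arbitrary signs. Write $S_s=\mathcal{E}_s(x)/\binom{n}{s}$ for the normalized elementary symmetric means, with $S_0=1$. The plan has two stages. First I prove Newton's inequalities $S_{s-1}S_{s+1}\le S_s^2$ for $1\le s\le n-1$. Then I chain them, which gives $S_s^{1/s}\ge S_{s+1}^{1/(s+1)}$ for every $s$. Reading this chain from $s=1$ up to $s=n$ is exactly the displayed statement.

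\textbf{Newton's inequalities.} I would use the real-rootedness argument. Consider $P(t)=\prod_{i\in[n]}(t+x_i)=\sum_{s=0}^n\binom{n}{s}S_s t^{n-s}$, which has only real roots $-x_i$. By Rolle's theorem every derivative of $P$ is again real-rooted, and so is the reversed polynomial $t^{\deg}Q(1/t)$ of any real-rooted $Q$. Fix $s$. Differentiate $P$ exactly $s-1$ times and reverse it. Then differentiate the result $n-s-1$ times. This leaves a real-rooted quadratic, which up to a positive constant equals $S_{s-1}t^2+2S_st+S_{s+1}$. Its discriminant is nonnegative, so $S_s^2\ge S_{s-1}S_{s+1}$. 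Tracking the binomial constants here is routine: the choice of normalization $S_s$ is what makes them cancel. Repeated roots and zero roots need care, because zero roots can lower the degree of the reversed polynomial. I would handle them by perturbation: prove the inequality for strictly positive distinct $x_i$ and conclude by continuity.

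\textbf{Chaining.} If some $x_i=0$, I can either perturb again or note that the relevant $S$'s vanish from some index on, and the inequality is then trivial. So assume all $x_i>0$, in which case every $S_s>0$. I show $S_{s}^{1/s}\ge S_{s+1}^{1/(s+1)}$ by induction on $s$. For $s=1$ it is Newton's inequality with $S_0=1$, namely $S_1^2\ge S_2$. For the inductive step, assume $S_{s-1}\ge S_s^{(s-1)/s}$. Newton gives $S_s^2\ge S_{s-1}S_{s+1}\ge S_s^{(s-1)/s}S_{s+1}$. Rearranging yields $S_{s+1}\le S_s^{(s+1)/s}$, which is the claim. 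Finally, $S_1=\frac{1}{n}\sum_{i}x_i$, which matches the right end of the chain in the statement.

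The main obstacle is the Newton step, specifically getting the constants right when reducing $P$ to the quadratic. If that bookkeeping becomes messy, the fallback is the standard induction on $n$. In that argument, the $S_s$ of $x$ are expressed through the symmetric means of the $n-1$ roots of $P'$, which interlace with the $x_i$. This gives the same inequalities without ever reversing polynomials.
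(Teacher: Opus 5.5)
\paragraph{Comparison with the paper.}
Your argument is correct. It necessarily takes a different route from the paper, because the paper gives no proof: it quotes Maclaurin's inequality as a classical fact with a citation, and only uses the consequence $S_s\le S_1^{\,s}$ at points $x\in\Delta_n$. Your two-stage argument is the standard textbook proof and makes the lemma self-contained. It first gets Newton's inequalities $S_{s-1}S_{s+1}\le S_s^2$ from real-rootedness of $P(t)=\prod_i(t+x_i)$. It then chains them inductively into $S_{s+1}\le S_s^{(s+1)/s}$.

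Your proof also repairs the statement. As printed, it has no sign hypothesis, and you are right that $x\in\Reals_{\ge0}^n$ is needed: for $n=2$ and $x=(-1,-1)$ one gets $S_2^{1/2}=1>-1=S_1$. Newton's inequalities themselves hold for all real $x$. Nonnegativity enters only in the chaining step, exactly as in your write-up.

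\paragraph{Bookkeeping.}
With the order you describe (differentiate $s-1$ times, reverse, then differentiate $n-s-1$ times), the surviving coefficients are $S_{n-s-1},S_{n-s},S_{n-s+1}$. So you obtain Newton's inequality at index $n-s$. Since $s\mapsto n-s$ permutes $\{1,\dots,n-1\}$, this is harmless. To land on index $s$ directly, differentiate $n-s-1$ times first, reverse with respect to degree $s+1$, and then differentiate $s-1$ times. This gives exactly $\tfrac{n!}{2}\bigl(S_{s+1}t^2+2S_st+S_{s-1}\bigr)$, so the constants cancel as you predicted.

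The perturbation is also unnecessary at that step. If reversal lowers the degree, then the leading coefficient of the final quadratic is $S_{s+1}=0$, and the inequality is trivial.

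Finally, your fallback via the roots of $P'$ only transfers Newton's inequalities with $s\le n-2$. The identity $P'(t)=n\sum_{s\le n-1}\binom{n-1}{s}S_s t^{n-1-s}$ never involves $S_n$. The top inequality, $S_{n-2}S_n\le S_{n-1}^2$, still needs the substitution $x_i\mapsto 1/x_i$, which is a reversal in disguise. So ``without ever reversing polynomials'' is slightly overstated.
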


Then, we show how, given a solution to program \eqref{lp:Hyper}, it is possible to obtain an ``integer'' solution. Formally:

\begin{lemma}[Rounding lemma]\label{lem:rounding}
    Given a feasible solution $x\in \Delta_n$ to program \eqref{lp:Hyper}, if $\eta\ge\frac{1}{2}$, there is a polynomial-time algorithm that returns an $y\in \{0,1/k\}^n\cap \Delta_n$ such that $F_\eta(y)\ge F_\eta(x)$.
\end{lemma}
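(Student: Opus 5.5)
The rounding is a pipage-style procedure that moves mass between pairs of fractional coordinates while never decreasing $F_\eta$; it uses only the two box constraints and no earlier result. The structure we exploit is that $A_H$ is multilinear, since each hyperedge consists of $d$ distinct vertices. Call a coordinate \emph{fractional} if $0<x_i<1/k$. Each step will make one fractional coordinate integral, so at most $n$ steps are needed and the procedure runs in polynomial time. At termination $y\in\{0,1/k\}^n$ with $\sum_i y_i=1$, which forces exactly $k$ coordinates to equal $1/k$.

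\textbf{Case of at least two fractional coordinates.} Fix two fractional coordinates $i\neq j$ and consider the line $x(t)=x+t(e_i-e_j)$. Along this line the sum stays $1$, and all other coordinates are fixed. Take $t$ in the maximal interval on which $0\le x_i(t),x_j(t)\le 1/k$; this interval contains $0$ in its interior.

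Because $A_H$ is multilinear, restricting to this line gives $A_H(x(t))=a+bt-ct^2$. Here $c=\sum_{e\ni i,j}\prod_{u\in e\setminus\{i,j\}}x_u\ge 0$ collects the hyperedges containing both $i$ and $j$. The regularizer contributes $\eta\big((x_i+t)^2+(x_j-t)^2\big)$, whose $t^2$ coefficient is $2\eta$. Hence $\phi(t)=F_\eta(x(t))$ is a quadratic in $t$ with leading coefficient $2\eta-c$.

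The key estimate is $c\le 1$. This would follow from
\[
c\le \mathcal{E}_{d-2}(x_{-ij})\le \binom{n-2}{d-2}\Big(\tfrac{\sum_u x_u}{n-2}\Big)^{d-2}\le \tfrac{1}{(d-2)!}\le 1,
\]
where the middle inequality is Maclaurin (\Cref{lem:maclaurin}) and uses $\sum_u x_u\le 1$. More simply, $\mathcal{E}_{d-2}(z)\le(\sum z)^{d-2}/(d-2)!$ directly. This is exactly where $\eta\ge 1/2$ is used: it gives $2\eta-c\ge 0$, so $\phi$ is convex in $t$.

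A convex function on an interval attains its maximum at an endpoint. Moving to the better endpoint therefore gives $\phi\ge\phi(0)$. At that endpoint, one of $x_i,x_j$ hits $0$ or $1/k$, so one fractional coordinate is eliminated.

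\textbf{Case of exactly one fractional coordinate.} This case cannot occur. The sum equals $1$ and every other coordinate is a multiple of $1/k$, so the remaining coordinate $x_i$ is also a multiple of $1/k$. Combined with $x_i\in[0,1/k]$, this makes it integral.

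\textbf{Main obstacle.} The hardest step is bounding the cross coefficient $c$ uniformly so that $\eta\ge 1/2$ guarantees convexity along every pair direction. I would verify this bound via Maclaurin (or the crude symmetric-function bound) as above; the rest is bookkeeping. For $d=2$ the bound is immediate, since then $c\le 1$ because it is a single edge indicator.
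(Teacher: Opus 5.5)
Your proof is correct and follows essentially the same route as the paper: pipage rounding along $e_i-e_j$ for two fractional coordinates, convexity of the restriction $\phi$ from $\eta\ge 1/2$ combined with the Maclaurin bound (at most $1$) on the cross term $\sum_{e\ni i,j}\prod_{u\in e\setminus\{i,j\}}x_u$, and the sum-to-one argument ruling out exactly one fractional coordinate. The only difference is minor: the paper chooses $i,j$ as the argmax/argmin of $\partial_k F_\eta(x)$ over fractional coordinates so that $\phi'(0)\ge 0$ and then moves in the positive direction, whereas you take any fractional pair and move to the better endpoint of the convex restriction, which avoids the gradient-based choice.
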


\begin{proof}
    Given a solution $x\in [0,1/k]^n\cap\Delta_n $, we iteratively reduce the number of fractional vertices without decreasing the value of the function.
    Consider the set of fractional vertices $M(x)=\{u\in V, x_u\in(0,1/k)\}$ and let
    \begin{align*}
        &i\in\arg\max_{k\in M(x)} \partial_{k}F_\eta(x)\quad\text{and}\quad j\in\arg\min_{k\in M(x)} \partial_{k}F_\eta(x),
    \end{align*}
    where $\partial_{k}F_\eta(x)$ denotes the partial derivative with respect to $x_k$.
    
    Consider the function $\phi(t)=F_\eta(x+t(e_i-e_j))$ and  let $\delta=\min(1/k-x_i,x_j)$. Our goal will be to show that $\phi(t)$ is increasing and hence setting $x'=x+\delta(e_i-e_j)$ will not decrease the value.

    To do that, we start computing the first-order and second-order derivatives. In particular:
    \begin{align*}
    &\phi'(t)=\nabla F_\eta(x+t(e_i-e_j))^\top(e_i-e_j)=\partial_iF_\eta(x+t(e_i-e_j))-\partial_jF_\eta(x+t(e_i-e_j))\\
    & \phi''(t)=\partial^2_{ii}F_\eta(x+t(e_i-e_j))+\partial^2_{jj}F_\eta(x+t(e_i-e_j))-2\partial^2_{ij}F_\eta(x+t(e_i-e_j))
    \end{align*}
    
    It is easy to verify that 
    \begin{itemize}
        \item $\partial_{i} F_\eta(x)=2\eta x_i+\sum_{e\in E:i\in e} \prod_{j\in e: j\neq i}x_j$;
        \item $\partial_{ii}F_\eta(x)=2\eta$;
        \item $\partial_{ij}F_\eta(x)=\sum_{e\in E: i,j\in E}\prod_{k\in e: k\neq i,j}x_k$.
    \end{itemize}

    Now, we proceed to upperbound $\partial_{ij}F_\eta(x)$ as follows:

\begin{align}\label{eq:elementary}
    \partial_{ij}F_\eta(x)&=\sum_{\substack{e\in E\\\{i,j\}\subseteq e}}\prod_{k\neq i,j} x_k \le \sum_{\substack{S\subseteq[n]\\|S|=d-2}}\prod_{k\in S} x_k.
\end{align}

We recall the definition of $\mathcal{E}_s(x)=\sum_{S\subset [n]: |S|=s}\prod_{k\in S}x_k$ to be the degree $s$ elementary symmetric polynomial in $n$ variables given in \Cref{lem:maclaurin}. Which implies
\[
 \partial_{ij}F_\eta(x)\le \mathcal{E}_{d-2}(x),
\]
then, by applying \Cref{lem:maclaurin} to \Cref{eq:elementary} we obtain that:
\[
\left({\frac{\mathcal{E}_{d-2}(x)}{\binom{n}{d-2}}}\right)^{\frac{1}{d-2}}\le\frac{\mathcal{E}_1(x)}{\binom{n}{1}}.
\]
Rearranging, we get:
\[
\partial_{ij}F_\eta(x)\le \mathcal{E}_{d-2}(x)\le\left(\frac{\mathcal{E}_1(x)}{\binom{n}{1}}\right)^{d-2}\binom{n}{d-2}=\left(\frac{\sum_{k\in [n]}x_k}{n}\right)^{d-2}\binom{n}{d-2}=\frac{1}{n^{d-2}}\binom{n}{d-2}\le 1.
\]

Now, we can conclude that $\phi''(t)\ge 0$ for all $t \in [0,\delta] $.
Indeed, it holds:
    \begin{align*}
    \phi''(t)&=4\eta-2\sum_{e\in E: i,j\in E}\,\prod_{k\in e: k\neq i,j}(x+t(e_i-e_j))_k\\
    &=4\eta-2\sum_{e\in E, i,j\in E}\,\prod_{k\in e, k\neq i,j}x_k\\
    & \ge 0,
\end{align*}
where the last inequality holds setting $\eta\ge \frac{1}{2}$. 
Moreover, we observe that $\phi'(0)=\partial_iF_\eta(x)-\partial_j F_\eta(x)\ge 0$ by our choice of $i$ and $j$.

Now it is clear that $\phi$ is monotonically increasing in $[0,\delta]$. Indeed, $\phi'(t)\ge 0$ and $\phi''(t)\ge 0$ for each $t\in [0,\delta]$, \emph{i.e.}, it's derivative is increasing in $t$.

Thus, $\phi(\delta)\ge \phi(0)$ which implies that
    \[
    F_\eta(x')\ge F_\eta(x)
    \]
    where $x'=x+\delta(e_i-e_j)$. 
    
As promised, we obtained a new solution $x'$ such that $|M(x')|<|M(x)|$ as either $x'_j=x_j-\delta=0$ or $x'_i=x_i+\delta=1/k$. Repeating the procedure recursively, we eventually obtain $|M(x)|=0$.

Finally, we observe that we can apply the procedure only if $|M(x)|\ge 2$. Hence, we need also to prove that $|M(x)|=0$ or $|M(x)|\ge 2$. This holds since $x\in\Delta_n$. Indeed, if $|M(x)|=1$ then we would have that all but one $x_i$ are either $0$ or $1/k$ (call $c\in\Naturals$ the number of $x_i=1/k$) and one $x_i$ is equal to $\alpha/k$ for some $\alpha\in(0,1)$. Then $\sum_ix_i=c/k+\alpha/k=1$ but this is impossible since $c,k\in\Naturals$ and $\alpha\in (0,1)$.
\end{proof} 

We proceed by showing that for any solution $x$ to program \eqref{lp:Hyper}, after being rounded by applying the algorithm detailed in the proof of \Cref{lem:rounding}, the value of $F_\eta(x)$ is a proxy of the density $\rho(\mathrm{supp}(x))$, where we define $\mathrm{supp}(x)$ as the set of vertices $v$ with a corresponding variable $x_v>0$. 

\begin{lemma}\label{lem:pure}
	For any $x\in\Delta_n$ such that $x_i\in\{0,1/k\}$,
    we have that $|\mathrm{supp}(x)|=k$ and  $F_\eta(x)=\frac{\eta}{k}+\rho(\mathrm{supp}(x))$.
\end{lemma}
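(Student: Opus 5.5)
The proof is a direct computation, and I would carry it out in three steps: the support size, the quadratic regularizer, and the generating polynomial. The normalization $\rho(S)=|E(S)|/k^d$ is set up so that these pieces combine cleanly.

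First, the support size. Every coordinate of $x$ lies in $\{0,1/k\}$, so $\sum_i x_i = |\mathrm{supp}(x)|/k$. Since $x\in\Delta_n$ this sum equals $1$, which forces $|\mathrm{supp}(x)|=k$. Set $S=\mathrm{supp}(x)$.

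Second, the regularizer. We have $\|x\|_2^2=\sum_{i\in S}1/k^2 = k/k^2 = 1/k$, hence $\eta\|x\|_2^2=\eta/k$.

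Third, the generating polynomial $A_H(x)=\sum_{e\in E}\prod_{u\in e}x_u$. A product over a hyperedge $e$ is nonzero only if every $u\in e$ satisfies $x_u=1/k$, that is, only if $e\subseteq S$. In that case, because $|e|=d$ by $d$-uniformity, the product equals $k^{-d}$. Therefore $A_H(x)=|E(S)|/k^d=\rho(S)$.

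Adding the two contributions gives $F_\eta(x)=\eta/k+\rho(\mathrm{supp}(x))$. There is no real obstacle in this argument. The only points that need care are that the simplex constraint is what pins the support size to exactly $k$, and that $d$-uniformity is what makes every surviving monomial equal to $k^{-d}$, matching the normalization in $\rho$.
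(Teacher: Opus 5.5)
Your proof is correct and follows the paper's argument: the same direct computation of $\eta\|x\|_2^2=\eta/k$ and of $A_H(x)=|E(S)|/k^d=\rho(S)$ via $d$-uniformity. You also spell out why the simplex constraint forces $|\mathrm{supp}(x)|=k$, which the paper simply asserts as clear.
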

\begin{proof}
	It is clear that $|\mathrm{supp}(x)|=k$. Moreover, it is easy to see that the two components of $F_\eta(x)$, \emph{i.e.}, $A_H(x)$ and $\eta \lVert x\rVert^2_2$ have value
	\begin{align*}
		 &A_H(x)=\sum_{e\in E}\frac{1}{k^d}\prod_{i\in e}\mathbb{I}(i\in \mathrm{supp}(x))=\sum_{e\in E}\frac{1}{k^d}\mathbb{I}(e\subseteq \mathrm{supp}(x))=\rho(\mathrm{supp}(x))
	\end{align*}
    and
    \begin{align*}
         \eta\|x\|_2^2&=\eta\sum_{i\in[n]}x_i^2= \eta\frac{|\mathrm{supp}(x)|}{k^2}=\frac\eta k,
    \end{align*}
    concluding the proof.
\end{proof}

Hence, we showed that solving program \eqref{lp:Hyper} enables us to solve NDkSH. To apply our framework derived in \Cref{sec:many_subset} (and in particular \Cref{cor:min_subset}) and provide an additive approximation, we are left to bound the range of the objective function.

\begin{lemma}\label{lem:rangeF}
It holds that
\[
\rng (F_\eta) = O(\eta).
\]
\end{lemma}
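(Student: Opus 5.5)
We bound the range of $F_\eta = A_H + \eta\|x\|_2^2$ over $\Delta_n$ by bounding each of the two summands separately and adding the bounds, since $\rng(g+h)\le \rng(g)+\rng(h)$. Both summands are nonnegative on $\Delta_n$, so it suffices to upper bound their maxima over the simplex; the minima are at least $0$.

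For the regularizer, on $\Delta_n$ we have $0\le \|x\|_2^2\le \|x\|_1\cdot\|x\|_\infty\le 1$. Hence $\rng(\eta\|x\|_2^2)\le \eta$.

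For the generating polynomial, the plan is to compare $A_H$ with the elementary symmetric polynomial of degree $d$, as was done for $\partial_{ij}F_\eta$ in the proof of \Cref{lem:rounding}. Since $x\ge 0$ and every hyperedge is a $d$-subset of $V$, we have $0\le A_H(x)\le \mathcal{E}_d(x)$. By Maclaurin's inequality (\Cref{lem:maclaurin}),
\[
\mathcal{E}_d(x)\le \binom{n}{d}\left(\frac{\sum_i x_i}{n}\right)^d=\binom{n}{d}\frac{1}{n^d}\le \frac{1}{d!}\le 1.
\]
As a sanity check, this bound also follows directly from the multinomial theorem: $(\sum_i x_i)^d\ge d!\,\mathcal{E}_d(x)$. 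Thus $\rng(A_H)\le 1$.

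Combining the two bounds gives $\rng(F_\eta)\le 1+\eta$. Since the rounding lemma requires $\eta\ge 1/2$, we have $1\le 2\eta$, so $\rng(F_\eta)\le 3\eta=O(\eta)$.

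No step here is a real obstacle. The only point needing care is that the statement is read in the regime $\eta=\Omega(1)$, so that the constant contribution of $A_H$ can be absorbed into $O(\eta)$. If one wanted the bound for all $\eta>0$, the correct statement would be $O(1+\eta)$, which is what the argument actually shows.
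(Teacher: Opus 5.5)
Your proof is correct and is essentially the paper's argument: bound the regularizer's contribution by $\eta$, dominate $A_H$ by $\mathcal{E}_d$, and apply Maclaurin's inequality to get $A_H\le \binom{n}{d}n^{-d}\le 1$, so $\rng(F_\eta)\le 1+\eta$. Your explicit remark that this becomes $O(\eta)$ only because $\eta\ge 1/2$, and your multinomial-theorem cross-check, make precise a step the paper leaves implicit.
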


\begin{proof}
The range of $\|x\|^2_2$ is at most $1$, hence in the following we focus on the range of $A_H(x)$.

For the minimum value, observe that $A_H(e_i)=0$ is always positive.
For the maximum, we can use Maclaurin's inequality. First observe that $A_H(x)$ is always smaller than the symmetric degree $d$ polynomial, since to obtain such a polynomial we are adding hyperedges, \ie,
\[
A_H(x)\le \sum_{S\subseteq V, |S|=d}\prod_{i\in S}x_i=\mathcal{E}_d(x).
\]
Then \Cref{lem:maclaurin} gives that $\left({\frac{\mathcal{E}_d(x)}{\binom{n}{d}}}\right)^{1/d}\le \frac{\mathcal{E}_1(x)}{\binom{n}{1}}$, or equivalently:
\[
A_H(x)\le \mathcal{E}_d(x)\le\binom{n}{d}\left(\frac{\mathcal{E}_1(x)}{\binom{n}{1}}\right)^d=\binom{n}{d}\left(\frac{\sum_{i\in [n]}x_i}{n}\right)^d\le\binom{n}{d}\frac{1}{n^d}\le 1.
\]

Thus the range of $F_\eta(x)$ is controlled by $1+\eta$.
\end{proof}

Now we can prove the final theorem regarding constant additive approximations to NDkSH.

\begin{theorem}
    Let $H=(V,E)$ be a d-uniform hypergraph with $n$ vertices. Then, an $\varepsilon$-additive approximation of NDkSH can be found in time
    \[n^{\tilde O(\log(n)\rho(d)/\varepsilon^2)}.\]
\end{theorem}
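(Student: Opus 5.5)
The plan is to combine the three ingredients just established: the relaxation in program \eqref{lp:Hyper}, the rounding procedure of \Cref{lem:rounding} together with \Cref{lem:pure}, and the range bound of \Cref{lem:rangeF}. We then feed the program into \Cref{cor:min_subset}. Fix $\eta=\frac12$, so that \Cref{lem:rounding} applies and $\rng(F_\eta)=O(1)$. The feasible set $\cH=\Delta_n\cap[0,1/k]^n$ is a polytope described by $k'=O(n)$ inequalities. Applying \Cref{cor:min_subset} to $-F_\eta$, a degree-$d$ polynomial after homogenization on the simplex, with accuracy $\varepsilon$ produces in time $\poly(n)\cdot n^{O(\rho(d)\log(n)/\varepsilon^2)}$ a point $x\in\cH$ with
\[
F_\eta(x)\ge \max_{z\in\cH}F_\eta(z)-\varepsilon .
\]
The degree of $F_\eta$ is $\max(d,2)$. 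This is harmless because $\rho$ is a generic degree-dependent function.

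Next, I round $x$ with the algorithm of \Cref{lem:rounding}. This yields $y\in\{0,1/k\}^n\cap\Delta_n$ with $F_\eta(y)\ge F_\eta(x)$. Let $S=\mathrm{supp}(y)$. By \Cref{lem:pure}, $|S|=k$ and $F_\eta(y)=\frac{\eta}{k}+\rho(S)$.

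For the comparison with the optimum, take any optimal $S^*$ with $|S^*|=k$. The point $z^*=\frac1k\mathbf{1}_{S^*}$ lies in $\cH$, and \Cref{lem:pure} gives $F_\eta(z^*)=\frac{\eta}{k}+\rho(S^*)$. Chaining the inequalities,
\[
\frac{\eta}{k}+\rho(S)=F_\eta(y)\ge F_\eta(x)\ge F_\eta(z^*)-\varepsilon=\frac{\eta}{k}+\rho(S^*)-\varepsilon .
\]
Therefore $\rho(S)\ge\rho(S^*)-\varepsilon$. The running time is dominated by the cover enumeration, since rounding takes at most $n$ polynomial-time steps. This gives $n^{O(\rho(d)\log n/\varepsilon^2)}$. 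Any extra logarithmic factors from the number of inequalities or the homogenization are absorbed into the $\tilde O$.

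The main obstacle is checking that \Cref{cor:min_subset} really applies with constant range.
\begin{itemize}
\item The homogenization multiplies lower-degree terms, such as $\eta\|x\|_2^2$ when $d>2$, by powers of $\sum_i x_i$. This does not change values on $\Delta_n$.
\item The range must be bounded over the whole simplex, not just over $\cH$. \Cref{lem:rangeF} provides exactly this, as $\rng(F_\eta)\le 1+\eta=O(1)$.
\item The rounding step needs $d\ge2$ for the second-derivative bound. For $d=2$ the cross term is at most $1$ trivially, and the case $d=1$ is immediate.
\end{itemize}
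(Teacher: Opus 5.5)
Your proposal is correct and follows the paper's proof. You fix $\eta\ge\tfrac12$, get constant range on all of $\Delta_n$ from \Cref{lem:rangeF}, apply \Cref{cor:min_subset} over $\Delta_n\cap[0,1/k]^n$, round with \Cref{lem:rounding}, and read off the density via \Cref{lem:pure}; your explicit comparison against $z^*=\frac1k\mathbf{1}_{S^*}$ and your remarks on homogenization and the cases $d\le 2$ just spell out details the paper leaves implicit.
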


\begin{proof}
From \Cref{lem:rangeF} we know that $\rng (F_\eta)=O(\eta)$ and it suffices to take $\eta\ge1/2$ to make \Cref{lem:rounding} to work. Thus, by \Cref{cor:min_subset} we can find an $\varepsilon$-approximate solution to the program above in time $n^{\tilde O(\log(n)\rho(d)/\varepsilon^2)}$.
By \Cref{lem:rounding}, we can build a solution in which $x_v\in \{0,1/k\}$ for each $v$ with at least the same value. By \Cref{lem:pure}, $\rho(\mathrm{supp}(x))$ is an $\varepsilon$ additive approximation of NDkSH.
\end{proof}

\section{Open Problems}

Our framework, while being very general, leaves open several interesting directions.

First, while we demonstrated that our approach allows for the optimization of arbitrary polynomials over subsets of the simplex, this result does not readily extend to related problems such as VIs. Specifically, our reduction from a VI to polynomial optimization relies on constructing a dedicated polynomial for each vertex, an approach that becomes intractable for polyhedral sets represented implicitly by a polynomial number of inequalities. It is unclear whether it is possible to design algorithms that run in quasi-polynomial time with respect to the number of inequalities. This would, for instance, yield positive results for constrained min-max optimization over the simplex~\citep{bernasconi2024role,anagnostides2025complexity,anagnostides2026computational,bernasconi2026complexity}.

Moreover, while our $\QPTAS$es are tight, it would be interesting to identify specific settings where more efficient algorithms, e.g., $\PTAS$es, exist. Partial results in this direction were previously obtained by \citet{barman2018approximating} for quadratic polynomials, with applications to Nash equilibria and the $k$-densest subgraph problem. Investigating whether similar results can be achieved for restricted classes of higher-degree polynomials is a promising direction, requiring us to identify the appropriate parameters (of the polynomials) that yield polynomial-sized covers.

We note that the best-known rates for many of our applications are $n^{O(\log(n)/\varepsilon^2)}$, and we were able to match the dependence on $\varepsilon$ in our results. However, the lower bounds are only of $n^{\widetilde\Omega(\log(n)/\varepsilon)}$ (see, for instance, \citet[open question~$2$]{aaronson2014multiple}, but the same gap exists for other problems such as optimal Nash and CSPs). Understanding this gap better is an interesting problem. With our techniques, the $\varepsilon^2$ comes from the concentration inequalities used in \Cref{th:probabilisticPTAS}, and it is not clear how to improve it by using a better one.

Lastly, in this paper, we placed particular emphasis on the unifying character and generalization abilities of our framework. However, this led to an exponential dependency of the sample size on the degree of the polynomial. In many applications of interest, however, we only need to consider multi-linear polynomials (such as Free Games and CSPs), and in that case, we believe that a better analysis (in particular of the bias term) could improve our algorithms to a polynomial dependence on the degree.

\appendix
\section*{Appendix}

\section{Proof of \Cref{th:probabilisticPTAS}}\label{app:prob}

Our proof builds on the work of \citet{de2015alternative}.
In a more modern take on the problem, \citet{de2015alternative} exploited the connection between Bernstein approximation $B_N[f](x)$ and minimization of polynomials over the simplex. More specifically, the order $N$ of the Bernstein approximation coincides with the number of samples $\{0,1\}^n\ni X_k \!\sim\! \mathrm{Categorical}(x)$ we draw from the categorical distribution with parameter $x\in\Delta_n$ and the Bernstein approximation $B_N[f](x)$ of $f$ at $x$ is the expected value of $f$ over the sampling procedure.
Namely, $B_N[f](x)=\mathbb{E}[f(Z)]$, where $Z=\frac1N\sum_{k=1}^N X_k$, and $X_1,\ldots, X_N$ are i.i.d.~samples of a categorical distribution with parameter $x$. %

Thus, the classic guarantees of uniform approximation for the Bernstein interpolant $B_N$ (\emph{e.g.}, \citep[Theorem~2.11]{de2006ptas}, \citep[Theorem~8]{de2015alternative}) can be interpreted as bounding the bias of the estimator $f(Z)$ under the sampling.
\begin{theorem}[{\cite[Theorem~2.11]{de2006ptas}, \cite[Theorem~8]{de2015alternative}}]\label{th:bias}
For any $x\in\Delta_n$, let $Z=\sum_{k=1}^N X_k/N$, where $X_k\stackrel{i.i.d.}{\sim} \mathrm{Categorical}(x)$. Then for any $f\in \Reals_{d}[x_1,\ldots,x_n]$ it holds:
    \[
    |\mathbb{E}[f(Z)]-f(x)| \le \frac{1}{N}\rho(d)\cdot\rng(f).
    \]
\end{theorem}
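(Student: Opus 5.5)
The plan is to work coefficient-wise in the Bernstein basis and compute $\mathbb{E}[\cB^d_\beta(Z)]$ exactly using the multinomial moment structure of $NZ$. First, a normalization. Since $Z\in\Delta_n$ almost surely and $x\in\Delta_n$, adding a constant $c\,(\sum_i x_i)^d$ to $f$ shifts both $\mathbb{E}[f(Z)]$ and $f(x)$ by exactly $c$, and leaves $\rng(f)$ unchanged. We may therefore assume $\min_{\Delta_n} f=0$, so that $|f(x)|\le\rng(f)$ on $\Delta_n$. Write $f=\sum_{\beta\in\Naturals^n_d}c_\beta\cB^d_\beta$. The values of $f$ at the vertices are themselves Bernstein coefficients ($c_{de_i}=f(e_i)$), so \Cref{lem:boundBerncoeff} gives $\max_\beta|c_\beta|\le\rho(d)\cdot\rng(f)$ after the shift.

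Next we compute moments. The vector $NZ$ is multinomial with parameters $(N,x)$, so its factorial moments are
\[
\mathbb{E}\Big[\prod_i (NZ_i)_{(\gamma_i)}\Big]=N_{(|\gamma|)}\,x^\gamma,
\]
where $a_{(k)}=a(a-1)\cdots(a-k+1)$ denotes the falling factorial. We expand each ordinary power in falling factorials via Stirling numbers of the second kind, $a^{b}=\sum_{k\le b}S(b,k)\,a_{(k)}$. This yields
\[
\mathbb{E}[\cB^d_\beta(Z)]=\frac{N_{(d)}}{N^d}\,\cB^d_\beta(x)+r_\beta(x),
\]
where the term with $\gamma=\beta$ (for which $S(\beta_i,\beta_i)=1$) has been isolated. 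Each remainder
\[
r_\beta(x)=\frac{d!}{\beta!\,N^d}\sum_{\gamma\le\beta,\,\gamma\ne\beta}\prod_i S(\beta_i,\gamma_i)\,N_{(|\gamma|)}\,x^\gamma
\]
is nonnegative on $\Delta_n$, because Stirling numbers, falling factorials of $N\ge d$, and $x^\gamma$ are all nonnegative.

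Then we identify the total mass of the remainders. Apply the same expansion to the constant polynomial $(\sum_i x_i)^d=\sum_\beta\cB^d_\beta$. Since $\sum_i Z_i=1$, this gives
\[
\sum_\beta r_\beta(x)=1-\frac{N_{(d)}}{N^d}\le\frac{d^2}{2N}.
\]
The inequality follows from $\prod_{j<d}(1-j/N)\ge 1-\sum_{j<d}j/N$. We then combine the pieces:
\[
\mathbb{E}[f(Z)]-f(x)=\Big(\frac{N_{(d)}}{N^d}-1\Big)f(x)+\sum_\beta c_\beta\, r_\beta(x).
\]
The first term is at most $\frac{d^2}{2N}\rng(f)$ in absolute value. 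The second is at most $\max_\beta|c_\beta|\cdot\sum_\beta r_\beta(x)\le\rho(d)\,\rng(f)\,\frac{d^2}{2N}$. Absorbing $d^2$ into $\rho(d)$ gives the claim.

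The main obstacle is the middle step: getting an exact decomposition whose remainder is \emph{nonnegative} and whose total mass is controlled. A naive bound of each lower-degree term separately would lose factors such as $|\Naturals^n_d|$ or unbounded coefficients. The trick of evaluating the same identity on $(\sum_i x_i)^d$ is what avoids any dependence on $n$, so the Stirling-number expansion and the factorial-moment formula for the multinomial need to be stated carefully.
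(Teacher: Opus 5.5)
Your proof is correct. The paper gives no proof of its own here; it imports the statement from de Klerk, Laurent, Parrilo (Theorem~2.11) and de Klerk, Laurent, Sun (Theorem~8). Your argument is, in substance, the standard one behind those results:
\begin{itemize}
\item expand $\mathbb{E}[Z^\beta]$ through the multinomial factorial moments and Stirling numbers of the second kind;
\item isolate the leading term $\frac{N_{(d)}}{N^d}\cB^d_\beta(x)$;
\item control the nonnegative remainders $r_\beta$ through their total mass $1-N_{(d)}/N^d\le d^2/(2N)$, which you get by applying the same identity to $(\sum_i x_i)^d$.
\end{itemize}
The individual steps all check out, namely the nonnegativity of $r_\beta$, the identity $c_{de_i}=f(e_i)$, and the use of \Cref{lem:boundBerncoeff} to bound $\max_\beta|c_\beta|$. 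One small streamlining: the normalization shift is not needed. Since $\sum_\beta r_\beta(x)=1-N_{(d)}/N^d$, you can write $\mathbb{E}[f(Z)]-f(x)=\sum_\beta (c_\beta-f(x))\,r_\beta(x)$. On $\Delta_n$, $f(x)$ is a convex combination of the $c_\beta$, so this gives directly $|\mathbb{E}[f(Z)]-f(x)|\le (1-N_{(d)}/N^d)\,(\max_\beta c_\beta-\min_\beta c_\beta)$.
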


With this theorem, \citet{de2015alternative} directly derive an approximate minimizer of $f$ over $\Delta_n$. Indeed, we can observe that in \Cref{eq:Bernstein}, for each $x$, the Bernstein approximation $B_N[f](x)$ is a weighted sum of coefficients $\{f(\frac{\beta}{N})\}_{\beta\in\Naturals_N^n}$ with weights $\{\frac{N!}{\beta!}x^\beta\}_{\beta\in\Naturals_N^n}$.%
\footnote{$\{\frac{N!}{\beta!}x^\beta\}_{\beta\in\Naturals_N^n}$ are weights thanks to the multinomial theorem, as $1=(\sum_{i\in[n]}x_i)^N=\sum_{\beta\in \Naturals_{n}^N}\frac{N!}{\beta!}x^\beta$.}
Indeed, for any $x\in\Delta_n$:
\[
\min_{\beta\in\Naturals_N^n}f\left(\frac{\beta}{N}\right)-\min_{x\in\Delta_n}f(x)\le \min_{x\in\Delta_n}B_N[f](x)-\min_{x\in\Delta_n}f(x)\le \frac{1}{N}\rho(d)\cdot\rng(f),
\]
and enumerating over all the possible $\beta\in\Naturals_N^n$ takes at most $O(n^N)$-time, resulting in a \PTAS by choosing $N=\Theta(1/\varepsilon)$.
We build on this approach, combining it with a concentration argument,%
that will allow us to extract an approximate cover of $f$ rather than only an approximate minimum.

First, we prove the following technical lemma that states that any polynomial on the simplex is Lipschitz in $\ell_1$ norm on the simplex, with a Lipschitz constant that depends only on the degree and the range (and hence that is independent of the dimension).
Proving that the range controls the Lipschitzness in $\ell_1$ requires special care and non-trivial properties of the Bernstein representation. Indeed, the proof does not simply rely on showing that $\|\nabla f\|_\infty$ is only a function of the degree. For instance, take $f(x)=1-\left(\sum_{i\in[n]}x_i\right)^M$. This polynomial has range zero on $\Delta_n$, but $\|\nabla f\|_\infty=\Theta(M)$, and thus can be arbitrary large. The crucial point is to consider only directional derivatives on the tangent space of the simplex, as claimed in the following lemma which utilizes the relationship between the coefficients of the Bernstein basis and the range of the polynomial (\Cref{lem:boundBerncoeff}).%

\begin{restatable}{lemma}{tangent}\label{lem:tangent}
    Let $f\in\Reals_d[x_1,\ldots,x_n]$. For every $x\in\Delta_n$ and every $w\in\Reals^n$ satisfying $\sum_{i\in[n]}w_i=0$, we have
    \[
    |\nabla f(x)^\top w|\le \rho(d)\cdot\rng(f) \|w\|_1.
    \]
\end{restatable}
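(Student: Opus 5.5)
We work in the Bernstein basis, where the derivative formula and the coefficient bound from \Cref{lem:boundBerncoeff} are both already available.

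Write $f(x)=\sum_{\beta\in\Naturals_d^n}c_\beta\cB^d_\beta(x)$. As computed in the proof of \Cref{lem:decomposition},
\[
\partial_{x_i}f(x)=d\sum_{\alpha\in\Naturals^n_{d-1}}c_{\alpha+e_i}\cB^{d-1}_\alpha(x).
\]
Pairing this with $w$ gives
\[
\nabla f(x)^\top w=d\sum_{\alpha\in\Naturals^n_{d-1}}\cB^{d-1}_\alpha(x)\sum_{i\in[n]}w_i c_{\alpha+e_i}.
\]

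The key observation concerns the inner sum. Since $\sum_i w_i=0$, we may subtract any constant from the coefficients without changing it:
\[
\sum_i w_i c_{\alpha+e_i}=\sum_i w_i\,(c_{\alpha+e_i}-\kappa).
\]
Choose $\kappa=(\max_\beta c_\beta+\min_\beta c_\beta)/2$. Then $|c_{\alpha+e_i}-\kappa|\le \tfrac12(\max c-\min c)$, so by H\"older's inequality
\[
\Bigl|\sum_i w_i c_{\alpha+e_i}\Bigr|\le \tfrac12(\max c-\min c)\,\|w\|_1\le \tfrac12\rho(d)\rng(f)\|w\|_1,
\]
where the last step uses \Cref{lem:boundBerncoeff}.

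For $x\in\Delta_n$ the weights $\cB^{d-1}_\alpha(x)$ are nonnegative and sum to $1$ by the multinomial theorem. Averaging the bound above over these weights therefore gives
\[
|\nabla f(x)^\top w|\le \tfrac d2\rho(d)\rng(f)\|w\|_1,
\]
and the factor $d/2$ is absorbed into $\rho(d)$.

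The only delicate point is the use of the tangent condition $\sum_i w_i=0$. This is exactly what allows the coefficients to be recentred, so that their spread (the range of the Bernstein coefficients) appears instead of their magnitude, which can be unbounded as in the example $1-(\sum_i x_i)^M$. The bound also needs the homogeneity convention, so that $f$ indeed has a degree-$d$ Bernstein representation; this is assumed without loss of generality throughout the paper.
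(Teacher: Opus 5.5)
Your proof is correct and is essentially the paper's argument: expand $\partial_{x_i}f$ in the degree-$(d-1)$ Bernstein basis, bound each inner sum $\sum_i w_i c_{\alpha+e_i}$ by $\tfrac12\|w\|_1(\max c-\min c)\le\tfrac12\rho(d)\rng(f)\|w\|_1$ via \Cref{lem:boundBerncoeff}, and average against the Bernstein weights, which are nonnegative and sum to $1$ on $\Delta_n$, to get the factor $d/2$. The only cosmetic difference is that you get the inner bound by recentring the coefficients at their midpoint and applying H\"older, whereas the paper splits $w$ into its positive and negative parts and uses $\sum_i w_i=0$ to give each part mass $\|w\|_1/2$; the two are equivalent.
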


\begin{proof}
    We write the polynomial $f=\sum_{\beta\in\Naturals^n_d} c_\beta \cB^d_{\beta}(x)$ according to its Bernstein basis and take the derivative with respect to $x_i$, we obtain
    \begin{align*}
    \partial_{x_i} f(x)&=\sum_{\beta\in\Naturals^n_d:\beta_i\ge 1}c_\beta\beta_i \frac{d!}{\beta!}x^{\beta-e_i}\\
    &=\sum_{\beta\in\Naturals^n_{d-1}} c_{\beta+e_i}(\beta+e_i)_i\frac{d!}{(\beta+e_i)!}x^\beta\\
    &=\sum_{\beta\in\Naturals^n_{d-1}} c_{\beta+e_i}\frac{d!}{\beta!}x^\beta.
    \end{align*}
    Now, we can compute 
    \begin{align}\label{eq:tmp7}
    \nabla f(x)^\top w=\sum_{\beta\in\Naturals^n_{d-1}} \left(\sum_{i\in[n]}w_ic_{\beta+e_i}\right)\frac{d!}{\beta!}x^\beta.
    \end{align}
    For each $\beta\in \Naturals^n_{d-1}$ we consider
    \begin{align*}
        \sum_{i\in[n]}w_ic_{\beta+e_i}&=\sum_{i\in[n]:w_i\ge 0}w_ic_{\beta+e_i}+\sum_{i\in[n]:w_i< 0}w_ic_{\beta+e_i}\\
        &\le \left(\sum_{i\in[n]:w_i\ge 0}w_i\right)\max_{i\in[n]} c_{\beta+e_i}+\left(\sum_{i\in[n]:w_i< 0}w_i\right)\min_{i\in[n]}c_{\beta+e_i}.
    \end{align*}
    Since $\sum_{i\in[n]}w_i=0$, we have that $\|w\|_1=\sum_{i\in[n]:w_i\ge 0}w_i-\sum_{i\in[n]:w_i<0}w_i=2\sum_{i\in[n]:w_i\ge 0}w_i=-2\sum_{i\in[n]:w_i< 0}w_i$ and thus, continuing from above, we have:
    \begin{align*}
        \sum_{i\in[n]}w_ic_{\beta+e_i}&\le\frac{\|w\|_1}{2}\left(\max_{i\in[n]} c_{\beta+e_i}-\min_{i\in[n]}c_{\beta+e_i}\right)
        \le \frac{\|w\|_1}{2}\rho(d)\cdot\rng(f),
    \end{align*}
    where the last inequality comes from \Cref{lem:boundBerncoeff}.
    Plugging it back to \Cref{eq:tmp7}, and using that $\sum_{\beta\in\Naturals_{d-1}^n}\frac{(d-1)!}{\beta!}x^\beta=1$ we get:
    \begin{align*}
       \nabla f(x)^\top w\le d\frac{\|w\|_1}{2}\rho(d)\cdot\rng(f), 
    \end{align*}
    which concludes the statement, since all the same calculations can be carried on $-w$.
\end{proof}

Then, the $\ell_1$-Lipschitzness is a simple corollary:

\begin{restatable}{lemma}{lemLoneLip}\label{lem:L1Lip}
    For any $x,y\in\Delta_n$ and any $f\in\Reals_d[x_1,\ldots,x_n]$ it holds 
    \[
    |f(x)-f(y)|\le\rho(d)\cdot\rng(f)\|x-y\|_1.
    \]
\end{restatable}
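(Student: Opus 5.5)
The plan is to integrate the directional derivative along the segment from $y$ to $x$ and invoke \Cref{lem:tangent}. Fix $x,y\in\Delta_n$ and set $w\coloneqq x-y$. Since both points lie in the simplex, $\sum_{i\in[n]} w_i=\sum_i x_i-\sum_i y_i=1-1=0$, so $w$ lies in the tangent space of $\Delta_n$. This is exactly the hypothesis \Cref{lem:tangent} requires of the direction.

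Consider the univariate function $\phi(t)\coloneqq f(y+tw)$ for $t\in[0,1]$. By convexity of $\Delta_n$, every point $z_t\coloneqq y+tw=(1-t)y+tx$ belongs to $\Delta_n$. Since $f$ is a polynomial, $\phi$ is continuously differentiable with $\phi'(t)=\nabla f(z_t)^\top w$. The fundamental theorem of calculus then gives
\[
f(x)-f(y)=\phi(1)-\phi(0)=\int_0^1 \nabla f(z_t)^\top w\,dt .
\]

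For each $t$, we apply \Cref{lem:tangent} at the point $z_t\in\Delta_n$ with direction $w$, which has zero sum. This yields $|\nabla f(z_t)^\top w|\le\rho(d)\cdot\rng(f)\|w\|_1$ uniformly in $t$. Bounding the integrand by this constant gives $|f(x)-f(y)|\le\rho(d)\cdot\rng(f)\|x-y\|_1$, as claimed.

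There is essentially no obstacle, since all the difficulty is already handled inside \Cref{lem:tangent}. The only point needing care is that the lemma is applied at points of the simplex: the whole segment must stay in $\Delta_n$, which holds by convexity. The direction must also have zero sum, which holds because both endpoints have unit $\ell_1$ mass. This is precisely why the statement holds only in $\ell_1$ and only for directions tangent to $\Delta_n$. By contrast, a naive bound through $\|\nabla f\|_\infty$ would fail, as the example $1-(\sum_i x_i)^M$ shows.
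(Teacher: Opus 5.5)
Your proof is correct and matches the paper's. Both take $w=x-y$, which has zero sum because $x,y\in\Delta_n$, keep the segment inside $\Delta_n$ by convexity, and invoke \Cref{lem:tangent}; the only cosmetic difference is that the paper uses the mean value theorem at a single intermediate point $z$, whereas you bound the integrand of the fundamental-theorem-of-calculus representation uniformly in $t$.
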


\begin{proof}
To prove the statement, we can simply apply \Cref{lem:tangent}. Indeed, for any $x,y\in\Delta_n$, observe that $x-y\in\{v\in\Reals^n:\sum_{i\in[n]}v_i=0\}$ and thus, by the mean value theorem
\begin{align*}
    |f(x)-f(y)|&= |\nabla f(z)^\top (x-y)|\tag{For some $z$ in the convex combination of $x$ and $y$}\\
    &\le \rho(d)\cdot\rng(f)\|x-y\|_1\tag{\Cref{lem:tangent}},
\end{align*}
concluding the proof.
\end{proof}

Combining this lemma with simple concentration arguments, and accounting for the bias given by \Cref{th:bias}, we can easily prove that we can uniformly approximate \emph{any} $x\in \Delta_n$, i.e., generate a cover for $f$ over $\Delta_n$.

\theoremEasy*

\begin{proof}
    We will prove the concentration using McDiarmid's inequality. Indeed, define $Z=\frac{1}{N}\sum_{k\in[N]} X_k$ and $Z'=\frac{1}{N}\sum_{k\in[N], k\neq k'}X_k+\frac{1}{N}X_{k'}$ for a generic $k'\in [N]$. Then, if $|f(Z)-f(Z')|\le G$, McDiarmid's inequality gives that
    \begin{align*}
        \mathbb{P}(|\mathbb{E}[f(Z)]-f(Z)|\ge \varepsilon)\le 2\exp\left(-\frac{2\varepsilon^2}{G^2N}\right).
    \end{align*}
    Accounting for the bias of $\mathbb{E}[f(Z)]$, we obtain
    \begin{align}
        \mathbb{P}(|f(x)-f(Z)|\ge \varepsilon)&\le\mathbb{P}(|f(x)-\mathbb{E}[f(Z)]|+|f(Z)-\mathbb{E}[f(Z)]|\ge \varepsilon) \nonumber\\
        &\le \mathbb{P}\left(|f(Z)-\mathbb{E}[f(Z)]|\ge \varepsilon-\frac{1}{N}h(d)\rng(f)\right)\tag{\Cref{th:bias}} \nonumber\\
        &\le  \mathbb{P}\left(|f(Z)-\mathbb{E}[f(Z)]|\ge \varepsilon/2\right)\tag{$N\ge \frac{2}{\varepsilon}\rho(d)\rng(f)$} \nonumber\\
        &\le 2\exp\left(\frac{-\varepsilon^2}{2G^2N}\right).\label{eq:mc}
    \end{align}
    Now, we are left to bound $G$. In general, we do not expect the gradient of polynomials to be bounded on the simplex; however, we can perform a careful analysis that only considers tangential movements. 
   Observe that $Z-Z'\le\frac{1}{N}(e_i-e_j)$ for some $i,j\in[n]$ and thus $\|Z-Z'\|_1=\frac{2}{N}$.
    Hence, by \Cref{lem:L1Lip}, we can conclude that:
    \begin{align}
    |f(Z)-f(Z')|&\le G:=\frac{1}{N}\rho(d)\cdot\rng(f). \label{eq:lip}
    \end{align}
     Finally, combining \Cref{eq:mc} and \Cref{eq:lip}, we get  
     \[
     \mathbb{P}(|f(x)-f(Z)|\ge \varepsilon)\le 2\exp\left(-\frac{N\varepsilon^2 }{\rho(d)\rng(f)^2}\right),
     \]
     concluding the proof.
\end{proof}

\section{Non-Lipschitzness of Polynomials Over the Simplex} \label{app:Lip}

\begin{proposition}\label{prop:Lip}
   For any $p \ge 1$, there exists a polynomial $f\in\Reals_{1}[x]$ of degree $d=1$ and range $1$, such that $\sup_{x\neq y\in\Delta_{n}}\frac{|f(x)-f(y)|}{\|x-y\|_p}\ge \Omega(n^{(p-1)/p})$.
   In particular, this shows that bounded-range low-degree polynomials cannot be $O(1)$-Lipschitz continuous with respect to any norm other than the $\ell_1$ norm.
\end{proposition}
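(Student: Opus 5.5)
We argue by an explicit linear example: the coordinate functional $f(x)=\sum_{i\le n/2} x_i$, evaluated at two well-chosen points of $\Delta_n$. Assume $n$ is even and set $S=\{1,\dots,n/2\}$ and $T=\{n/2+1,\dots,n\}$. We use $f(x)=\sum_{i\in S}x_i$, a homogeneous degree-$1$ polynomial.

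First, the range. On $\Delta_n$ the values of $f$ lie in $[0,1]$. The value $1$ is attained at $e_1$ and the value $0$ at $e_n$, so $\rng(f)=1$.

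Next, the test points. We take the two uniform distributions
\[
x=\tfrac{2}{n}\mathbbm{1}_S, \qquad y=\tfrac{2}{n}\mathbbm{1}_T .
\]
Both lie in $\Delta_n$, and $f(x)-f(y)=1-0=1$. The difference $x-y$ has all $n$ coordinates of absolute value $2/n$. Hence
\[
\|x-y\|_p=\left(n\cdot (2/n)^p\right)^{1/p}=2\,n^{1/p-1}.
\]
The ratio is therefore
\[
\frac{|f(x)-f(y)|}{\|x-y\|_p}=\tfrac12\, n^{(p-1)/p}=\Omega\!\left(n^{(p-1)/p}\right).
\]
This gives the stated bound. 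For odd $n$, use $\lfloor n/2\rfloor$-sized blocks and leave one coordinate at zero; this only changes constants. For $p=\infty$ the same points give ratio $n/2$, consistent with the exponent $(p-1)/p\to 1$.

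For the ``in particular'' part, take any norm $\|\cdot\|$ on $\Reals^n$ other than $\ell_1$, understood as the family $\ell_p$ with $p>1$. Here $(p-1)/p>0$, so the Lipschitz constant grows polynomially in $n$ and cannot be $O(1)$. By contrast, \Cref{lem:L1Lip} shows that for $p=1$ the constant is bounded by $\rho(d)\rng(f)$.

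The main point to get right is the choice of spread-out points. Vertices such as $e_1,e_n$ give only $\|x-y\|_p=2^{1/p}$, hence no growth. The gain comes from the $\ell_p$ norm of a uniform vector with small entries being much smaller than its $\ell_1$ norm. The remaining steps are direct computations.
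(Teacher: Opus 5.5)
Your proof is correct and is essentially the paper's argument: the same functional $f(x)=\sum_{i\le n/2}x_i$, the same pair of uniform distributions on the two halves, and the same computation $\|x-y\|_p=2n^{(1-p)/p}$, giving ratio $\tfrac12 n^{(p-1)/p}$. Your extra remarks (verifying $\rng(f)=1$, handling odd $n$, the case $p=\infty$, and contrasting with \Cref{lem:L1Lip}) are accurate additions that the paper leaves implicit.
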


\begin{proof}
    Consider the linear function
    \[
        f(x) = \sum_{i \le n/2} x_i,
    \]
    and the two distributions $x$ and $y$ that respectively uniformly place mass on the first or second half of the dimensions,
    \[
        x_i = \begin{cases}\frac{2}{n} & \text{if } i \le n/2,\\
        0 & \text{otherwise}
        \end{cases}
        ,\hspace{2cm} 
        y_i = \begin{cases}0 & \text{if } i \le n/2,\\
        \frac{2}{n} & \text{otherwise}.
        \end{cases}
    \]
    Then,
    \[
        |f(x) - f(y)| = 1,
        \qquad
        \|x-y\|_p = \left( n \left(\frac{2}{n}\right)^p\right)^{1/p} = 2n^{(1-p)/p}.
    \]
    So, the Lipschitz constant is at least
    $
        \Omega(n^{(p-1)/p})
    $ as claimed.
\end{proof}

\printbibliography

\end{document}